\documentclass[preprint]{sigplanconf}

\usepackage{color}
\usepackage{amsmath}
\usepackage{amssymb}
\usepackage{graphicx}
\usepackage{subfigure}
\usepackage{algorithm2e}
\usepackage{dsfont}
\usepackage{amsfonts}
\usepackage{amsthm}

\def\>{\ensuremath{\rangle}}
\def\<{\ensuremath{\langle}}
\def\ra{\ensuremath{\rightarrow}}

\def\la{\ensuremath{\longleftarrow}}
\newcommand {\supp } {{\rm supp}}
\newcommand {\spans } {{\rm span}}
\newcommand {\E } {{\mathcal{E}}}
\newcommand {\F } {{\mathcal{F}}}
\newcommand {\R } {{\mathcal{R}}}
\newcommand{\hs}{\mathcal{H}}
\newcommand {\M} {{\mathcal{M}}}
\newcommand {\s} {{\mathfrak{S}}}
\newcommand {\tr} {{\mathrm{tr}}}
\newcommand {\rd} {{\mathrm{rd}}}
\newcommand{\mem}{{\mathbf{M}}}

\newtheorem{thm}{Theorem}[section]

\newtheorem{lem}{Lemma}[section]
\newtheorem{defi}{Definition}[section]
\newtheorem{prop}{Proposition}[section]
\newtheorem{exam}{Example}[section]

\newtheorem{prob}{Problem}[section]

\begin{document}

\setlength{\pdfpageheight}{\paperheight}
\setlength{\pdfpagewidth}{\paperwidth}

\conferenceinfo{CONF 'yy}{Month d--d, 20yy, City, ST, Country}
\copyrightyear{20yy}
\copyrightdata{978-1-nnnn-nnnn-n/yy/mm}
\doi{nnnnnnn.nnnnnnn}

\title{Reachability Analysis of Quantum Markov Decision Processes}

\authorinfo{Shenggang Ying and Mingsheng Ying}
           {Tsinghua University, China and University of Technology, Sydney, Australia}
           {yingshenggang@gmail.com, Mingsheng.Ying@uts.edu.au, yingmsh@tsinghua.edu.cn}

\maketitle

\begin{abstract}
We introduce the notion of quantum Markov decision process (qMDP) as a semantic model of nondeterministic and concurrent quantum programs. It is shown by examples that qMDPs can be used in analysis of quantum algorithms and protocols. We study various reachability problems of qMDPs both for the finite-horizon and for the infinite-horizon. The (un)decidability and complexity of these problems are settled, or their relationships with certain long-standing open problems are clarified. We also develop an algorithm for finding optimal scheduler that attains the supremum reachability probability.

\end{abstract}

\category{F.1.1}{Computation by Abstract Devices}{Models of Computation}
\category{F.3.2}{Logics and Meanings of Programs}{Semantics of Programming Languages - Program Analysis}

\terms
Algorithms, Theory, Verification

\keywords
Quantum programming, semantic model, Markov decision process, reachability

\section{Introduction}

As a generalisation of Markov chains, Markov decision processes (MDPs) stemmed from operations research in 1950's. Now they have been successfully applied in various areas such as economics and finance, manufacturing, control theory, robotics, artificial intelligence and machine learning. Also, effective analysis and resolution techniques for MDPs like linear programming have been developed in the last six decades. Since Vardi \cite{Var85} proposed to adopt MDPs as a model of concurrent probabilistic programs, MDPs have been widely used in analysis and verification of randomised algorithms and probabilistic programs (see, for instance, \cite{Mon05}) as well as model checking of probabilistic computing systems \cite{BK08}.

In this paper we introduce the notion of quantum Markov decision process (qMDP) as a model of nondeterministic and concurrent quantum programs. Research on quantum programming has been intensively conducted in the last 18 years since Knill \cite{Kn96} introduced the Quantum Random Access Machine model for quantum computing and proposed a set of conventions for writing quantum pseudocode. The research includes design of quantum programming languages, e.g. QCL \cite{O03}, qGCL \cite{SZ00}, QPL \cite{Se04} and Quipper \cite{Gre13}, semantic models of quantum programs \cite{DP06}, and verification of quantum programs \cite{Y11} (we refer the reader to \cite{Ga06} for basic ideas of quantum programming and an excellent survey on the early works in this area). In particular, quantum Markov chains were defined in \cite{Y13, YFYY} for modelling sequential quantum programs. This paper extends quantum Markov chains considered in \cite{Y13, YFYY} to qMDPs so that we can model nondeterministic and concurrent quantum programs \cite{Zu04, YY12}.

A classical MDP consists of a set $S$ of states and a set $Act$ of actions. Each action $\alpha\in Act$ is modelled by a probabilistic transition function $P_\alpha:S\times S\rightarrow [0,1]$ with $P_\alpha(s,s^\prime)$ being the probability that the system moves from state $s$ to $s^\prime$ after action $\alpha$. A MDP allows not only probabilistic choice between the system states as a result of performing an action but also a nondeterministic choice between actions: there may be more than one action enabled on entering a state $s$. Thus, the notion of scheduler was introduced to resolve the nondeterministic choice between the enabled actions. A scheduler selects the next action according to the previous and current states of the system.
A qMDP is defined as quantum generalisation of MDP with the set $S$ of states replaced by a Hilbert space $\hs$ which always serves as the state space of a quantum system in physics. Now each action $\alpha\in Act$ is described by a super-operator $\E_\alpha$ in $\hs$. Super-operators were recognised by physicists as the most general mathematical formalism of physically realisable operations in quantum mechanics \cite{NC00}. They were also adopted as denotational semantics of quantum programs by Selinger \cite{Se04} and D'Hont and Panangaden \cite{DP06} in their pioneering works on quantum programming.

A major conceptual difference between classical MDPs and qMDPs comes from the notion of scheduler. The information used by a scheduler in a MDP to select the next action is the state of the system. In the quantum case, however, we choose to introduce a series of measurements at the middle of the evolution of a qMDP and to define a scheduler as a function that selects the next action according to the outcomes of these measurements.

This paper focuses on the aspect of qMDPs more related to program analysis and verification, namely reachability analysis. As in the case of classical MDPs, we consider the reachability probability of a subspace $B$ of the state Hilbert space of a qMDP with a fixed scheduler and the supremum reachability probability of $B$ over all schedulers. Although the definition of reachability probabilities in qMDPs looks similar to that of classical MDPs, their behaviours are very different; for example, a MDP has an optimal scheduler that can achieve the supremum reachability probability for all initial states. But it is not the case in a qMDP even for a given initial state. It is also interesting to observe the difference between the behaviour of qMDPs and that of quantum Markov chains. It was proved in \cite{YFYY} that a quantum Markov chain eventually reaches a subspace $B$ for any initial state if the ortho-complement $B^\perp$ of $B$ in the state Hilbert space $\hs$ contains no bottom strongly connected components (BSCCs). The corresponding notion of BSCC in a qMDP is invariant subspace. However, it is possible that in a qMDP $B^\perp$ contains no invariant subspaces but for some schedulers, $B$ is reached by a probability smaller than $1$.

As indicated in Subsection \ref{Ex-QMDP}, some problems in the analysis of quantum algorithms can be properly formulated as the reachability problem of qMDPs. We believe that it will be inevitable to develop effective techniques for reachability analysis of qMDPs with applications in quantum program analysis and verification as quantum algorithm and program design become more and more sophisticated.

The aspects of qMDPs more related to decision making and machine learning are left for future research. In the last few years, it has been found that probabilistic programming is very useful in machine learning for describing probabilistic distributions and Bayesian inference (see, for instance, \cite{Gor14}). On the other hand, it was realised recently that a major application area of quantum computing might be machine learning and big data analytics. We expect that qMDPs will serve as a bridge between the researches on quantum programming and quantum machine learning.

\textbf{\textit{Contribution of the paper}}: This paper studies (un)decidability and complexity of reachability analysis for qMDPs. In the case of finite-horizon, it is proved that both quantitative reachability and qualitative reachability of qMDPs are undecidable. In the case of infinite-horizon, we show that it is EXPTIME-hard to decide whether the supremum reachability probability of a qMDP is $1$, and if it is smaller than $1$, then the supremum reachability probability is uncomputable. It is further proved that a qMDP has an optimal scheduler for reaching an invariant  subspace of its state Hilbert space if and only if the ortho-complement of the target subspace contains no invariant subspaces. This result enables us to develop an algorithm for finding an optimal scheduler. We also consider the problem whether a qMDP always reach an invariant subspace with probability 1, no matter what the scheduler is. A connection between this problem and a long-standing open problem -  the joint spectral radius problem \cite{Gurvits95, TB97, CM05} - is observed.

\textbf{\textit{Related work}}: Before this paper, a very interesting paper by Barry, Barry and Aaronson \cite{Barry14} was recently posted at http://arxiv .org/abs/1406.2858 where the notion of quantum partially observable Markov decision process was introduced. It was proved in \cite{Barry14} that reachability of a goal state is undecidable in the quantum case but decidable in the classical case. The undecidability in the quantum case is similar to our Theorem \ref{thm:undebabilitySpaceFiniteAnyInv}, but they are not the same since we consider reachability of invariant subspaces rather than a single state. Other results in \cite{Barry14} and ours are unrelated.

\textbf{\textit{Organisation of the paper}}: The rest of this paper is organised as follows. Section \ref{DEx} gives formal definitions of qMDPs and their reachability probabilities and invariant subspaces. It also presents several examples to illustrate how can quantum algorithms and protocols be modelled as qMDPs and to show some essential differences between qMDPs and classical MDPs as well as quantum Markov chains. All main results obtained in the paper are stated in Section \ref{StatMain}. Sections \ref{PrFH} and \ref{PrIFH} are devoted to prove the results for finite-horizon and infinite-horizon, respectively. A brief conclusion is drawn in Section \ref{Concl}.

\section{Definitions and Examples}\label{DEx}
\subsection{Basics of Quantum Theory} For convenience of the reader, we very briefly recall some basic notions in quantum theory with the main aim being fixing notations; see \cite{NC00} for details.
In this paper we always assume that the state Hilbert space is $d-$dimensional, i.e. $\hs=\mathbb{C}^d$ where $\mathbb{C}$ is the field of complex numbers. We use the Dirac notation and assume that $\{|i\>\}_{i=1}^{d}$ is an orthonormal basis of $\hs$. Then we have $\hs=\spans\{|i\>\}$, a pure state in $\hs$ can be written as $|\psi\>=\sum \alpha_i|i\>$ with $\sum |\alpha_i|^2=1$, and a mixed state is represented by a density matrix in $\hs$, i.e. a semi-definite positive $d\times d$ matrix with trace $1$. Write $\mathcal{D}(\hs)$ for the set of all density matrices in $\hs$. The identity matrix is denoted $I$. If a density matrix can be written as $\rho=\sum p_i|\psi_i\>\<\psi_i|$, where $\<\psi_i|$ stands for the transpose conjugate of $|\psi_i\>$, then its support is $\supp(\rho) = \spans\{|\psi_i\>:p_i>0\}$.

The evolution of a closed quantum system is described by a $d\times d$ unitary matrix: $|\phi\>\mapsto U|\phi\>$. A super-operator $\E:\mathcal{D}(\hs)\ra\mathcal{D}(\hs)$ depicts the dynamics of a system which is realised with noise or interacts with its environment, and it can always be represented by $\E(\rho)= \sum E_i \rho E_i^\dag$ where all $E_i$ are $d\times d$ matrices with $\sum E_i^\dag E_i=I$ and $E_i^\dag$ denotes the conjugate transpose of $E_i$. The $d^2\times d^2$ matrix $M=\sum (E_i\otimes E_i^\ast)$ is called the matrix representation of $\E$.

A quantum measurement in $\hs$ is described by a set of $d\times d$ matrices $M=\{M_{m_1},\cdots,M_{m_k}\}$ with $\sum M_{m_i}^\dag M_{m_i}=I$, where $m_i$'s denote the possible outcomes. If we perform measurement $M$ on a quantum system which is currently in state $\rho$, then the probability that we get outcome $m_i$ is $p_i=\tr(M_{m_i}^\dag M_{m_i}\rho)$ and the system's after-measurement state is $\rho_i=M_{m_i}\rho M_{m_i}^\dag/p_i$ whenever the outcome is $m_i$.
A measurement $P=\{P_{m_1},\cdots,P_{m_k}\}$ is projective if $P_{m_i}P_{m_j} = \delta_{ij}P_{m_i}$.

\subsection{Quantum Markov Decision Processes}
In this subsection, we formally define our notions of qMDPs and their schedulers.
\begin{defi}\label{defi:qmdp} A qMDP is a 4-tuple $\mathcal{M}=\langle \hs,Act,\mathbf{M}, Q\>$, where:
    \begin{itemize}
      \item $\hs$ is a d-dimensional Hilbert space, called the state space. The dimension of $\hs$ is also called the dimension of $\mathcal{M}$, i.e. $\dim\mathcal{M}=\dim\hs=d$.
      \item $Act$ is a finite set of action names. For each $\alpha\in Act$, there is a corresponding super-operator $\E_\alpha$ that is used to describe the evolution of the system caused by action $\alpha$.
      \item $\mathbf{M}$ is a finite set of quantum measurements. We write $\Omega$ for the set of all possible observations; that is,
        $$\Omega = \{O_{M,m}: M\in \mathbf{M} \text{~and~} m \text{~is\ a\ possible\ outcome\ of~} M\}.$$ Intuitively, $O_{M,m}$ indicates that we perform the measurement $M$ on the system and obtain the outcome $m$.
      \item $Q: Act\cup \mem \ra 2^{Act\cup \mem}$ is a mapping. For each $\alpha\in Act$ (or $M\in\mathbf{M}$), $Q(\alpha)$ (resp. $Q(M)$) stands for the set of the available actions or measurements after $\alpha$ (resp. $M$) is performed. For the trivial case that $Q(\alpha)=Act\cup \mem$ for all $\alpha$, $Q$ will be omitted, and the qMDP $\M$ will be simply written as a triple $\<\hs,Act,\mem\>$.
    \end{itemize}
\end{defi}

\begin{defi}
A scheduler for a qMDP $\mathcal{M}$ is a function $$\mathfrak{S}:(Act\cup\Omega)^{*}\ra Act\cup \mathbf{M}.$$ For any sequence $\sigma=\alpha_1...\alpha_n\in (Act\cup\Omega)^*$, $\mathfrak{S}(\sigma)$ indicates the next action or measurement after actions or observations $\alpha_1...\alpha_n$ happen.
\end{defi}

As pointed out in the introduction, a scheduler in a qMDP selects the next action based on the outcomes of performed measurements. Actually, in the above definition the performed actions are also recorded as a part of the information for such a selection. This design decision is motivated by several examples in Subsection \ref{Ex-QMDP}. We now describe the evolution of a qMDP $\M$ with an initial state $\rho\in\mathcal{D}(\hs)$ and a scheduler $\s$. For simplicity, we write $W=(Act\cup \Omega)^*$. For each word $w\in W$,
the state $\rho_w^{\s}$ of the qMDP $\M$ and probability $p_w^\s$ that this state is reached in $\M$ after sequence $w$ of actions or observations are defined by induction on the length of $w$:
\begin{itemize}\item $\rho_\epsilon^\s=\rho$ and $p_\epsilon^\s=1$, where $\epsilon$ is the empty word. \item If $\s(w)=\alpha\in Act$, then $\rho_{w\alpha}^\s=\E_\alpha(\rho_w^\s)$ and $p_{w\alpha}^\s =p_w^s$. (Note that all the super-operators $\E_\alpha$ $(\alpha\in Act)$ are assumed to be trace-preserving.)
\item If $\s(w)=\alpha=O_{M,m}\in\Omega$, then $$\rho_{w\alpha}^\s = M_m \rho_w^\s M_m^\dag/\tr(M_m \rho_w^\s M_m^\dag)$$ and $p_{w\alpha}^\s = p_w^\s\cdot\tr(M_m \rho_w^\s M_m^\dag).$
\end{itemize}
Furthermore, for each $n\geq 0$, we can define the global state of the qMDP $\M$ at step $n$ according to scheduler $\s$ by $$\rho(n,\s) = \sum_{w\in W\ {\rm s.t.}\ |w|=n} p_w^\s\rho_w^\s.$$ For a subspace $B$ of $\hs$, the probability that $B$ is reached at step $n$ with initial state $\rho$ and scheduler $\s$ is defined by \begin{equation}\label{repr}\Pr(\rho(n,\s)\vDash B)=\tr(P_B\rho(n,\s))\end{equation}where $P_B$ is the projection onto $B$.

\subsection{Invariant Subspaces}

A key notion used in reachability analysis of quantum Markov chains \cite{YFYY}  is BSCC. A counterpart of BSCC in qMDPs is the notion of (common) invariant subspace. Let $B$ be a subspace of Hilbert space $\hs$. We say that $B$ is invariant under a super-operator $\E$ if $\supp(\E(\rho))\subseteq B$ for all density matrices $\rho$ with $\supp(\rho)\subseteq B$. Moreover, $B$ is invariant under a measurement $M=\{M_1,\cdots,M_k\}$ if $\supp(M_i\rho M_i^\dag)\subseteq B$ for all $1\leq i\leq k$ and all $\rho$ with $\supp(\rho)\subseteq B$.

\begin{defi} Let $\M=\langle\hs,Act,\mathbf{M},Q\rangle$ be a qMDP and $B$ a subspace of $\hs$. If $B$ is invariant under super-operator $\E_\alpha$ for all $\alpha\in Act$, and it is invariant under all measurement $M\in\mathbf{M}$, then $B$ is called an invariant subspace of $\M$.\end{defi}

The probability that an invariant subspace is reached is a non-decreasing function of the number of steps.
\begin{thm}\label{thm:PrInvInc}
    Let $\M$ be a qMDP with initial state $\rho$ and $B$ an invariant subspace of $\M$. Then for any scheduler $\s$ and $n\geq 0$, we have:
    $$ \Pr(\rho(n+1,\s)\vDash B)\geq \Pr(\rho(n,\s)\vDash B). $$
\end{thm}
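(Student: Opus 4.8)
The plan is to strip off one layer of the history tree: first reduce the global step-$n$ inequality to a one-step inequality attached to each individual history word, and then reduce that in turn to an operator inequality forced by invariance. For the first reduction, observe that every word $v\in W$ with $|v|=n+1$ factors uniquely as $v=w\beta$ with $|w|=n$, where $\beta=\s(w)$ if $\s(w)\in Act$, and $\beta=O_{M,m_i}$ for some outcome $m_i$ of $M$ if $\s(w)=M\in\mem$. Grouping the sum defining $\rho(n+1,\s)$ by the length-$n$ prefix $w$, and recalling that $\Pr(\rho(n,\s)\vDash B)=\sum_{|w|=n}p_w^\s\tr(P_B\rho_w^\s)$, it suffices to show that for each $w$ with $|w|=n$, writing $\sigma:=p_w^\s\rho_w^\s\ge 0$,
\[ \sum_{v}\ p_v^\s\,\tr(P_B\rho_v^\s)\ \ge\ \tr(P_B\sigma), \]
where $v$ ranges over the successors of $w$. (In the measurement case the successor $wO_{M,m_i}$ carries weight $p_w^\s\tr(M_i\rho_w^\s M_i^\dag)$; where this quantity vanishes, the summand $p_v^\s\rho_v^\s$ is to be read as the zero matrix, equal to $M_i\sigma M_i^\dag$, so no ill-defined normalisation arises.)

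Next I would rewrite the one-step sum. By the update rules, $\sum_v p_v^\s\rho_v^\s$ equals $\E_{\s(w)}(\sigma)$ in the action case and $\sum_i M_i\sigma M_i^\dag$ in the measurement case; in both cases it has the form $\sum_i N_i\sigma N_i^\dag$ for a family $\{N_i\}$ with $\sum_i N_i^\dag N_i=I$ --- a Kraus family of the trace-preserving super-operator $\E_{\s(w)}$, or the operators of the measurement $\s(w)$. Hence the left-hand side above equals $\tr\!\big((\sum_i N_i^\dag P_B N_i)\sigma\big)$. Moreover, since $B$ is an invariant subspace of $\M$, applying the definition of invariance to $\rho=|\phi\>\<\phi|$ with $|\phi\>\in B$ gives $N_i|\phi\>\in B$ for every $i$ and every $|\phi\>\in B$, i.e.\ $P_{B^\perp}N_iP_B=0$; one checks that this operator condition is independent of the chosen Kraus representation.

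The crux is then the following operator inequality: if $\sum_i N_i^\dag N_i=I$ and $N_iB\subseteq B$ for all $i$, then $\sum_i N_i^\dag P_B N_i-P_B$ is positive semidefinite. I would prove this by writing each $N_i$ in block form relative to $\hs=B\oplus B^\perp$; invariance makes the lower-left block vanish, so $N_i=\left(\begin{smallmatrix}a_i & b_i\\ 0 & d_i\end{smallmatrix}\right)$. A direct computation then gives $N_i^\dag N_i-N_i^\dag P_B N_i=\left(\begin{smallmatrix}0 & 0\\ 0 & d_i^\dag d_i\end{smallmatrix}\right)\ge 0$, and summing over $i$ while using that the $B^\perp$-block of the identity $\sum_i N_i^\dag N_i=I$ reads $\sum_i(b_i^\dag b_i+d_i^\dag d_i)=I$ yields $\sum_i N_i^\dag P_B N_i=P_B+\left(\begin{smallmatrix}0 & 0\\ 0 & \sum_i b_i^\dag b_i\end{smallmatrix}\right)\ge P_B$. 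The one-step inequality follows immediately: since $\sigma\ge 0$ and the trace of a product of two positive matrices is nonnegative, $\tr\!\big((\sum_i N_i^\dag P_B N_i-P_B)\sigma\big)\ge 0$. Summing the one-step inequality over all $w$ with $|w|=n$ proves the theorem.

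The step I expect to demand the most care is not the block-matrix inequality, which is short, but the bookkeeping of the first reduction: enumerating the successors of a history word correctly, handling zero-probability measurement outcomes so that no ill-defined normalisation enters the sums, and faithfully translating the support-based definition of invariance into the operator condition $N_iB\subseteq B$ together with its representation-independence.
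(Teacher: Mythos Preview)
Your argument is correct. The paper's own proof is a one-line appeal to an external result: ``Induction on $n$ by using Theorem~1 in \cite{YFYY},'' where the cited theorem supplies the one-step monotonicity $\tr(P_B\E(\rho))\ge\tr(P_B\rho)$ for a quantum Markov chain with invariant $B$. Your proof has the same inductive skeleton---peel off one step and reduce to a per-history one-step inequality---but instead of citing the quantum Markov chain result you prove the needed operator inequality $\sum_i N_i^\dag P_B N_i \ge P_B$ directly via the block decomposition $N_i=\left(\begin{smallmatrix}a_i&b_i\\0&d_i\end{smallmatrix}\right)$ forced by $N_iB\subseteq B$. What you gain is a self-contained proof that handles the action case and the measurement case uniformly through a single Kraus-form computation, and that makes explicit the (easy but worth stating) passage from the support-based definition of invariance to the operator condition $P_{B^\perp}N_iP_B=0$; what the paper gains is brevity by outsourcing exactly that computation to \cite{YFYY}. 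There is no substantive divergence in strategy.
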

\begin{proof}
Induction on $n$ by using Theorem 1 in \cite{YFYY}.
\end{proof}

\subsection{Reachability Probability}
The reachability probability of finite-horizon was defined in equation (\ref{repr}). Now we define the reachability probability of infinite-horizon.

\begin{defi}\label{defi:supremum}
Let $\M$ be a qMDP with state Hilbert space $\hs$, $\rho$ an initial state, $\s$ a scheduler for $\M$ and $B$ a subspace of $\hs$. Then reachability probability of $B$ in $\M$ starting in $\rho$ with scheduler $\s$ is defined by \begin{equation}\label{Recp}\Pr\nolimits^{\s}(\rho\vDash\Diamond B)=\lim_{n\ra\infty}\Pr(\rho(n,\s)\vDash B).\end{equation}
\end{defi}

It is worth noting that, in general, the limit in the above equation does not necessarily exist. However, we have:

\begin{lem}\label{lem:InvPrLim} If $B$ is an invariant subspace of $\M$, then for any initial state $\rho$ and any scheduler $\s$, the reachability probability
    $\Pr\nolimits^\s(\rho\vDash\Diamond B)$ always exists.
\end{lem}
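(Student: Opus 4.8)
The plan is to show that the real sequence $\bigl(\Pr(\rho(n,\s)\vDash B)\bigr)_{n\geq 0}$ is monotone non-decreasing and bounded, so that the limit defining $\Pr^\s(\rho\vDash\Diamond B)$ in (\ref{Recp}) is the limit of a convergent sequence and hence exists (and in fact equals $\sup_n\Pr(\rho(n,\s)\vDash B)$).

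First I would establish boundedness. By (\ref{repr}) we have $\Pr(\rho(n,\s)\vDash B)=\tr(P_B\rho(n,\s))$ with $P_B$ a projection, so $0\leq P_B\leq I$ and therefore $0\leq\tr(P_B\rho(n,\s))\leq\tr(\rho(n,\s))$. It then remains to check that $\rho(n,\s)=\sum_{|w|=n}p_w^\s\rho_w^\s$ is a genuine density matrix, i.e. $\tr(\rho(n,\s))=1$ for all $n$. This is a routine induction on $n$: each $\rho_w^\s$ has trace $1$ by construction (the normalisation in the measurement clause and the trace-preservation of the $\E_\alpha$ in the action clause), and $\sum_{|w|=n}p_w^\s=1$ because the step from $n$ to $n+1$ either applies a single action, which leaves the weights unchanged, or branches over the outcomes $m$ of a measurement $M$ with weights $\tr(M_m\rho_w^\s M_m^\dag)$ that sum to $p_w^\s$ since $\sum_m M_m^\dag M_m=I$. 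Hence $\tr(\rho(n,\s))=\sum_{|w|=n}p_w^\s=1$, so $0\leq\Pr(\rho(n,\s)\vDash B)\leq 1$ for every $n$.

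Next, monotonicity is precisely the content of Theorem \ref{thm:PrInvInc}: since $B$ is an invariant subspace of $\M$, for every scheduler $\s$ and every $n\geq 0$ we have $\Pr(\rho(n+1,\s)\vDash B)\geq\Pr(\rho(n,\s)\vDash B)$. Thus $\bigl(\Pr(\rho(n,\s)\vDash B)\bigr)_{n\geq 0}$ is a non-decreasing sequence contained in $[0,1]$, and by the monotone convergence theorem for real sequences it converges; therefore $\lim_{n\to\infty}\Pr(\rho(n,\s)\vDash B)=\Pr^\s(\rho\vDash\Diamond B)$ exists.

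There is essentially no serious obstacle here: the only point requiring a little care is the bookkeeping that $\rho(n,\s)$ is normalised (equivalently, that $\sum_{|w|=n}p_w^\s=1$), which is where the trace-preservation of the action super-operators and the completeness relation $\sum_m M_m^\dag M_m=I$ of the measurements enter; the rest is just the ``bounded monotone implies convergent'' principle applied to Theorem \ref{thm:PrInvInc}.
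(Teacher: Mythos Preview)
Your proposal is correct and follows essentially the same approach as the paper: boundedness by $1$ together with the monotonicity from Theorem~\ref{thm:PrInvInc} gives convergence via the monotone convergence theorem. The only difference is that you spell out why $\tr(\rho(n,\s))=1$ (using trace-preservation of the $\E_\alpha$ and the completeness relation for measurements), whereas the paper simply asserts the bound.
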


\begin{proof}Since $\Pr(\rho(n,\s)\vDash B)$ is bounded by $1$, the conclusion follows immediately from Theorem \ref{thm:PrInvInc}. \end{proof}

\begin{defi}Let $\M$ be a qMDP with state Hilbert space $\hs$, $\rho$ an initial state and $B$ a subspace of $\hs$. Then supremum reachability probability of $B$ in $\M$ starting in $\rho$ is defined by \begin{equation}\label{Max-def}\Pr\nolimits^{\sup}(\rho\vDash \Diamond B) = \sup_\s \Pr\nolimits^\s(\rho\vDash\Diamond B).\end{equation} If scheduler $\s_0$ satisfies that $\Pr\nolimits^{\s_0}(\rho\vDash \Diamond B)=\Pr\nolimits^{\max}(\rho\vDash \Diamond B)$, then $\s_0$ is called the optimal scheduler for the initial state $\rho$.\end{defi}

\subsection{A Difference between Classical and Quantum Markov Decision Processes}

It is well-known \cite[Lemma 10.102]{BK08} that there exists a memoryless scheduler $\s_0$ that is optimal for all initial states. In the quantum case, however, it is possible that no optimal scheduler exists even for a fixed initial state.

\begin{exam}\label{exam:diff1}
    Consider a quantum Markov decision process $\M=\<\hs,Act,\mathbf{M}\>$, where $\hs=\spans \{|1\rangle, |2\rangle, |3\rangle, |4\rangle\}$, $\mathbf{M}=\emptyset$,  $Act=\{\alpha,\beta\}$ and
    \begin{align*}
    \E_\alpha(\rho) &= (|2\>\<1|\rho|1\>\<2|+|1\>\<1|\rho|1\>\<1|)/2+|2\>\<2|\rho|2\>\<2|\\
    &+|3\>\<3|\rho|3\>\<3|+|4\>\<4|\rho|4\>\<4|,
    \end{align*}
    \begin{align*}
        \E_\beta(\rho) &= |4\>\<1|\rho|1\>\<4|+|3\>\<2|\rho|2\>\<3|+|3\>\<3|\rho|3\>\<3|\\&+|4\>\<4|\rho|4\>\<4|.
    \end{align*}
Let $\rho_0=|1\>\<1|$ and $B=\spans\{|3\rangle\}$. Then $$\Pr\nolimits^{\mathcal{P}}(\rho_0\vDash \Diamond B) < \sup_\s \Pr\nolimits^\s(\rho_0\vDash\Diamond B)=1$$ for all schedulers $\mathcal{P}$.
Indeed, if $\mathcal{P}=\alpha^\omega$, then $\Pr\nolimits^{\s}(\rho_0\vDash \Diamond B)=0$. Let $\mathcal{P}\neq \alpha^\omega$ be a scheduler and let $k$ be the first index such that $a_k=\beta$ where $\mathcal{P}=a_1 a_2 \dots$. Then
$\Pr\nolimits^{\mathcal{P}}(\rho_0\vDash \Diamond B) = 1-0.5^{k-1}<1.$\end{exam}

One reason for nonexistence of the optimal scheduler is that the current state of a quantum system usually cannot be known exactly from the outside, and thus we often have no enough information to choose the next action in a scheduler for a qMDP. In the above example, whence we know the exact state of the system, we can choose an appropriate action to reach the target state: if the state is $|1\rangle$, we take $\alpha$, and if the state is $|2\rangle$, we take $\beta$. However, consider the case where the first action is $\alpha$. The state of the system will become $\rho_1=(|1\>\<1|+|2\>\<2|)/2$. Then we do not know it is in $|1\>$ or $|2\>$ exactly, and we cannot decide which action should be taken.

However, the above is not the only reason for nonexistence of the optimal scheduler. As shown in the following example, it is still possible that a qMDP has no the optimal scheduler when we know exactly its state.
\begin{exam}
    Let $\M=\<\hs,Act,\mem\>$ be a qMDP, $\rho_0=|1\>\<1|$ an initial state and $B=\spans\{|4\>\}$, where
    \begin{itemize}
      \item $\hs=\spans\{|1\>,|2\rangle,|3\rangle,|4\>\}$  ;
      \item $Act=\{a,b\}$ and $\mem=\emptyset$;
      \item $\E_a=A_1\cdot A_1^\dag + A_2\cdot A_2^\dag+ A_3\cdot A_3^\dag$, where
        \begin{equation*}
            A_1 = \begin{pmatrix}
                \cos\theta & \sin\theta & &\\
                -\sin\theta & \cos\theta & &\\
                & & 0 & \\
                & & & 0
            \end{pmatrix},
        \end{equation*}
         $\theta =0.6$, $A_2=|3\>\<3|$ and $A_3=|4\>\<4|$;
      \item $\E_b = \sum_{i=1}^4 C_i\cdot C_i^\dag$, where $C_1=|3\>\<1|,C_2=|4\>\<2|,C_3=|3\>\<3|,C_4=|4\>\<4|$.
    \end{itemize}
    Since $\theta=0.6$, the set $\{A_1^n|1\>:n\in \mathds{N}\}$ is dense on the circle $\{a|1\>+b|2\>:a,b\in \mathds{R}, a^2+b^2=1\}$. For any $\epsilon>0$, there exists $n$, such that $\E_a^n(|1\>\<1|)=|\psi_n\>\<\psi_n|$ with $|\<2|\psi_n\>|>\sqrt{1-\epsilon}$. Thus $\Pr^\s(\rho_0\vDash B)>1-\epsilon$ for $\s=(a^nb)^\omega$. This leads to $\Pr^{\sup}(\rho_0\vDash B) =1$. But since $A_1^m|1\>\neq |2\>$ for any $m$, there is no optimal scheduler.
\end{exam}
In the above example, we have complete information about the state of the system after $\E_a$: it is always a superposition $a|1\>+b|2\>$ of $|1\>$, $|2\>$. But this does not help to derive an optimal scheduler because only $|2\>$ can reach the supremum 1.

\subsection{A Difference between Quantum Markov Chains and Decision Processes}
It was shown in \cite{YFYY} that a quantum Markov chain will eventually reach a subspace $B$ for any initial state if there is no BSCC contained in the ortho-complement $B^\perp$. The following question asks whether a similar conclusion holds for qMDPs.
\begin{prob}\label{ques1}
    Let $\M$ be a qMDP with state space $\hs$, $\s$ a given scheduler for $\M$ and $B$ a subspace of $\hs$. Suppose that $\M$ has no invariant subspace contained in $B^\perp$. Will $\M$ reach $B$ eventually, i.e. $\Pr^\s(\rho\vDash\Diamond B)=1$ for all initial states $\rho$?
\end{prob}
This question is negatively answered by the following example.

\begin{exam}
    Let $\M=\<\hs,Act,\mathbf{M}\>$ with $\hs=\spans\{|1\>,|2\>,|3\>\}$, $Act=\{a,b\}$ and $\mathbf{M}=\emptyset$. The super-operators corresponding to $a$ and $b$ are defined as follows:
    \begin{align*}
        \E_a(\rho) &= |3\>\<1|\rho|1\>\<3|+|1\>\<2|\rho|2\>\<1|
        +|3\>\<3|\rho|3\>\<3|,
    \end{align*}
    \begin{align*}
        \E_b(\rho) &= |2\>\<1|\rho|1\>\<2|+|3\>\<2|\rho|2\>\<3|
        +|3\>\<3|\rho|3\>\<3|
    \end{align*}
    for any density operator $\rho$. Let $B=\spans\{|3\>\}$. It is easy to see that $\E_a$ and $\E_b$ have no common invariant subspace in $B^\perp$. We consider initial state $\rho_0=(|1\>\<1|+|2\>\<2|)/2$ and two schedulers $\s_1=(ab)^\omega$ and $\s_2=(ab)^kaa(ab)^\omega$ for some $k$. Then we have $\Pr^{\s_1}(\rho_0\vDash \Diamond B)=1/2$, but $\Pr^{\s_2}(\rho_0\vDash \Diamond B)=1$.
\end{exam}

\subsection{Quantum Algorithms and Protocols as qMDPs}\label{Ex-QMDP}
In this subsection, we show how can the existing quantum algorithms and communication protocols be seen as examples of qMDP by analysing their structures.
The early quantum algorithms and protocols can be roughly classified into three classes:\begin{enumerate}\item
The first class  applies a sequence of unitary operators followed by a measurement. If the outcome of measurement is desirable, the algorithm terminates. Otherwise, the algorithm is re-initialized and executed again; see Figure \ref{Fig:MExam0a}. Examples include the famous quantum order-finding and factoring algorithms \cite{NC00}, the Grover search algorithm \cite{Grover96}, several quantum-walk-based algorithms \cite{SKW03,Childs03,Kempe05} and the algorithm for solving the expectation value of some operators of systems of linear equations \cite{HHL09}.
\item The second class repeatedly applies an action-measurement loop until success; see Figure \ref{Fig:MExam0b}. One example is the routing algorithm based on a many-measurement quantum walk in \cite{Kempe05}.
\item The structure of the third class looks like a decision tree; see Figure \ref{Fig:MExam2}. Examples are quantum teleportation \cite{NC00}, one-way quantum computer \cite{RB01}. These examples always terminate.
\end{enumerate}

\begin{figure}[!ht]
    \centering
    \subfigure[]{
    \includegraphics[height=2cm]{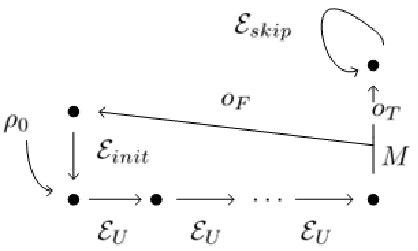}\label{Fig:MExam0a}
    }
    \subfigure[]{
    \includegraphics[height=2cm]{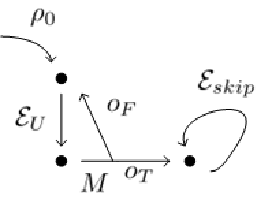}\label{Fig:MExam0b}
    }
    \caption{$\E_U$ represents the one or several sequential unitary operators. $\rho_0$ is the initial state. $\E_{init}$ represents re-initializing, i.e., restarting the algorithm. $\E_{skip}$ means maintaining the result for further application. $M$ represents measurements with observation $o_T$ standing for success and $o_F$ for failure.}\label{Fig:MExam0}
\end{figure}

\begin{figure}[!ht]
    \begin{center}
    \includegraphics[height=3cm]{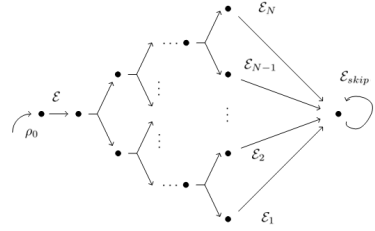}
    \caption{Structure of quantum decision trees.}\label{Fig:MExam2}
    \end{center}
\end{figure}

Recently, several algorithms have been developed with the structures different from Figures \ref{Fig:MExam0} and \ref{Fig:MExam2}. For example, a modified quantum factoring algorithm was experimentally realised in \cite{MLLAZB12}, where in order to reduce the number of necessary entangled qubits, the ancilla (control) qubits are recycled. The structure of this algorithm is shown in Figure \ref{Fig:MExamFactoring}. Another example is the quantum Metropolis sampling \cite{TOVPV11}. This algorithm can be used to prepare the ground or thermal state of a quantum system. The structure of this algorithm for reaching the ground state is shown in Figure \ref{Fig:MExamMetropolis}. It consists of decisions dependent on the history of actions and measurement outcomes as well as repeated loops until success.

As indicated by Figures  \ref{Fig:MExam0}-\ref{Fig:MExamMetropolis}, all of the algorithms and protocols mentioned above can be seen as qMDPs. Here we only elaborate the qMDP model of quantum Metropolis sampling.

\begin{figure}[!ht]
    \begin{center}
    \includegraphics[height=2cm]{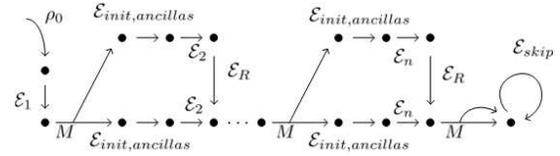}
    \caption{Structure of modified quantum factoring algorithm.}\label{Fig:MExamFactoring}
    \end{center}
\end{figure}

\begin{figure}[!ht]
    \centering
    \subfigure[The global view]{
    \includegraphics[height=2.5cm]{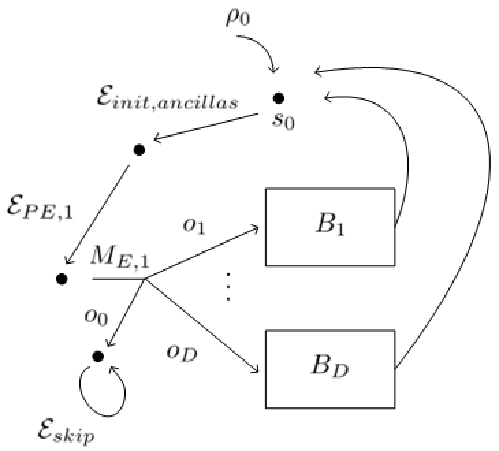}\label{Fig:MExamMG}
    }
    \subfigure[The structure of $B_i$]{
    \includegraphics[height=2.5cm]{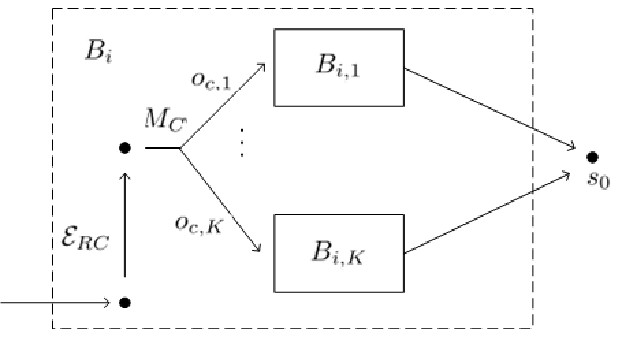}\label{Fig:MExamMBi}
    }
    \subfigure[The structure of $B_{ij}$]{
    \includegraphics[height=2.5cm]{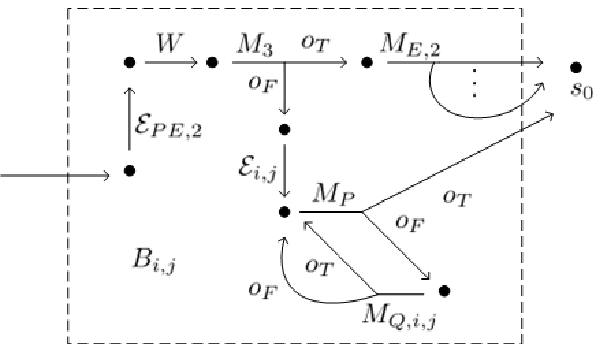}\label{Fig:MExamMBij}
    }
    \caption{Structure of quantum Metropolis sampling in \cite{TOVPV11}.}\label{Fig:MExamMetropolis}
\end{figure}

\begin{exam}\label{exam:MExamMetropolis} The qMDP $\M = \<\hs,Act,\mathbf{M}\>$ for the quantum Metropolis algorithm \cite{TOVPV11} is defined as follows:
    \begin{itemize}
      \item The state Hilbert space is the tensor product of five spaces, $\hs=\hs_{S}\otimes\hs_{E1}\otimes\hs_{E2}\otimes\hs_{a}\otimes\hs_{c}$, where
          \begin{enumerate}
            \item $\hs_S$ is the Hilbert space of the original system, whose ground state is the target.
            \item $\hs_{E1}$ and $\hs_{E2}$ are ancilla spaces, used to represent the energies of the states in $\hs_S$, where $\hs_{E1}$ represents the energy before updating in each round and $\hs_{E2}$ represents the new energy after updating.
            \item $\hs_a$ is $2-$dimensional with its basis states represent the success or failure of eigenstate updating.
            \item $\hs_c$ is used to implement the probabilistic choice of actions $C$.
          \end{enumerate}
      \item $Act$ consists of actions in the form of $\E_*$ in Figure \ref{Fig:MExamMetropolis}, where $\E_{RC}$ stands for probabilistic choice of unitary operators $C$ in \cite{TOVPV11}.
      \item $\mem$ consists of measurements in the form of $M_*$ in Figure \ref{Fig:MExamMetropolis}. $\Omega$ is the set of observations.
    \end{itemize} The task of the algorithm is actually to find a scheduler that reaches the ground state in this qMDP. One such scheduler is illustrated in Figure \ref{Fig:MExamMetropolis}.
\end{exam}

Various generalisations and variants of quantum Metropolis sampling have been proposed, e.g. quantum rejection sampling \cite{Ozols12}, quantum-quantum Metropolis sampling \cite{Yung12} and complementing quantum Metropolis algorithm \cite{Riera12}. An experiment for preparing thermal states was realised \cite{Zhang12} by employing some ideas from quantum Metropolis sampling.
The correctness of quantum Metropolis algorithm and its variants can actually be seen as a reachability problem for qMDPs. This motivates us to systematically develop techniques for reachability analysis of qMDPs.

\subsection{A Concurrent Quantum Program}
As one more example of qMDP, we consider a simple concurrent quantum program consisting of $n$ processes. Every process is a quantum loop. We assume a yes/no measurement $M=\{P_0,P_1\}$ in the state Hilbert space $\hs$, which is projective; that is, both $P_0$ and $P_1$ are projections. For each $1\leq i\leq n$, the $i$th process behaves as follows: it performs the measurement $M$, if the outcome is $0$, then it executes a unitary transformation $U_i$ and enter the loop again; if the outcome is $1$ then it terminates. Note that the loop guard (termination condition) of the $n$ processes are the same, but their loop bodies, namely unitary transformations $U_i$, are different.

This concurrent quantum program can be modelled as a qMDP $\M$ with $Act=\{1,2,...,n\}$. For each $i\in Act$, the action super-operator $\E_i$ is defined by $$\E_i(\rho)=P_1\rho P_1+U_i P_0\rho P_0 U_i^\dag$$ for all density matrices $\rho$. If $P_1$ is the projection onto the subspace $B$ of $\hs$, then the overall termination probability of the concurrent program with initial state $\rho$ is the supremum reachability $\Pr\nolimits^{\sup} (\rho\vDash \Diamond B).$ The following proposition provides us with a method for computing this termination probability.
We write $\bar{\E}$ for the average super-operator  of $\E_i$ ($1\leq i\leq n$); that is, $$\bar{\E} = \frac{1}{n} \sum_{i=1}^n \E_i.$$
We further define
 \begin{equation*}
        \E_\infty = \lim_{N\ra\infty} \frac{1}{N}\sum_{i=1}^{N} \bar{\E}^i.
 \end{equation*} (It was shown in \cite{Y13} that $\E_\infty$ can be computed by Jordan decomposition of the matrix representation of $\bar{\E}$.)

\begin{prop} \begin{enumerate}\item The overall termination probability $$\Pr\nolimits^{\sup} (\rho\vDash \Diamond B)=  1-\tr(\rho P_C),$$
where $C = \supp(\E_\infty(I_\hs))$ and $P_C$ is the projection onto $C$.
\item There is a string $s\in Act^*$ such that the scheduler $\s=s^\omega$ can attain the overall termination probability; that is, $$\Pr\nolimits^\s (\rho\vDash \Diamond B) = \Pr\nolimits^{\sup} (\rho\vDash \Diamond B).$$\end{enumerate}
\end{prop}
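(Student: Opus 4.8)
The plan is to reduce everything to the behaviour on $B^\perp$ of the sub-operators induced by the $\E_i$. Since $\mem=\emptyset$, a scheduler is just an infinite word $\s=a_1a_2\cdots\in Act^\omega$ and $\rho(n,\s)=\E_{a_n}\circ\cdots\circ\E_{a_1}(\rho)$. First I would note that $B$ (the range of $P_1$) is invariant under every $\E_i$, since each $\E_i$ restricts to the identity on $\mathcal{D}(B)$; by Theorem~\ref{thm:PrInvInc} and Lemma~\ref{lem:InvPrLim} this makes $\Pr\nolimits^\s(\rho\vDash\Diamond B)=\sup_n\Pr(\rho(n,\s)\vDash B)$. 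Writing $\F_i(\sigma)=P_0U_iP_0\,\sigma\,P_0U_i^\dag P_0$ for the associated completely positive, trace-non-increasing maps on $B^\perp$, an easy induction on $n$ gives $P_0\rho(n,\s)P_0=\F_{a_n}\circ\cdots\circ\F_{a_1}(P_0\rho P_0)$, hence $\Pr(\rho(n,\s)\vDash B)=1-\tr\big(\F_{a_n}\circ\cdots\circ\F_{a_1}(P_0\rho P_0)\big)$. This is non-increasing in $n$ along any fixed schedule, and every finite word is a prefix of some schedule, so
$$\Pr\nolimits^{\sup}(\rho\vDash\Diamond B)=1-\inf\{\tr(\F_w(P_0\rho P_0)):w\in Act^{*}\},$$
with $\F_w$ the composition along $w$; everything now hinges on this infimum.

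The core of the argument is the identity $\inf_{w}\tr(\F_w(\sigma))=\tr(P_C\sigma)$ for all $\sigma\ge0$ supported in $B^\perp$, where $C=\supp(\E_\infty(I_\hs))$. Let $\bar\F=\frac1n\sum_i\F_i$. One checks $P_0\bar\E^{\,k}(X)P_0=\bar\F^{\,k}(P_0XP_0)$, so the restriction of $\E_\infty$ to $B^\perp$ is the Ces\`aro limit $\bar\F_\infty$, and (using the Jordan-decomposition description of $\E_\infty$ from \cite{Y13}) $D_0:=\supp(\bar\F_\infty(P_0))$ is the part of $C$ inside $B^\perp$. Moreover $\bar\F_\infty(P_0)$ is a fixed point of $\bar\F$, and any nonnegative fixed point $\mu$ of $\bar\F$ loses no trace, which forces $U_i(\supp\mu)=\supp\mu\subseteq B^\perp$ for every $i$; combined with a fixed-point comparison this shows $D_0$ is exactly the largest subspace of $B^\perp$ invariant under all the $U_i$. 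On $D_0$ each $\F_i$ acts by a unitary, hence $\tr(P_C\F_i(\sigma))=\tr(P_C\sigma)$, and by iteration $\tr(\F_w(\sigma))\ge\tr(P_C\sigma)$ for all $w$ --- the lower bound. For the matching upper bound I would use the averaging inequality $\tr(\bar\F^{\,k}(\sigma))=n^{-k}\sum_{|w|=k}\tr(\F_w(\sigma))\ge\inf_{|w|=k}\tr(\F_w(\sigma))$, which reduces the task to $\lim_k\tr(\bar\F^{\,k}(\sigma))=\tr(P_C\sigma)$. Splitting $B^\perp=D_0\oplus D$, the map $\bar\F$ is block diagonal, with a trace-preserving block on $D_0$ and a block $\bar\F_D$ on $D$ having no nonzero fixed point (such a fixed point would carry a nonzero $U_i$-invariant support inside $D$, contradicting maximality of $D_0$); hence $\lim_k\tr(\bar\F_D^{\,k}(\mu))=\tr(\bar\F_{D,\infty}(\mu))=0$, which closes the upper bound and, via the displayed formula, identifies $\Pr\nolimits^{\sup}(\rho\vDash\Diamond B)$ with the asserted value. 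This proves part~(1).

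Part~(2) comes out of the same computation. The $D$-block has the form $\bar\F_D(\mu)=\frac1n\sum_iK_i\mu K_i^\dag$ with $K_i=P_DU_iP_D$, and since $\tr(\bar\F_D^{\,k}(P_D))\to0$ there is a $k_0$ with $\tr(\bar\F_D^{\,k_0}(P_D))<1$; expanding the $k_0$-fold average, some word $w_0$ of length $k_0$ satisfies $\tr(\F_{w_0}(P_D))=\|K_{w_0}\|_{\mathrm{HS}}^2<1$, where $K_{w_0}$ is the corresponding product of the $K_i$, so in particular $\|K_{w_0}\|_{\mathrm{op}}<1$. Along $s=w_0^{\omega}$ the $D$-component of $P_0\rho P_0$ then shrinks geometrically to $0$ while its $D_0$-component keeps its trace, so $\Pr\nolimits^{s^{\omega}}(\rho\vDash\Diamond B)$ equals the infimum-based value of part~(1), i.e.\ $\Pr\nolimits^{\sup}(\rho\vDash\Diamond B)$; and the same $w_0$ works for every initial state.

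The step I expect to be the main obstacle is the upper bound in the core identity: showing that whatever part of the state lies outside the maximal common $U_i$-invariant subspace of $B^\perp$ is eventually driven into $B$ with probability approaching $1$. This is a transience statement, and the delicate point is that adversarial scheduling cannot do better here than the averaged dynamics --- the averaging inequality combined with the absence of fixed points of $\bar\F_D$ is precisely what makes the two bounds meet. The remaining ingredients (invariance of $B$, the inductive trace identity, and the Ces\`aro/Jordan facts about $\E_\infty$) are routine.
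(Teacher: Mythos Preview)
Your argument is essentially correct, but it takes a different route from the paper. The paper's proof is very short: it observes that any common invariant subspace of the $\E_i$ inside $B^\perp$ must be $U_i$-invariant for every $i$ (a dimension count using unitarity), notes that $C$ is invariant and that the trace on $C$ is preserved by every $\E_i$, and then invokes the later Theorem~\ref{thm:supOneSchAny} with target $B\vee C$ to produce a periodic scheduler $\s=s^\omega$ reaching $B\vee C$ with probability $1$; the desired formula and the optimality of $s^\omega$ drop out simultaneously. In other words, the paper outsources the hard work---finding a finite word that kills the non-invariant part of $B^\perp$---to Theorem~\ref{thm:supOneSchAny}.

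You instead prove that special case of Theorem~\ref{thm:supOneSchAny} by hand: you split $B^\perp=D_0\oplus D$ with $D_0$ the maximal common $U_i$-invariant subspace, show each $\F_i$ (hence each $\F_w$) is block diagonal for this splitting, use absence of fixed points on $D$ to force the Ces\`aro iterates of $\bar\F_D$ to zero, and extract a contracting word $w_0$ via the averaging inequality and the Hilbert--Schmidt bound. This is more self-contained and more explicit about \emph{why} a single finite word suffices, at the cost of repeating machinery that the paper later develops in greater generality. Two small points worth tightening: the block-diagonality of $\F_i$ with respect to $D_0\oplus D$ deserves one line of justification (it follows because $U_i$ preserves both $D_0$ and $D_0^\perp$, so $P_0U_iP_0$ respects the splitting), and in part~(2) you should say $\tr(\bar\F_D^{k_0}(P_D))<1$ rather than casually ``$<1$'' when $\dim D>1$---your limit argument gives this for large $k_0$, and then the min-over-words bound yields $\|K_{w_0}\|_{\mathrm{HS}}<1$, hence $\|K_{w_0}\|_{\mathrm{op}}<1$, as you want.
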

\begin{proof}
    Let $Y_a$ be an invariant subspace included in $B^\perp$ of $\E_a$. Since $Y_a\perp B$, we have $Y_a\supseteq \E_a(Y_a) = U_aY_a$. As $\dim Y_a = \dim (U_aY_a)$, we have $Y_a=U_aY_a$. Since unitary operators preserves the orthogonality, we have
    $\tr(\rho P_{Y_a})=\tr(\E_a(\rho) P_{Y_a}).$ If we write $C = \supp(\E_\infty(I_\hs))$, then $C$ is invariant by definition and we have $\Pr\nolimits^{\s'} (\rho\vDash \Diamond C)=  \tr(\rho P_C)$ for any scheduler $\s'$.
By Theorem \ref{thm:supOneSchAny} below, there exists $\s=s^\omega$ such that $\Pr\nolimits^\s (\rho\vDash \Diamond B\cup C)=1.$
    So,
    \begin{equation*}
        \Pr\nolimits^\s (\rho\vDash \Diamond B) = \Pr\nolimits^{\sup} (\rho\vDash \Diamond B)=  1-\tr(\rho P_C).
    \end{equation*}
\end{proof}

\section{Statement of Main Results}\label{StatMain}

The aim of this paper is to study decidability and complexity of reachability analysis for qMDPs. For readability, we summarise the main results in this section but postpone their proofs to the sequent sections.

\subsection{Results for the Finite-Horizon}\label{StatFH}

We first examine the case of finite-horizon and consider the following:
\begin{prob}\label{prob:cutpoint}
    Given a qMDP $\M$, an initial state $\rho$, a subspace $B$ of $\hs$ and $0\leq p\leq 1$, are there a scheduler $\s$ and a non-negative integer $n$ such that $$\Pr(\rho(n,\s)\vDash B) \triangle p$$ where $\triangle \in \{>,\geq,<,\leq\}$?
\end{prob}

\begin{thm}\label{thm:cutpoint}
    Problem \ref{prob:cutpoint} is undecidable for any $\triangle$.
\end{thm}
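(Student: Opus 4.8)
The plan is to reduce an undecidable problem about matrices to Problem~\ref{prob:cutpoint}. The most natural candidate is the \emph{matrix mortality problem} or, better suited here, the \emph{emptiness/zero-in-the-corner problem} for products of integer (or rational) matrices, both known to be undecidable for sets of matrices of small dimension. Concretely, given finitely many matrices $A_1,\dots,A_k$ it is undecidable whether some finite product $A_{i_n}\cdots A_{i_1}$ has a prescribed entry (say the top-left) equal to $0$, or whether some such product equals the zero matrix. I would encode each $A_j$ as (part of) a super-operator $\E_{\alpha_j}$ of a qMDP: after suitable scaling so that $A_j^\dag A_j \preceq I$, pad $A_j$ with ``garbage'' Kraus operators to make $\E_{\alpha_j}$ trace-preserving, possibly enlarging $\hs$ by an ancilla ``sink'' dimension that collects the lost weight. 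A scheduler choosing actions $\alpha_{i_1},\alpha_{i_2},\dots$ then drives the system, so the global state after $n$ steps records (a normalization of) the product $A_{i_n}\cdots A_{i_1}$ acting on the chosen initial vector.

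The key steps, in order, are: (1) fix the source undecidable problem and its matrix dimension $m$; (2) construct $\hs$ of dimension $d = m+1$ (or a small constant multiple), define for each source matrix $A_j$ a trace-preserving super-operator $\E_{\alpha_j}$ on $\hs$ whose ``relevant block'' implements $\rho \mapsto A_j \rho A_j^\dag$ up to the fixed global scaling, and route all defect into the sink dimension; (3) choose the initial state $\rho = |\psi_0\>\<\psi_0|$ and the target subspace $B$ so that $\tr(P_B\,\rho(n,\s))$, for a scheduler $\s$ that deterministically picks a fixed word $\alpha_{i_1}\cdots\alpha_{i_n}$, equals (a known affine function of) $\|\,(\text{corner entry of }A_{i_n}\cdots A_{i_1})\,\|^2$ or of $\|A_{i_n}\cdots A_{i_1}\psi_0\|^2$; (4) pick the threshold $p$ and relation $\triangle$ so that ``$\exists\,\s,n:\ \Pr(\rho(n,\s)\vDash B)\ \triangle\ p$'' holds iff the source instance is positive --- e.g. with $\triangle\in\{<,\leq\}$ use $p$ just below the generic value so that only a genuine vanishing corner/product dips under it, and with $\triangle\in\{>,\geq\}$ flip the roles of $B$ and $B^\perp$ (i.e.\ target the sink/complement) so a vanishing product is detected by an excess of probability elsewhere. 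A single construction with two choices of $(B,p)$ should cover all four $\triangle$.

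A subtlety worth flagging is that a scheduler in a qMDP reads the whole history in $(Act\cup\Omega)^*$, and the global state $\rho(n,\s)$ is a \emph{sum} over all length-$n$ words; since in this construction $\mathbf{M}=\emptyset$ (no measurements, so $\Omega=\emptyset$), every scheduler just selects a fixed infinite action word $\alpha_{i_1}\alpha_{i_2}\cdots$, and for each $n$ there is exactly one word of length $n$, so $\rho(n,\s)$ is simply $\E_{\alpha_{i_n}}\circ\cdots\circ\E_{\alpha_{i_1}}(\rho)$. This makes the correspondence between schedulers and matrix products exact and avoids any averaging. The main obstacle, I expect, is the bookkeeping in step~(2): making each $\E_{\alpha_j}$ genuinely trace-preserving on a \emph{fixed} finite-dimensional $\hs$ while (a) keeping the relevant block an undistorted copy of $A_j\cdot A_j^\dag$ up to one global constant independent of $j$, and (b) ensuring the ancilla/defect machinery never feeds probability weight back into $B$ (or into the block we are measuring), so that $\tr(P_B\rho(n,\s))$ is a clean, monotone-free function of the matrix product and nothing else. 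Getting the dimension bound as small as possible is secondary; for Theorem~\ref{thm:cutpoint} it suffices that some fixed dimension works.
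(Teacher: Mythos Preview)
Your encoding framework is sound, and indeed your observation that with $\mathbf{M}=\emptyset$ a scheduler is just an infinite action word, so that $\rho(n,\s)=\E_{\alpha_{i_n}}\circ\cdots\circ\E_{\alpha_{i_1}}(\rho)$, is exactly the mechanism the paper exploits. But your choice of source problem creates a genuine gap for the strict relations. Matrix mortality and zero-in-the-corner are \emph{exact-zero} questions: under any encoding of the kind you describe, the event ``the product vanishes'' corresponds to $\Pr(\rho(n,\s)\vDash B)$ hitting an extreme value (say $0$, or $1$ for the complement). This cleanly yields undecidability for $\triangle\in\{\le,\ge\}$ at the endpoints $p\in\{0,1\}$, but your step~(4) for $\triangle\in\{<,>\}$ --- ``use $p$ just below the generic value so that only a genuine vanishing corner/product dips under it'' --- does not work: there is no such gap. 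After your normalisation the scaled products $a_{i_n}\cdots a_{i_1}$ can have arbitrarily small nonzero norm, so for every $p>0$ there can be non-mortal instances with $\Pr<p$, and no fixed threshold separates mortal from immortal. Flipping $B$ and $B^\perp$ does not rescue this, for the symmetric reason.

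The paper sidesteps the issue by reducing from a problem that is \emph{already} a threshold problem: the emptiness of cut-point languages for probabilistic finite automata, known to be undecidable for each of $>,\ge,<,\le$. A measurement-free qMDP is literally an MO-1gQFA (which simulates any PFA), so setting $Act=\Sigma$, $\mathbf{M}=\emptyset$, $B=\supp(P_{acc})$ makes ``$\exists\,\s,n:\Pr(\rho(n,\s)\vDash B)\,\triangle\,p$'' identical to ``$\exists\,w:\tr(P_{acc}\,\E_w(\rho_0))\,\triangle\,p$'', and the reduction is a one-liner. Your mortality-based construction is not wasted, however: it is essentially the reduction the paper uses for Theorem~\ref{thm:undebabilitySpaceFiniteAnyInv}, the \emph{qualitative} problem ``$\supp(\rho(n,\s))\subseteq B$'', where an exact-zero source is the right match.
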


Now let us consider a qualitative variant of Problem \ref{prob:cutpoint}.
\begin{prob}\label{prob:spaceFinite}
    Given a qMDP $\M$ with the state Hilbert space $\hs$ and a subspace $B$ of $\hs$.
    \begin{enumerate}
      \item Are there a scheduler $\s$ and an integer $n$ such that $\supp(\rho(n,\s))$ $\subseteq B$ for all initial states $\rho$?
      \item Are there a scheduler $\s$ and an integer $n$ such that $\supp(\rho_0(n,\s))$ $\subseteq B$ for a given initial state $\rho_0$?
    \end{enumerate}
\end{prob}

The counterpart of Problem \ref{prob:spaceFinite}.2 for classical MDPs can be stated as follows: given a MDP $\M$ with a finite set $S$ of states, an initial state $s_0$ and $B\subseteq S$, decide whether there exists a scheduler $\s$ and an integer $N$ such that for any possible sequence of states $s_0 s_{1}s_2 \cdots $ under $\s$, there exists $j<N$ such that $s_{j}\in B$.
The polynomial-time decidability of this problem immediately follows the fact that an optimal scheduler for maximum reachability problem of a MDP can be found in polynomial time \cite{BK08}. The only thing we need to do is to check whether there exists a cycle in all states reachable from $s_0$ in $S\backslash B$, if $\Pr(s_0\vDash \Diamond B)=1$. The same result is true for the counterpart of Problem \ref{prob:spaceFinite}.1 for classical MDPs. This idea also applies to partially observable MDPs with a technique for reducing them to MDPs \cite{Baier08}.

However, undecidability of Problem \ref{prob:spaceFinite} was proved in \cite{BJKP05} if subspace $B$ is allowed to be not invariant. We prove undecidability of the problem for invariant subspace $B$ and thus significantly improve the main result of \cite{BJKP05}.

\begin{thm}\label{thm:undebabilitySpaceFiniteAnyInv}
    Both Problems \ref{prob:spaceFinite}.1 and \ref{prob:spaceFinite}.2 with $|Act|+|\mem|\geq 2$ and $B$ invariant are undecidable.
\end{thm}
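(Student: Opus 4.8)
The plan is to reduce from the \emph{mortality problem} for integer matrices: given a finite set $\{A_1,\dots,A_k\}$ of $n\times n$ integer matrices, decide whether the semigroup they generate under multiplication contains the zero matrix. By a classical argument going back to Paterson (via Post's correspondence problem) this is undecidable for a fixed number $k\ge 2$ of generators of bounded dimension, and, by known refinements, already for $k=2$, matching the bound $|Act|+|\mem|\ge 2$ in the theorem. Mortality is unchanged if all generators are divided by a common positive integer, so we may assume $A_i^\dag A_i\preceq I$ for every $i$. The guiding idea is to let the target subspace $B$ play two roles at once: it is the target, and it is a \emph{frozen sink} on which every action acts as the identity, so that $B$ is trivially invariant; the orthogonal complement $W=B^\perp$ carries the matrix-product computation, and every bit of weight that drops out of that computation is deposited permanently into $B$. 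Under this design, the support reaches $B$ exactly when the accumulated product of the $A_i$'s is the zero matrix, i.e. exactly when the instance is mortal.

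Concretely, put $\hs=W\oplus B$ with $W\cong B\cong\mathbb C^n$, fix a unitary identification of $W$ with $B$, and take the target to be $B$. Let $Act=\{\alpha_1,\dots,\alpha_k\}$, $\mem=\emptyset$ and $Q$ trivial, so $\M=\<\hs,Act,\mem\>$. For each $i$ let $\E_{\alpha_i}$ have the three Kraus operators $E^{(i)}_1=A_i\oplus 0$ (one step of the computation on $W$, annihilate $B$), $E^{(i)}_2$ sending $|v\>\in W$ to $R_i|v\>\in B$ through the fixed identification, where $R_i:=(I-A_i^\dag A_i)^{1/2}$, and annihilating $B$ (deposit the lost weight in the sink), and $E^{(i)}_3=0\oplus I_B$ (freeze the sink). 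A one-line block computation gives $\sum_j (E^{(i)}_j)^\dag E^{(i)}_j=\big(A_i^\dag A_i+(I-A_i^\dag A_i)\big)\oplus I_B=I$, so $\E_{\alpha_i}$ is trace-preserving, and it is completely positive by construction. Since $E^{(i)}_1$ and $E^{(i)}_2$ annihilate $B$ while $E^{(i)}_3$ restricts to the identity on $B$, each $\E_{\alpha_i}$ fixes every density matrix supported in $B$, so $B$ is an invariant subspace of $\M$. Finally take the initial state $\rho_0=\tfrac1n P_W$, which is full rank on $W$ and vanishes on $B$.

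For the equivalence: since $\mem=\emptyset$, the history a scheduler sees is an ordinary word over $Act$, so a scheduler amounts to a fixed finite action word $w=\alpha_{i_1}\cdots\alpha_{i_m}$ and $\rho(m,\s)=\E_w(\rho_0)$ with $\E_w=\E_{\alpha_{i_m}}\circ\cdots\circ\E_{\alpha_{i_1}}$. Expanding $\E_w(\rho_0)=\sum_e e\,\rho_0\,e^\dag$ over Kraus branches $e$ and using the block structure, a branch contributes a nonzero term only in two shapes: the all-$E_1$ branch, with operator $(A_{i_m}\cdots A_{i_1})\oplus 0$ whose range lies in $W$; or a branch that applies operators of type $E_1$ for a while, then exactly one $E^{(i_t)}_2$, then only operators of type $E_3$, whose range lies in $B$ — the key point being that weight once leaked into $B$ is frozen and can never re-enter $W$. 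Hence $\supp(\E_w(\rho_0))=\spans\{A_{i_m}\cdots A_{i_1}|v\>:|v\>\in W\}+(\text{a subspace of }B)$, and since $W\cap B=\{0\}$ this lies in $B$ iff $A_{i_m}\cdots A_{i_1}=0$; full rank of $\rho_0$ on $W$ is exactly what makes ``the all-$E_1$ branch vanishes'' equivalent to ``the matrix product is the zero matrix'' rather than merely ``it kills one vector.'' So Problem \ref{prob:spaceFinite}.2 for $(\M,\rho_0,B)$ answers ``yes'' iff some ordered product of the $A_i$ is $0$, i.e. iff $\{A_1,\dots,A_k\}$ is mortal — undecidable, with $B$ invariant and $|Act|+|\mem|\ge 2$. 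The same $w$ works uniformly for \emph{all} initial states (there is no measurement feedback), the extreme case being $\rho=I$, for which the same analysis gives $\supp(\E_w(I))=\spans\{A_{i_m}\cdots A_{i_1}|v\>:|v\>\in W\}+B$, again $\subseteq B$ iff the product is $0$; hence Problem \ref{prob:spaceFinite}.1 reduces to mortality too. If one wants more actions, or at least one measurement present, pad $Act$ with dummy actions, or encode extra generators into a measurement via its expectation super-operator $\rho\mapsto\sum_m M_m\rho M_m^\dag$; admitting measurements only enlarges the scheduler space, so undecidability persists.

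The single delicate step is the Kraus-branch analysis in the previous paragraph: one must verify that any branch leaving $W$ does so through one $E^{(i_t)}_2$ and is thereafter frozen inside $B$ (so all such branches are harmless), that the unique branch staying in $W$ reproduces exactly the raw ordered product $A_{i_m}\cdots A_{i_1}$, and — the point on which the reduction hinges — that full rank of $\rho_0$ on $W$ upgrades ``this branch annihilates $\rho_0$'' to ``the matrix product is the zero matrix.'' The remaining facts (trace preservation, complete positivity, invariance of $B$) are a routine $2\times 2$-block verification, and the map from a mortality instance to $(\M,\rho_0,B)$ is evidently computable.
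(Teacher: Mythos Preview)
Your proposal is correct and follows essentially the same approach as the paper: both reduce from matrix mortality by doubling the space to $W\oplus B$, making each action a three-Kraus super-operator that (i) applies the scaled generator on $W$, (ii) shunts the weight defect $\sqrt{I-A_i^\dag A_i}$ from $W$ into $B$, and (iii) acts as the identity on $B$, so that $B$ is an absorbing invariant subspace and the $W$-block after a word $w$ is exactly the raw matrix product. The only cosmetic differences are that the paper scales each generator by its own $r_i$ rather than a common factor, and takes $\rho_0=I/2n$ rather than $P_W/n$; neither affects the argument.
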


\subsection{Results for the Infinite-Horizon}\label{StatIH}

Let us turn to the case of infinite-horizon. If the target subspace $B$ is allowed to be not an invariant subspace, then the limit in equation (\ref{Recp}) does not necessarily exists, and we consider the corresponding upper limit:
\begin{thm}\label{thm:UndecBnotInv}
    Given a qMDP $\M$, an initial state $\rho_0$ and a subspace $B$ (not necessarily invariant of $\M$), then it is undecidable to determine whether $$\sup_\s\limsup_{n\ra \infty} \Pr(\rho(n,\s)\vDash B)=1.$$
\end{thm}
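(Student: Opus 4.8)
The plan is to reduce from the \emph{finite-horizon value-$1$ problem}: given a qMDP $\M_0$, an initial state $\sigma_0$ and a subspace $B_0$ of its state space, decide whether $\sup_{\s,\,n\ge0}\Pr(\sigma_0(n,\s)\vDash B_0)=1$. This problem is undecidable: every probabilistic finite automaton $\mathcal A$ with stochastic matrices $\{P_a\}_{a\in\Sigma}$, initial state $i_0$ and accepting set $F$ embeds into a qMDP by taking $\hs=\spans\{|i\>\}_{i=1}^d$ ($d$ the number of automaton states), $Act=\Sigma$, $\mem=\emptyset$, and $\E_a(\rho)=\sum_{i,j}(P_a)_{ij}\<i|\rho|i\>\,|j\>\<j|$ --- a trace-preserving super-operator with Kraus operators $\sqrt{(P_a)_{ij}}\,|j\>\<i|$; with $\sigma_0=|i_0\>\<i_0|$ and $B_0=\spans\{|j\>:j\in F\}$, a scheduler is simply a choice of input word, every reachable global state is diagonal, and $\Pr(\sigma_0(n,\s)\vDash B_0)$ is exactly the acceptance probability of the length-$n$ word chosen by $\s$. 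Hence the above supremum equals the \emph{value} of $\mathcal A$, and deciding whether it equals $1$ is undecidable by the value-$1$ theorem for probabilistic automata of Gimbert and Oualhadj (in the same circle of ideas as the quantum-automata undecidability results of \cite{BJKP05}). Alternatively, the value-$1$ undecidability can be derived directly, in the spirit of --- but going beyond --- the reduction behind Theorem~\ref{thm:cutpoint}.

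Given such an instance $(\M_0,\sigma_0,B_0)$, with $\sigma_0=|i_0\>\<i_0|$ and (w.l.o.g., by prepending a fresh non-accepting state with a single outgoing transition) $i_0\notin F$, I build $\M$ by adjoining to $\M_0$ one new action $r$ whose super-operator is the \emph{reset channel} $\E_r(\rho)=\tr(\rho)\,\sigma_0$ (Kraus operators $|i_0\>\<k|$, $1\le k\le d$), keeping $\hs$, $\mem_0$ and the other actions unchanged, using the trivial availability map, and setting the target to $B:=B_0$. Since $|i_0\>\notin B$, the subspace $B$ is not invariant under $\E_r$, so the instance produced lies in the regime of the theorem. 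Intuitively, $r$ lets a scheduler jump from any reached configuration back to $\sigma_0$ and re-run $\M_0$ on a fresh continuation: for any branch $w=b_1\cdots b_n$ of a run of $\M$, the state $\rho_w^{\s}$ depends only on $\sigma_0$ and on the suffix of $w$ past the last occurrence of $r$, a suffix that is an action/observation sequence of $\M_0$.

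The correctness of the reduction is the identity
\[ \sup_{\s}\,\limsup_{n\to\infty}\,\Pr(\rho(n,\s)\vDash B)\;=\;v,\qquad v:=\sup_{\s,\,n\ge0}\Pr(\sigma_0(n,\s)\vDash B_0). \]
For ``$\le$'': fix a scheduler $\s$ for $\M$ and $n\ge0$; conditioned on the position $k\le n$ of the last $r$ in the current branch (if any) together with the prefix up to and including that $r$, the state evolves from $\sigma_0$ under a residual $\M_0$-scheduler for $n-k$ further steps, hence is of the form $\sigma_0(n-k,\s')$ and contributes at most $v$ to $\Pr(\rho(n,\s)\vDash B)$; averaging over the conditioning, $\Pr(\rho(n,\s)\vDash B)\le v$, and the bound survives $\limsup_n$ and $\sup_\s$. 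For ``$\ge$'': choose $\M_0$-schedulers $\s_m$ and horizons $n_m$ with $\Pr(\sigma_0(n_m,\s_m)\vDash B_0)\to v$; the scheduler for $\M$ that plays $\s_1$ for $n_1$ steps, then $r$, then $\s_2$ for $n_2$ steps, then $r$, and so on, resets the whole global state to $\sigma_0$ at each $r$ (by linearity of ``reset then run'') and therefore realises $\Pr(\rho(\cdot,\s)\vDash B)=\Pr(\sigma_0(n_m,\s_m)\vDash B_0)$ at the end of the $m$-th block, so its $\limsup$ is at least $v$. Thus $\sup_\s\limsup_n\Pr(\rho(n,\s)\vDash B)=1$ iff $v=1$, and the latter is undecidable.

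I expect the gadget above to be routine; the real obstacle is securing the \emph{source} undecidability. One cannot simply invoke Theorem~\ref{thm:cutpoint} and amplify a threshold $p<1$ up to $1$: any reduction built from probabilistic branching and measurements induces a transfer function $f$ on the attainable value with $f(v)=1$ only when $v=1$ (a convex combination always leaks), and genuine quantum amplitude amplification is not available here --- the super-operators are noisy and non-invertible, and the scheduler does not know the current success probability. Consequently the proof must start from a genuinely value-$1$-style undecidable problem, and the delicate point in establishing that --- whether one imports it from probabilistic automata or re-derives it along the lines of Theorem~\ref{thm:cutpoint} --- is exactly the non-attained-supremum phenomenon (the value $1$ being approached through longer and longer action sequences but never met), of the kind already visible in the $\theta=0.6$ example of Section~\ref{DEx}.
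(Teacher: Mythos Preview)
Your reduction is correct and rests on the same undecidable source as the paper's own proof: the value-$1$ problem for probabilistic finite automata of Gimbert and Oualhadj, encoded as a qMDP with $\mem=\emptyset$ exactly as in the proof of Theorem~\ref{thm:cutpoint}.

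The difference is in how the finite-word supremum is linked to the infinite-horizon $\limsup$. The paper works with the bare PFA encoding and asserts directly that, since $\mem=\emptyset$ and hence every scheduler is an action sequence in $Act^\omega$, the statement $\sup_{s\in Act^*}\Pr(\rho(|s|,s^\omega)\vDash B)=1$ is equivalent to $\sup_\s\limsup_n\Pr(\rho(n,\s)\vDash B)=1$. You instead adjoin an explicit reset action $\E_r(\rho)=\tr(\rho)\,\sigma_0$ and build a single scheduler that alternates near-optimal finite runs with resets, so that the ``$\ge$'' direction of that equivalence becomes a one-line computation rather than an assertion. What your gadget buys is a self-contained justification of precisely the step the paper leaves terse (why a supremum attained only in the limit over finite words can be realised as a $\limsup$ along one infinite scheduler); what the paper's route buys is economy --- no enlargement of $Act$ and no need to arrange $i_0\notin F$. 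Your last paragraph, noting that one cannot simply amplify a strict threshold from Theorem~\ref{thm:cutpoint} and must start from a genuine value-$1$ source, is exactly right and matches the paper's choice of \cite{GO10}.
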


In the remainder of this section, we only consider invariant subspace $B$ of $\M$, since the supremum reachability probability is not well-defined for those subspaces that are not invariant (see Definition \ref{defi:supremum} and Lemma \ref{lem:InvPrLim}). As for classical MDPs, a major reachability problem for qMDPs is the following:

\begin{prob}\label{prob:reachPrSup}
    Given a qMDP $\M$, an initial state $\rho_0$ and an invariant subspace $B$.
    \begin{enumerate}
      \item Decide whether $\Pr^{\sup}(\rho_0\vDash \Diamond B)=1$.
      \item Furthermore what is the exact value of $\Pr^{\sup}(\rho_0\vDash\Diamond B)$?
    \end{enumerate}
\end{prob}

\begin{thm}\label{thm:exphard}
    Given a qMDP $\M$, an initial state $\rho_0$ and an invariant subspace $B$.
    \begin{enumerate}
      \item It is EXPTIME-hard to decide whether $\Pr^{\sup}(\rho_0\vDash \Diamond B)=1$ even for $\M$ whose actions are all unitary.
      \item The value of $\Pr^{\sup}(\rho_0\vDash \Diamond B)$ is uncomputable, if $\Pr^{\sup}(\rho_0\vDash \Diamond B)<1$.
    \end{enumerate}
\end{thm}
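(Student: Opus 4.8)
The plan is to prove both parts by reduction from known hard/undecidable problems about matrix semigroups and automata, exploiting the fact that a qMDP whose actions are all unitary effectively lets a scheduler choose, step by step, a word $w = \alpha_{i_1}\alpha_{i_2}\cdots\alpha_{i_n}$ over the action alphabet and thereby apply the product $U_{i_n}\cdots U_{i_1}$ to the initial state. For part (1), I would reduce from the (EXPTIME-complete) reachability/emptiness problem for alternating finite automata, or equivalently from a succinct reachability game; the standard bridge is that such problems can be encoded as: given a finite set of stochastic (or unitary, after suitable dilation) matrices and an initial and target vector, decide whether the orbit can be driven into the target subspace. Concretely, I would fix a projective measurement $M = \{P_B, P_{B^\perp}\}$ onto the invariant subspace $B$ and its complement, make each $\E_\alpha$ the unitary channel $\rho \mapsto U_\alpha \rho U_\alpha^\dag$ extended to act as identity on $B$ (so that $B$ is genuinely invariant, as required by the theorem statement), and arrange the $U_\alpha$ on $B^\perp$ to simulate the transitions of the automaton. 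Then $\Pr^{\sup}(\rho_0 \vDash \Diamond B) = 1$ will hold exactly when the scheduler can, by choosing the right sequence of actions, steer the $B^\perp$-component into a configuration from which a measurement collapses the state into $B$ with probability tending to $1$ — i.e. exactly when the automaton accepts. The exponential blow-up we are allowed in an EXPTIME-hardness reduction lets us pack the (exponentially many) subset-configurations of the alternating automaton into the polynomially-many dimensions we need only in logarithmically-scaled form; I would instead keep the construction polynomial in the dimension and reduce from an explicitly EXPTIME-complete tiling or game problem whose instance size already matches.

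For part (2), I would reduce from a known uncomputability result about matrix semigroups — the most natural candidate being the uncomputability of the joint spectral radius, or the fact that the supremum of $\|M_{i_n}\cdots M_{i_1} v\|$ over all words is not computable (this is closely related to the undecidability results the paper already invokes, e.g. \cite{BJKP05}, and to the finiteness/mortality problems for matrix semigroups). The idea is that, conditioned on $\Pr^{\sup}(\rho_0 \vDash \Diamond B) < 1$, the exact value $1 - \Pr^{\sup}(\rho_0 \vDash \Diamond B)$ encodes the residual mass that can never be pushed into $B$, and this residual quantity can be made equal to a limiting operator-norm / spectral-radius-type quantity over the semigroup generated by the action unitaries (restricted to $B^\perp$). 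Since that quantity is uncomputable, so is $\Pr^{\sup}(\rho_0 \vDash \Diamond B)$. One has to be slightly careful: many classical uncomputability proofs use non-unitary (e.g. sub-stochastic or integer) matrices, so I would either (a) pass through a unitary dilation — embed the possibly-contractive generators $M_i$ as blocks of unitaries $U_i$ on a larger space, and design $B$ so that the "ancilla" part of the dilation is traced out / absorbed into $B^\perp$ in a way that recovers $\|M_{i_n}\cdots M_{i_1} v\|$ from the reachability probability — or (b) reduce directly from a uncomputable-value problem that is already known to survive the unitary restriction. The dilation route seems cleanest and keeps the construction uniform with part (1).

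The key steps, in order: (i) set up the generic unitary-action encoding: action alphabet $=$ input/transition alphabet, $\E_\alpha$ a unitary channel acting as identity on $B$ so $B$ is invariant, a single projective measurement onto $\{B, B^\perp\}$; (ii) verify the semantics — that the global state after a word $w$ of actions decomposes as a convex combination indexed by measurement outcomes, with the "not-yet-in-$B$" branch carrying state $U_w \rho_0 U_w^\dag$-restricted weight, so that $\Pr^{\sup}$ is exactly a supremum over scheduler-chosen words of a quantity determined by $\{U_\alpha\}$; (iii) for (1), import an EXPTIME-complete problem (alternating automaton emptiness / a succinct game / a linear-bounded-automaton acceptance) and show "reaches $B$ with supremum $1$" $\iff$ "instance is positive"; (iv) for (2), import an uncomputable-value problem about unitary (or dilated) matrix semigroups, build the qMDP via the same template, and show the supremum reachability probability equals (an affine transform of) that uncomputable value, conditioned on it being $<1$. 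The main obstacle I anticipate is step (iii)/(iv) simultaneously: I need the measurement to extract, in the limit $n \to \infty$, precisely the right functional of the word — for part (1) a Boolean "has the orbit hit an accepting configuration", for part (2) a real limiting norm — while respecting that $B$ must be a genuine invariant subspace (so I cannot let the dynamics on $B$ leak back out) and that the measurement, once it collapses mass into $B$, keeps it there (guaranteed by Theorem \ref{thm:PrInvInc}, which is convenient here). Reconciling "$B$ invariant" with "the measurement must be able to detect partial progress without destroying the branch that hasn't yet succeeded" is the delicate bookkeeping; I expect to handle it by keeping a dedicated flag qubit/subspace that the unitaries set upon reaching a good configuration and that the measurement reads, with $B$ defined as "flag $=$ success", so that invariance of $B$ is automatic and the measurement is simply the projection onto the flag register.
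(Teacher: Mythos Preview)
Your proposal for part (1) has a structural gap that breaks the reduction before any complexity considerations enter. You want every action to be a unitary channel $\rho\mapsto U_\alpha\rho U_\alpha^\dag$ that acts as the identity on the invariant target subspace $B$, and your only measurement is the projection $\{P_B,P_{B^\perp}\}$. But if $U_\alpha$ is unitary and $B$ is invariant under it, then $B^\perp$ is automatically invariant as well (since $U_\alpha B\subseteq B$ and $\dim U_\alpha B=\dim B$ force $U_\alpha B=B$, hence $U_\alpha B^\perp=B^\perp$). Consequently no unitary action can transfer any amplitude from $B^\perp$ into $B$, and the projective measurement onto $\{B,B^\perp\}$ certainly cannot either. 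Your ``flag qubit that the unitaries set upon reaching a good configuration'' is therefore self-contradictory: setting the flag is exactly a map from $B^\perp$ into $B$, which a $B$-preserving unitary cannot perform. With your construction, $\Pr(\rho_0(n,\s)\vDash B)=\tr(P_B\rho_0)$ for every $n$ and every $\s$, so nothing is being encoded.

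The paper resolves this by putting the non-unitary work into $\mathbf{M}$ rather than into $Act$. It reduces from the Stockmeyer--Chandra EXPTIME-complete two-player CNF game: Player~1's moves become the (unitary) actions chosen by the scheduler, while Player~2's moves are implemented by a \emph{measurement} $M_R$ that splits the state uniformly over Player~2's options and reports the outcome. A second measurement checks the formula; its ``success'' operator is of the form $|k{+}1\rangle\langle k|$, which is not a projection and is precisely what shuttles amplitude from the ``formula satisfied'' configuration in $B^\perp$ into $B$. This simultaneously fixes the amplitude-transfer issue above and supplies the ingredient your outline lacks entirely: the \emph{universal} player of the alternation. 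The requirement $\Pr^{\sup}=1$ forces the scheduler to win on \emph{every} branch of Player~2's measurement, which is what makes the problem EXPTIME-hard rather than merely an orbit/reachability question for a unitary semigroup. Your sketch never explains where the adversary lives; ``the scheduler chooses a word'' captures only the existential side.

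For part (2), your idea of going through the joint spectral radius and unitary dilation is not unreasonable, but it is both heavier and vaguer than needed. The paper simply reduces from the undecidable quantitative value problem for acceptance-absorbing probabilistic automata on infinite words: embed the stochastic transition matrices as super-operators (the same trick used for Theorems \ref{thm:cutpoint} and \ref{thm:UndecBnotInv}), note that absorption makes $B$ invariant, and conclude that deciding $\Pr^{\sup}(\rho_0\vDash\Diamond B)\geq\lambda$ is undecidable, hence the value is uncomputable. No dilation is required, and the construction is a few lines.
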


For a special class of super-operators and measurements operators, Theorem \ref{thm:exphard}.1 can be significantly improved:
\begin{thm}\label{thm:supOneSp} Let $\M, \rho_0, B$ be as in Theorem \ref{thm:exphard}. We assume: \begin{enumerate}\item for each $\alpha\in Act$, $\E_\alpha(\rho)=\sum_{i=1}^{n_\alpha} A_{\alpha i}\rho A_{\alpha i}^\dag$ with all $A_{\alpha i}$ being of the form $a\cdot |\varphi\>\<\psi|$;
\item for $M_\beta\in\mem$, $M_\beta=\{M_{\beta 1},...,M_{\beta k_\beta}\}$ with all $M_{\beta j}$ being also of form $a\cdot |\varphi\>\<\psi|$.\end{enumerate}
Let $N=\max (\{n_\alpha:\alpha\in Act\}\cup\{k_\beta:M_\beta\in\mem\})$. Then whether $\Pr^{\sup}(\rho_0\vDash \Diamond B)=1$ can be decided in time $O(poly((|Act|+|\mem|)2^N))$.\end{thm}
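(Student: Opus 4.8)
The plan is to use the rank-one hypothesis to collapse the (in general infinite) set of reachable ``belief states'' of $\M$ to a finite set of size $O((|Act|+|\mem|)2^{N})$, and then to decide $\Pr^{\sup}(\rho_0\vDash\Diamond B)=1$ by a fixpoint computation on the resulting finite structure.

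\emph{Step 1: the finite abstraction.} Write each Kraus operator and each measurement operator in the given form, say $A_{\alpha i}=a_{\alpha i}|\varphi_{\alpha i}\>\<\psi_{\alpha i}|$ and $M_{\beta j}=b_{\beta j}|\varphi_{\beta j}\>\<\psi_{\beta j}|$. The key point is that such an operator maps \emph{every} vector to a scalar multiple of its fixed output vector; hence after the first step the conditional state $\rho_w^\s$ of any branch is a mixture supported on a subset of the $\le N$ vectors $\{|\varphi_{\gamma k}\>\}_{k}$, where $\gamma$ is the last operation of $w$. Moreover, since $B$ is invariant, each operator $a|\varphi\>\<\psi|$ satisfies $|\psi\>\in B^{\perp}$ or $|\varphi\>\in B$. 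I therefore take as \emph{configurations}: an absorbing symbol $\star$ (``all current mass already lies in $B$''), every set $S$ of vectors outside $B$ that is a subset of one of the families $\{|\varphi_{\gamma k}\>\}_{k}$, and one initial configuration $S_0$ consisting of the eigenvectors of $\rho_0$ that do not lie in $B$. There are at most $1+(|Act|+|\mem|)2^{N}$ of them. The successor relation is easy to compute: an action $\gamma$ sends $S$ deterministically to the set $S'$ of alive $|\varphi_{\gamma k}\>$ whose $|\psi_{\gamma k}\>$ is not orthogonal to $S$ (with, in general, a portion of the probability mass diverted into $B$); a measurement $M_\beta$ branches, sending $S$ with the corresponding outcome probability either to $\star$ (if $|\varphi_{\beta j}\>\in B$) or to the singleton $\{|\varphi_{\beta j}\>\}$, for each outcome $j$ whose $|\psi_{\beta j}\>$ is not orthogonal to $S$. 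Exploring the part reachable from $S_0$ takes time $O(poly((|Act|+|\mem|)2^{N}))$.

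\emph{Step 2: reduction to a game on the configuration graph.} Because each rank-one operator is ``scale-then-redirect'', along a run of actions the weight vector of a branch evolves by a fixed substochastic matrix, whereas any measurement resets that branch to a point mass (each outcome of $M_\beta$ yields post-state $|\varphi_{\beta j}\>$ independently of the pre-state). Using this I will show that whether $\Pr^{\sup}(\rho_0\vDash\Diamond B)=1$ depends only on $S_0$, not on the actual weights, and is equivalent to ``$S_0$ is \emph{winning}'', where the winning configurations are the largest set $\mathcal W$ containing $\star$ such that from every $S\in\mathcal W\setminus\{\star\}$ some operation $\gamma$ keeps all outcome-successors inside $\mathcal W$ \emph{and} makes genuine progress toward $\star$ -- formally, $\mathcal W$ contains no bottom strongly connected set of configurations all of whose available transitions are mass-conserving, where ``$\gamma$ is mass-conserving on $S$'' (i.e.\ diverts no probability into $B$) is, for rank-one operators, a linear condition on the vectors of $S$. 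This $\mathcal W$ is obtained by a standard attractor / greatest-fixpoint iteration over the configuration graph, in time polynomial in its size, i.e.\ $O(poly((|Act|+|\mem|)2^{N}))$ overall.

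\emph{Step 3: correctness, and the main obstacle.} For soundness, a winning configuration yields a scheduler: the progress clause guarantees that within any recurrent family of configurations a ``leaking'' Kraus operator fires with positive probability, and -- the delicate point -- since only finitely many configurations occur and the governing substochastic matrices are fixed, the fraction of non-$B$ mass absorbed into $B$ over any bounded block of steps is bounded below by a constant $\delta>0$; combined with Theorem~\ref{thm:PrInvInc}, which says the non-$B$ mass never increases, this forces geometric decay, hence $\Pr(\rho(n,\s)\vDash B)\to 1$. For completeness, if $S_0\notin\mathcal W$ then against any scheduler one exhibits a positive-probability family of branches confined to configurations outside $\mathcal W$ all of whose enabled transitions conserve non-$B$ mass, so a fixed positive amount of mass is never absorbed and $\Pr^\s(\rho_0\vDash\Diamond B)<1$. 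I expect the main obstacle to be exactly this geometric-decay estimate: ruling out ``Zeno'' schedulers that push the limit to $1$ by diverting ever-smaller fractions of mass while the configuration-level condition fails. The resolution is that inside a fixed configuration the weights are iterated by a single fixed substochastic matrix whose spectral radius is strictly less than $1$ precisely when some coordinate genuinely leaks, yielding a uniform per-block contraction; combining the finitely many such factors (over configurations and over measurement branchings) produces the required uniform $\delta$. A secondary subtlety -- that an action forces one operation upon a whole mixture whereas a measurement lets the scheduler adapt to the outcome -- is handled by the observation that ``winning'' is genuinely a property of the \emph{set} of possible current vectors, which is precisely what a configuration records.
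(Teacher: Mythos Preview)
Your Step~1 is exactly the paper's key observation: because every Kraus/measurement operator is rank one, the support of the state after any step is one of at most $(|Act|+|\mem|)2^{N}$ fixed subspaces, so the qMDP collapses to a finite structure on these supports. From here, however, the paper takes a much shorter route than you do. It simply packages the finite structure as a \emph{classical} MDP $\M'$ whose states are the possible supports, whose actions are $Act\cup\{\beta:M_\beta\in\mathbf M\}$, and whose transition probabilities are chosen arbitrarily (uniform) among the supports that can occur. The paper then argues the equivalence $\Pr^{\sup}(\rho_0\vDash\Diamond B)=1$ in $\M$ $\Longleftrightarrow$ $\Pr^{\max}(s_0\vDash\Diamond B')=1$ in $\M'$, and finishes by citing the standard polynomial-time algorithm for the almost-sure reachability set of a finite MDP. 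The point that makes this work, and that your proposal does not exploit, is that for a finite classical MDP the predicate ``$\Pr^{\max}=1$'' is purely graph-theoretic: it depends only on which transitions have positive probability, not on their numerical values. This is why the paper can replace the true quantum probabilities by uniform ones without harm.

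Your Steps~2--3 instead try to characterise the ``$=1$'' set directly via a bespoke winning condition involving mass-conserving transitions, bottom SCCs, substochastic iteration matrices, and spectral radii. This is essentially re-deriving the classical MDP almost-sure algorithm from scratch, and the places you flag as delicate---ruling out ``Zeno'' schedulers, bounding the per-block contraction uniformly, the completeness direction when $S_0\notin\mathcal W$---are precisely the complications that the reduction to a classical MDP absorbs for free. In particular, your completeness sketch (``exhibit a positive-probability family of branches confined to configurations outside $\mathcal W$ all of whose enabled transitions conserve non-$B$ mass'') is not obviously correct as stated: outside $\mathcal W$ the scheduler may still have leaking operations available, and you have not argued that every scheduler is forced into a mass-conserving end component with positive probability. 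The cleanest fix is exactly the paper's: once you have the finite support abstraction, hand the problem to the classical theory rather than rebuilding it.
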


A variant of Problem \ref{prob:reachPrSup} is the following:

\begin{prob}\label{prob:reachPrAnyRho}
    Given a qMDP $\M$, and an invariant subspace $B$, is there a scheduler $\s$, such that $\Pr^{\s}(\rho\vDash \Diamond B)=1$ for all initial states $\rho$?
\end{prob}

The difference between this problem and Problem \ref{prob:reachPrSup} is that the initial state is arbitrary in the former but it is fixed in the latter. It is worth noting that the counterparts of these two problems for classical MDPs are similar because they have only a finite number of states which can be checked one by one. However, the quantum versions are very different due to the fact that the state Hilbert space of a qMDP is a continuum. It is also worth carefully comparing this problem with Problem \ref{ques1}: scheduler $\s$ is given in the latter, whereas we want to find a special scheduler $\s$ in the former.

\begin{thm}\label{thm:supOneSchAny}
    For a given qMDP $\M=\langle\hs,Act,\mathbf{M}\rangle$ and an invariant subspace $B$ of $\M$, the following two statements are equivalent:
    \begin{enumerate}
        \item There exists a scheduler $\s$ such that $\Pr^\s(\rho\vDash\Diamond B)=1$ for all initial states $\rho$;
        \item There is no invariant subspace $C$ of $\M$ included in $B^\perp$.
    \end{enumerate}
    Furthermore, if there is no invariant subspace $C$ of $\M$ included in $B^\perp$, then there exists an optimal finite-memory scheduler $\s=s^\omega$ with $s\in Act^*$.
\end{thm}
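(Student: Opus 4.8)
The plan is to prove the two implications separately; the reverse direction, together with the explicit finite‑memory witness, is where the work lies. \emph{$(1)\Rightarrow(2)$} follows by contraposition: if $C\subseteq B^{\perp}$ is an invariant subspace of $\M$ and $\supp(\rho)\subseteq C$, then induction on $|w|$ (using invariance of $C$ under every $\E_{\alpha}$ and every $M\in\mathbf{M}$) gives $\supp(\rho_{w}^{\s})\subseteq C$ for every scheduler $\s$, so $\supp(\rho(n,\s))\subseteq C\subseteq B^{\perp}$ and $\Pr(\rho(n,\s)\vDash B)=0$ for all $n$; hence $\Pr^{\s}(\rho\vDash\Diamond B)=0$ for all $\s$, contradicting (1).

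For \emph{$(2)\Rightarrow(1)$}, I first fold measurements into actions: replace each $M=\{M_{1},\dots,M_{k}\}$ by the trace‑preserving super‑operator $\E_{M}(\rho)=\sum_{m}M_{m}\rho M_{m}^{\dag}$. Since $\supp(\E_{M}(\rho))=\sum_{m}\supp(M_{m}\rho M_{m}^{\dag})$, a subspace is invariant under $M$ iff under $\E_{M}$, so condition (2) is unaffected; and a scheduler that performs $M$ and discards its outcome realizes exactly $\E_{M}$ on the global states $\rho(n,\s)$. So we may assume $\mathbf{M}=\emptyset$, over the (still finite) alphabet $Act$ now possibly enlarged by the $\E_{M}$'s; the witness constructed below cycles through a fixed word over this alphabet, which in the original $\M$ means cycling through actions and measurements whose outcomes are ignored. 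For $s=\alpha_{1}\cdots\alpha_{n}$ set $\E_{s}=\E_{\alpha_{n}}\circ\cdots\circ\E_{\alpha_{1}}$ and $D_{s}:=\E_{s}^{\dag}(P_{B})-P_{B}$. Because $B$ is invariant, $\E_{s}^{\dag}(P_{B^{\perp}})$ is supported in $B^{\perp}$ and $\leq P_{B^{\perp}}$, which yields $0\leq D_{s}\leq P_{B^{\perp}}$ with $D_{s}$ supported in $B^{\perp}$, $\tr(P_{B}\E_{s}(\rho))=\tr(P_{B}\rho)+\tr(D_{s}\rho)$, and $D_{st}=D_{s}+\E_{s}^{\dag}(D_{t})\geq D_{s}$, so the kernels only shrink as words grow. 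Put
$$K:=\Big(\bigcap_{s\in Act^{*}}\ker(D_{s})\Big)\cap B^{\perp}=\{\,|\psi\>\in B^{\perp}:\tr(P_{B}\E_{s}(|\psi\>\<\psi|))=0\ \text{for all}\ s\,\}.$$
A routine computation (using $\E_{s}(|\psi\pm\phi\>\<\psi\pm\phi|)\geq 0$ together with $\E_{s}(|\psi+\phi\>\<\psi+\phi|)+\E_{s}(|\psi-\phi\>\<\psi-\phi|)=2\E_{s}(|\psi\>\<\psi|)+2\E_{s}(|\phi\>\<\phi|)$) shows $K$ is a subspace; and if $|\psi\>\in K$ then $\E_{\alpha}(|\psi\>\<\psi|)$ is supported in $B^{\perp}$ and, transporting the defining condition along the words $\alpha t$, every vector of its support lies in $K$, so $K$ is invariant under every action. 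Thus $K$ is an invariant subspace of $\M$ inside $B^{\perp}$, and (2) forces $K=\{0\}$.

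The key step is to extract a \emph{single} good word. Choose $s^{*}$ minimizing $\dim\ker(D_{s^{*}})$ (possible, since $\dim B\leq\dim\ker(D_{s})\leq d$). I claim $\ker(D_{s^{*}})=B$. Otherwise, as $D_{s^{*}}$ is supported in $B^{\perp}$ we have $\ker(D_{s^{*}})=B\oplus W$ with $0\neq W\subseteq B^{\perp}$; pick a unit $|\psi\>\in W$. Then $\tr(P_{B}\E_{s^{*}}(|\psi\>\<\psi|))=\langle\psi|D_{s^{*}}|\psi\>=0$, so $\E_{s^{*}}(|\psi\>\<\psi|)$ is a nonzero density operator supported in $B^{\perp}$; since $K=\{0\}$, some unit vector $|\psi'\>$ of its support has $\langle\psi'|D_{t}|\psi'\>>0$ for some $t$, whence $\tr(D_{t}\,\E_{s^{*}}(|\psi\>\<\psi|))>0$. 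But then $\langle\psi|D_{s^{*}t}|\psi\>=\langle\psi|\E_{s^{*}}^{\dag}(D_{t})|\psi\>=\tr(D_{t}\,\E_{s^{*}}(|\psi\>\<\psi|))>0$, so $|\psi\>\notin\ker(D_{s^{*}t})\subsetneq\ker(D_{s^{*}})$, contradicting minimality. Hence $D_{s^{*}}\geq\delta P_{B^{\perp}}$ with $\delta:=\lambda_{\min}(D_{s^{*}}|_{B^{\perp}})>0$. Now take $\s=(s^{*})^{\omega}$: for any initial $\rho$ the evolution is deterministic, $\rho((k+1)|s^{*}|,\s)=\E_{s^{*}}(\rho(k|s^{*}|,\s))$, so $q_{k}:=\tr(P_{B^{\perp}}\rho(k|s^{*}|,\s))$ satisfies $q_{k+1}=q_{k}-\tr(D_{s^{*}}\rho(k|s^{*}|,\s))\leq(1-\delta)q_{k}$, hence $q_{k}\to 0$. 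Since $B$ is invariant, $\Pr(\rho(n,\s)\vDash B)$ is non‑decreasing in $n$ by Theorem \ref{thm:PrInvInc}, so the limit exists (Lemma \ref{lem:InvPrLim}) and $\Pr^{\s}(\rho\vDash\Diamond B)=1$ for every $\rho$; as $\Pr^{\sup}(\rho\vDash\Diamond B)\leq 1$, this $\s$ is optimal for every initial state.

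The main obstacle is exactly this key step: converting the pointwise fact $K=\{0\}$ (``from every state of $B^{\perp}$ \emph{some} word reaches $B$'') into \emph{one} finite word that works uniformly on all of $B^{\perp}$; the minimal‑kernel‑dimension argument is the mechanism that produces the uniformity. A subsidiary point to get right is the measurement folding -- checking that discarding measurement outcomes is without loss for reaching an invariant subspace, which rests on the monotonicity of $\tr(P_{B}\,\cdot\,)$ and on $\supp(\E_{M}(\rho))=\sum_{m}\supp(M_{m}\rho M_{m}^{\dag})$.
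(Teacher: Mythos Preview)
Your proof is correct and follows essentially the same strategy as the paper: fold measurements into actions, find a single finite word $s$ with $X_s=\{0\}$ via a dimension--minimality argument, and take $\s=s^\omega$. Your operator $D_s=\E_s^{\dag}(P_B)-P_B$ is the dual object to the paper's $X_\E$ (indeed $\ker(D_s)\cap B^{\perp}=X_s$), so minimising $\dim\ker(D_{s^*})$ is exactly minimising $\dim X_{s^*}$.

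There are two genuine, if modest, differences worth noting. First, in the contradiction step the paper shows directly that if the minimal $X_s$ is nonzero then the orbit closure $\R_{\F}(\E_s(X_s))$ is an invariant subspace in $B^{\perp}$; you instead first prove that $K=\bigcap_s X_s$ is itself invariant (hence $\{0\}$ under (2)) and then use $K=\{0\}$ to extend $s^*$ to a word with strictly smaller kernel. These are two ways of packaging the same finiteness, and yours is arguably more transparent. Second, for the final convergence the paper invokes BSCC theory for quantum Markov chains from \cite{YFYY} (their Lemma~\ref{lem:XspaceEmptySch}), whereas you give a self--contained quantitative bound $q_{k+1}\le(1-\delta)q_k$ from $D_{s^*}\ge\delta P_{B^{\perp}}$. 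Your route avoids the external dependency and yields an explicit rate; the paper's route ties the result into the existing structural theory. One small point: the inference ``some $|\psi'\rangle\in\supp(\sigma)$ has $\langle\psi'|D_t|\psi'\rangle>0$, whence $\tr(D_t\sigma)>0$'' is fine but deserves one more word---e.g.\ $\sigma\ge\lambda_{\min}P_{\supp(\sigma)}\ge\lambda_{\min}|\psi'\rangle\langle\psi'|$, or equivalently the standard fact that for PSD $A,\sigma$ one has $\tr(A\sigma)=0\Leftrightarrow\supp(\sigma)\subseteq\ker A$.
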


Based on the above theorem, we develop Algorithm \ref{alg:findAScheduler} for checking existence of the optimal scheduler, of which the correctness and complexity are presented in the next theorem.

\begin{algorithm}
    \SetKwData{Left}{left}\SetKwData{This}{this}\SetKwData{Up}{up}
    \SetKwFunction{Union}{Union}\SetKwFunction{FindCompress}{FindCompress}
    \SetKwInOut{Input}{input}\SetKwInOut{Output}{output}

    \Input{A quantum Markov decision provess $\M$, the Hilbert space $\hs$, a subspace $B\subsetneq \hs$}
    \Output{A string $s\in (Act\cup \mem)^*$}
    * $s=\epsilon$ means no such scheduler.*\\
    \Begin{
        $s=\epsilon$\;
        $Act'\la Act$\;
        $E\la \E_i$, for all $i\in Act$\;
        $t \la |Act'|$\;
        \For{any $M\in\mem$}{
            $t\la t+1$\;
            $\E_t \la \sum M_i\cdot M_i^\dag$, for all measurement operators $M_i$ of $M$\;
            $E\la E\cup \{\E_t\}$\;
            $Act'\la Act'\cup\{t\}$\;
        }
        $\F=\frac{1}{t}\sum_{\E_i\in E} \E_i$\;
        \If{$\F(I_B)\not\subseteq B$}{
            \Return{$s$}\;
        }
        $\mathcal{G}\la \F|_{B^\perp}$\;
        $N \la \text{~null space of~} \mathcal{G}(x)-x=0$\;
        \If{$N!=\{0\}$}{
            \Return{$s$}\;
        }

        $d\la \dim(\hs)$\;
        $S \la \cup_{i=1}^{d-1}Act'^i$\;
        $T\la B^\perp$\;
        $b\la \dim(T)$\;

        \While{$b>0$}{
            \For{$v\in S$}{
                $w\la s\cdot v$\;
                $Y \la (\E_w^*(T^\perp))^\perp$\;
                \If{$\dim(Y)<b$}{
                    $b\la \dim(Y)$\;
                    $s\la w$\;
                    break\;
                }
            }
        }
        \Return{$s$}\;
    }
    \caption{Find an optimal scheduler}\label{alg:findAScheduler}
\end{algorithm}

\begin{thm}\label{thm:lag-correction}\begin{enumerate}\item Algorithm \ref{alg:findAScheduler} returns $\epsilon$ if there is no such a scheduler, otherwise it returns $s\neq \epsilon$ such that $\s=s^\omega$ is an optimal scheduler.
\item The time complexity of Algorithm \ref{alg:findAScheduler} is $O(d^9t^d)$, where $d=\dim \hs$ and $t=|Act|+|\mem|$.\end{enumerate}
\end{thm}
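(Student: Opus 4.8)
The plan is to establish correctness (part 1) by treating the two early returns, which detect that no scheduler works, separately from the main loop, which constructs one, and then to settle the running time (part 2) by a routine operation count.

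For the early returns I would first observe that a subspace is invariant under a finite family of completely positive maps exactly when it is invariant under any strict convex combination of them (the support of a sum of positive operators being the span of the supports), and that replacing each measurement $M$ by its trace-out super-operator $\sum_i M_i\cdot M_i^\dag$ is legitimate here because a periodic scheduler $s^\omega$ never adapts to measurement outcomes, so after a measurement step the outcome-weighted mixture is exactly $\sum_i M_i\rho M_i^\dag$; in particular, writing $\F_w$ for the composition of the action and measurement trace-out super-operators along a word $w$, the global state after $n$ passes of $s$ is $\F_s^n(\rho)$. Consequently ``$\F(I_B)\subseteq B$'' tests precisely whether $B$ is an invariant subspace of $\M$, and I would show that the $1$-eigenspace $N$ of $\mathcal G=\F|_{B^\perp}$ is nonzero iff $B^\perp$ contains an invariant subspace of $\M$: from such a subspace $C$ the restriction $\F|_C$ is trace-preserving, so it has a fixed state, which is a nonzero fixed point of $\mathcal G$; conversely a nonzero fixed point of $\mathcal G$ yields, by standard CP-map theory (cf.\ \cite{YFYY,Y13}), a positive semidefinite one $\rho^\ast$, and trace preservation of $\F$ forces $\F(\rho^\ast)=\rho^\ast$ with $\supp\rho^\ast\subseteq B^\perp$, making $\supp\rho^\ast$ the desired invariant subspace. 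By Theorem \ref{thm:supOneSchAny} the two early returns thus output $\epsilon$ exactly when no scheduler with $\Pr^\s(\rho\vDash\Diamond B)=1$ for all $\rho$ exists.

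For the main loop, write $\E_w^\ast$ for the (unital) dual of $\F_w$; then the subspace $Y=(\E_w^\ast(B))^\perp$ computed in the loop is the set of pure initial states from which one pass of $w$ reaches $B$ with probability $0$, and since $B$ is invariant this set can only shrink as $w$ is extended (Theorem \ref{thm:PrInvInc}), so tracking $\dim Y$ suffices. Two facts then finish correctness. First, if the loop exits with $b=0$, then $\E_s^\ast(I_B)$ is positive definite; using unitality of $\E_s^\ast$ and invariance of $B$ one gets $\E_s^\ast(I_B)=I_B+R$ with $R\ge\delta\,I_{B^\perp}$ for some $\delta>0$, hence $\tr(I_{B^\perp}\F_s(\sigma))\le(1-\delta)\tr(I_{B^\perp}\sigma)$ for every state $\sigma$, so $\tr(I_B\F_s^n(\rho))\to1$ for every $\rho$ and, by Lemma \ref{lem:InvPrLim}, $\Pr^{s^\omega}(\rho\vDash\Diamond B)=1$, i.e.\ $s^\omega$ is optimal. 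Second, if $b>0$ and $B^\perp$ contains no invariant subspace, the inner loop always finds $v\in S$ with $\dim Y_{sv}<\dim Y_s$; this is the crux.

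I would prove this last fact by contrapositive: suppose $\dim Y_{sv}=\dim Y_s=b>0$ (hence $Y_{sv}=Y_s$, since $Y_{sv}\subseteq Y_s$ always) for every $v$ of length $\le d-1$. Let $\sigma_0=\F_s(\tau)$ with $\tau$ the maximally mixed state on $Y_s$; then $\supp\sigma_0$ is nonzero and contained in $B^\perp$ (as $Y_s=\ker\E_s^\ast(I_B)$), and the supposition forces $\supp\F_v(\sigma)\subseteq B^\perp$ for every $v$ with $|v|\le d-1$ and every $\sigma\ge0$ supported in $\supp\sigma_0$. The increasing chain $R^{(k)}=\spans\{\supp\F_v(\sigma_0):|v|\le k\}$ in the $d$-dimensional space $\hs$ stabilizes by $k=d-1$, so $R^{(d-1)}=R^{(d)}$; this makes $R^{(d-1)}$ invariant under every action and measurement super-operator, i.e.\ an invariant subspace of $\M$, and it is nonzero and lies in $B^\perp$, contradicting the hypothesis. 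This guarantees progress and shows the outer loop runs at most $\dim B^\perp\le d$ times with each appended block of length $\le d-1$, so $|s|=O(d^2)$ throughout. For part 2, the two initial tests are dominated by the $O(d^2)\times O(d^2)$ null-space solve, costing $O(d^6)$; the search set has $|S|=\sum_{i=1}^{d-1}t^i=O(t^d)$ words, and for each $v\in S$ we compose $O(d^2)$ super-operators (each a $d^2\times d^2$ matrix) to form $\E_w^\ast(I_B)$ in time $O(d^8)$ plus an $O(d^3)$ support computation, so the $O(d)$ outer iterations give $O(d^9 t^d)$ overall, as claimed. The genuinely delicate points are the trace-out/periodic-scheduler identification that underpins everything and the dimension count that pins the search length at $d-1$.
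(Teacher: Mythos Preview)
Your proof is correct and follows essentially the same route as the paper's: detect a non-invariant $B$ or an invariant subspace inside $B^\perp$ at the early returns, show the main loop must make progress because a stall would manufacture an invariant subspace of $\M$ inside $B^\perp$, and then count operations. The differences are only in presentation: you spell out the fixed-point justification for the null-space test and give a direct contraction argument ($\E_s^\ast(I_{B^\perp})\le(1-\delta)I_{B^\perp}$) for the $b=0$ exit, whereas the paper simply asserts the former and defers the latter to Lemma~\ref{lem:XspaceEmptySch} (which in turn cites BSCC results from \cite{YFYY}); your chain $R^{(k)}$ argument is Lemma~\ref{lem:XsSub} unwound, with the paper phrasing the same step as ``$X_{sv}=X_s$ for all $v$ forces $\E_s(X_s)\subseteq X_v\subseteq B^\perp$, hence an invariant subspace.''
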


We now consider another variant of Problem \ref{prob:reachPrSup}, where not only the initial state $\rho$ but also the scheduler $\s$ can be arbitrary.

\begin{prob}\label{prob:reachPrAnyRhoAnySch}
    Given a qMDP $\M$, and an invariant subspace $B$, is the reachability probability always 1, i.e. $\Pr^{\s}(\rho\vDash \Diamond B)=1$ for all initial states $\rho$ and all schedulers $\s$?
\end{prob}

For this problem, we only have an answer in a special case.

\begin{thm}\label{thm:AnySchSuperoperator}
    Let $\M$ be a qMDP with $\mem=\emptyset$ and $B$ an invariant subspace of $\hs$. Then
    $\Pr\nolimits^\s (\rho\vDash \Diamond B) = 1$ holds for all schedulers $\s$ and all initial states $\rho$ if and only if it holds for all initial states and all schedulers of the form $\s=s^\omega$ with $|s|\leq L_d$, where $L_d$ is inductively defined as follows:
    \begin{itemize}
      \item  $L_0=1$ and $K_0=k$, where $k=|Act|$.
      \item $L_{i+1}=(K_i+1)L_i$ and $K_{i+1} = k^{L_{i+1}}$ for any $i\geq 0$.
    \end{itemize}
\end{thm}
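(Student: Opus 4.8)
The "only if" direction is trivial since schedulers of the form $s^\omega$ with $|s|\le L_d$ form a subset of all schedulers, so I focus on the "if" direction. The key observation is that, since $\mem=\emptyset$, a scheduler is just an infinite word $\alpha_1\alpha_2\cdots\in Act^\omega$, and the global state at step $n$ is $\rho(n,\s)=(\E_{\alpha_n}\circ\cdots\circ\E_{\alpha_1})(\rho)$. So $\Pr^\s(\rho\vDash\Diamond B)=1$ for all $\rho$ iff $\supp((\E_{\alpha_n}\circ\cdots\circ\E_{\alpha_1})(I_\hs))$ converges into $B$, equivalently iff the nonincreasing sequence of subspaces $V_n^\s := \supp((\E_{\alpha_n}\circ\cdots\circ\E_{\alpha_1})(I_{B^\perp}))$, projected down by $P_{B^\perp}$, shrinks to $0$. (Here I use that $B$ is invariant, so whatever mass enters $B$ stays there and is "safe"; the only obstruction to reaching $B$ with probability $1$ is mass that stays forever in $B^\perp$.) Thus the plan is: (1) reduce the universally-quantified statement to a statement about a finite tree of subspaces of $B^\perp$; (2) show that the interesting structure of this tree is bounded in depth and branching by the $L_i,K_i$ recursion; (3) conclude that if the property fails for some scheduler, it already fails for some ultimately periodic scheduler $s^\omega$ with $|s|\le L_d$.

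The core combinatorial idea is a dimension-descent / pigeonhole argument over subspaces of $B^\perp$. Work inside $B^\perp$, which has dimension $d'\le d$. For a finite word $u\in Act^*$ write $\E_u=\E_{u_{|u|}}\circ\cdots\circ\E_{u_1}$ and $F(u)=\supp(P_{B^\perp}\,\E_u(I)\,P_{B^\perp})$, the "surviving subspace" after applying $u$. One shows: if $\Pr^\s(\rho\vDash\Diamond B)<1$ for some $\rho$ and some $\s$, then there is an infinite word all of whose prefixes $u$ have $F(u)\ne\{0\}$ — i.e. a "bad branch" — and conversely. Now stratify by dimension. At dimension level $d'$, after a bounded number $L_1$ of steps the surviving subspace either drops in dimension or we have seen $K_1=k^{L_1}$-many distinct length-$L_1$ continuations; in the latter case some pair of distinct states of a tracking automaton on the at-most-$K_1$ relevant subspaces must coincide, yielding a cycle — a periodic scheduler — on which the dimension never drops. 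Iterating this one level at a time (dimension $d'$, then $d'-1$, ...), the lengths multiply as $L_{i+1}=(K_i+1)L_i$ and the branching is recorded by $K_{i+1}=k^{L_{i+1}}$, exactly matching the statement; after at most $d$ such descents we either reach dimension $0$ (good) or have exhibited a periodic bad scheduler $s^\omega$ with $|s|\le L_d$. Since any bad scheduler gives a bad branch and the argument produces from any bad branch a bad \emph{periodic} one of the required length, the equivalence follows.

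Slightly more carefully, the inductive claim I would set up is: for each $i\le d$, if there is a word $w$ with $\dim F(w)\ge i$ and there is an infinite continuation keeping the surviving dimension $\ge i$ forever, then there is such a continuation that is ultimately periodic with period $\le L_i$; and if no such continuation exists, then the surviving dimension drops below $i$ within $L_i$ steps from $w$, uniformly. The passage from $i$ to $i+1$: consider the tree of words that maintain dimension $\ge i+1$; if it is finite its depth is bounded (by König plus the induction hypothesis applied at level $i$ to quotient spaces), giving the "drops within $L_{i+1}$" bound; if it is infinite, König gives an infinite branch, along which the surviving subspace takes values among the finitely many ($\le K_i$) subspaces trackable by length-$L_i$ words, so a repeat occurs within $L_{i+1}=(K_i+1)L_i$ steps, and the loop between the repeat is the desired period.

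The main obstacle I anticipate is making precise the "finitely many trackable subspaces" count so that it is genuinely bounded by $K_i=k^{L_i}$ and not by something larger: one must argue that the relevant state for deciding future dimension-drops is not the exact surviving subspace (a continuum) but a \emph{finite} datum — essentially the equivalence class of the current subspace under "same behaviour under all future words", and that this is refined by the finite information of which length-$L_i$ words have already been read, giving at most $k^{L_i}$ classes. Getting this bookkeeping right — in particular verifying that the surviving-subspace map $V\mapsto \supp(P_{B^\perp}\E_\alpha(P_V\cdot P_V)P_{B^\perp})$ is monotone and that monotonicity plus the dimension stratification lets the $L_i,K_i$ recursion close — is where the real work lies; the rest is König's lemma, pigeonhole, and the observation (from invariance of $B$ and trace preservation) that "not reaching $B$ with probability $1$" is equivalent to "a positive-dimensional subspace of $B^\perp$ survives forever".
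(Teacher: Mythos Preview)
Your high-level strategy---contrapositive: from a bad scheduler extract a short periodic bad one via dimension descent and pigeonhole---is in the right spirit, but the technical setup has a real gap, and the paper resolves it by tracking a different object and pigeonholing on different data.

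First, your equivalence ``$\Pr^\s(\rho\vDash\Diamond B)=1$ for all $\rho$ iff the surviving subspace $F(u_n)$ shrinks to $\{0\}$'' is false in the forward direction: the trace of the $B^\perp$-part can tend to $0$ while its support stays positive-dimensional forever (e.g.\ a single action that halves the mass in a one-dimensional $B^\perp$ at each step). This does not break the implication you actually need (bad $\Rightarrow$ all $F(u_n)\neq\{0\}$), but it shows that the image subspace $F(u)$ is the wrong invariant: it is not monotone under word extension, so your ``nonincreasing sequence of subspaces'' is not nonincreasing. The paper instead uses the \emph{preimage} subspace $X_s=\spans\bigl(\bigcup\{\supp(\rho):\tr(P_B\E_s(\rho))=0\}\bigr)$, for which $X_{sv}\subseteq X_s$ always holds (Lemma~\ref{lem:XsSub}); this monotonicity is what makes the dimension descent go through.

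Second, and more seriously, your pigeonhole is on subspaces---a continuum---and your proposed fix (that the relevant equivalence class is determined by the last $L_i$ letters read) is asserted without argument and is not how $K_i=k^{L_i}$ actually enters. The paper pigeonholes on \emph{words}: any word of length $L_{i+1}=(K_i+1)L_i$, cut into $K_i+1$ consecutive blocks of length $L_i$, contains two equal blocks because there are only $K_i=k^{L_i}$ possible blocks. This is packaged as a ``repetition degree'' $\rd(s)$ with $|s|\ge L_x\Rightarrow\rd(s)\ge x$ (Lemma~\ref{lem:rdLength}). The step you are missing is then the following dichotomy (Lemma~\ref{lem:rdXs}): whenever $s=a\cdot t\cdot b\cdot t\cdot c$ with a repeated subword $t=f\cdot u\cdot g$ and $\dim X_u$ already small, form the loop word $u\cdot g\cdot b\cdot f$; either $\dim X_{u\cdot g\cdot b\cdot f\cdot u}<\dim X_u$, or $X_u$ is invariant under $\E_{u\cdot g\cdot b\cdot f}$, making $(u\cdot g\cdot b\cdot f)^\omega$ a bad periodic scheduler of length $\le|s|$---contradicting the hypothesis. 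Iterating over $\rd(s)$ gives $\dim X_s=0$ for every $s$ of length $L_d$, whence a compactness argument yields a uniform positive probability of entering $B$ in each length-$L_d$ window and finishes the proof directly, with no appeal to K\"onig's lemma.
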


We can develop an algorithm to check whether $\Pr\nolimits^\s (\rho\vDash \Diamond B)=1$ holds for all initial states and all schedulers $\s$. By Theorem \ref{thm:AnySchSuperoperator}, we only need to examine all schedulers of the form $\s=s^\omega$ with $|s|\leq L_d$. There are totally $K_d$ such schedulers, and for each one, it costs at most $O(d^6L_d)$ arithmetic operations to check the conclusion. Thus, the complexity of the algorithm is $O(d^6L_dK_d)$.
For the special class of qMDPs considered in Theorem \ref{thm:supOneSp}, we can significantly reduce this complexity.

\begin{thm}\label{thm:AnySchSuperoperatorSp}
    Let $\M$, $B$ and $N$ be as in Theorem \ref{thm:supOneSp}. Then whether $\Pr\nolimits^\s (\rho\vDash \Diamond B) = 1$ holds for all schedulers $\s$ and all initial states $\rho$ can be decided in time $O(poly((|Act|+|\mem|)2^N))$ .
\end{thm}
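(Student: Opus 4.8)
The plan is to reduce the question to a combinatorial reachability/avoidance problem on a finite graph of size $O((|Act|+|\mem|)2^N)$, exploiting that all Kraus and measurement operators are rank one. First I would reduce ``for all initial states $\rho$'' to ``for all \emph{pure} initial states'': for a fixed scheduler $\s$ the map $\rho\mapsto\Pr^\s(\rho\vDash\Diamond B)$ is affine (each $p_w^\s\rho_w^\s$ is a fixed completely positive function of $\rho$), so its minimum over the convex set $\mathcal{D}(\hs)$ is attained at a pure state, and since the target probability is $1=\max$, it equals $1$ for all $\rho$ iff it equals $1$ for all pure $\rho$. Next, the rank-one assumption forces every pure state occurring in the branching tree after the first step to be one of the finitely many output vectors $|\varphi\>$ of the operators $A_{\alpha i}=a|\varphi\>\<\psi|$ and $M_{\beta j}=a|\varphi\>\<\psi|$, and more precisely: right after a measurement the state is a single such $|\varphi\>$, while right after an action $\alpha$ the support is the span of at most $N$ of $\alpha$'s output vectors. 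Hence from step one on the \emph{support} of the running state always has the form $\spans(S)$ where $S$ is a subset of the output vectors of one fixed action, or is a singleton output vector of one fixed measurement; this set of ``abstract states'' $\mathcal{A}$, together with a top element $\top$ standing for $\hs$, has size $|Act|\,2^N+|\mem|\,N+1=O((|Act|+|\mem|)2^N)$, and its transitions are computable by linear algebra: an action $\alpha$ sends $\spans(S)$ to $\spans\{|\varphi_{\alpha i}\>:|\psi_{\alpha i}\>\text{ not orthogonal to }\spans(S)\}$, and a measurement $M_\beta$ branches $\spans(S)$ into the singletons $\{|\varphi_{\beta m}\>\}$ for those $m$ with $|\psi_{\beta m}\>$ not orthogonal to $\spans(S)$.

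Using Theorem \ref{thm:PrInvInc} and Lemma \ref{lem:InvPrLim}, the reachability probability equals $1$ iff $q_\infty:=\lim_n q_n=0$, where $q_n$ is the non-increasing total probability mass of branches whose support is not yet contained in $B$ after $n$ steps (mass is never lost at an action, and is lost at a measurement only through outcomes whose output vector lies in $B$), and once a branch's support enters $B$ it stays there by invariance of $B$. I would then reduce ``for all pure $\rho$'' to ``starting from $\top$'': the predicate ``the scheduler can keep $q_n$ bounded away from $0$'' is monotone in the support (enlarging a support only enlarges the set of possible outcomes and of action images), so a generic pure initial state — whose first step reaches exactly the successors of $\top$ — is the worst case. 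Finally I would characterise \emph{failure} of the property by: there is a \emph{live trap} reachable from $\top$ in $\mathcal{A}$, a live trap being a nonempty set $R$ of abstract states all satisfying $\spans(S)\not\subseteq B$ such that from each state of $R$ either some action stays inside $R$, or some measurement has \emph{all} its possible outcomes live and inside $R$.

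The easy direction is that a reachable live trap yields a scheduler with $\Pr^\s(\rho\vDash\Diamond B)<1$: reach $R$ with positive probability and then only play moves that stay in $R$, which by definition leak no mass, so $q_\infty>0$. The converse — and this is the main obstacle — is that $q_\infty>0$ for some pure $\rho$ and scheduler $\s$ implies the existence of a reachable live trap. The delicate point is that actual measurement-outcome probabilities are \emph{not} bounded below (the weights on the fixed support vectors shrink multiplicatively over long runs of actions), so one cannot simply invoke the standard finite-MDP fact that positive escape probability is equivalent to a reachable ``live end component''. The resolution splits into three exhaustive cases, handled by a König/pigeonhole bookkeeping on the finite set $\mathcal{A}$: if $\s$ performs unboundedly many actions between consecutive ``leaky'' measurements (those having a possible outcome landing in $B$), some stretch of more than $|\mathcal{A}|$ consecutive action steps revisits an abstract state, exposing an action-only cycle of live states, which is itself a reachable live trap; if $\s$ performs at most $K$ actions between consecutive leaky measurements, then at each leaky measurement a uniformly positive fraction ($\geq c\delta^{K}$, with $c,\delta$ the least nonzero magnitudes of coefficients and of inner products) of the current live mass is absorbed into $B$, so infinitely many leaky measurements would force $q_n\to 0$; hence only finitely many leaky measurements occur, and the leak-free subtree after the last one has, as its set of abstract states, a reachable live trap.

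It remains to compute. Build $\mathcal{A}$, the liveness predicate, and all transitions — each entry a $\mathrm{poly}(d)$ linear-algebra test — then compute the largest live trap by a greatest-fixpoint iteration (start from all live states and repeatedly delete any state having neither an action into the current set nor a measurement all of whose possible outcomes are live and in the current set), and the set of abstract states reachable from $\top$ by breadth-first search; answer ``$\Pr^\s(\rho\vDash\Diamond B)=1$ for all $\rho$ and $\s$'' precisely when the largest live trap is disjoint from the reachable set (treating $B=\hs$ as a trivial ``yes''). Each step costs $\mathrm{poly}(|\mathcal{A}|)$, and a union of traps is a trap while a trap intersected with the reachable set is a reachable trap, so some state of the largest trap is reachable iff some reachable live trap exists; hence the total running time is $\mathrm{poly}(d)\cdot\mathrm{poly}((|Act|+|\mem|)2^N)=O(\mathrm{poly}((|Act|+|\mem|)2^N))$, as claimed. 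This reuses the abstract-state-space underlying Theorem \ref{thm:supOneSp}; the difference from Theorem \ref{thm:supOneSp}, and from the general Theorem \ref{thm:AnySchSuperoperator} with its tower-of-exponentials bound in $d$, is that here we solve a greatest-fixpoint ``trap-avoidance'' problem rather than a least-fixpoint reachability one.
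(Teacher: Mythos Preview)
There is a genuine gap: your equivalence ``reachability probability equals $1$ iff $q_\infty=0$'' is false, because a branch whose support is not contained in $B$ can still carry almost all of its mass inside $B$. Only the one-sided bound $1-\Pr(\rho(n,\s)\vDash B)\le q_n$ holds. Concretely, take $\hs=\spans\{|1\rangle,|2\rangle,|3\rangle\}$, $B=\spans\{|3\rangle\}$, $\mem=\emptyset$, $Act=\{a\}$ with rank-one Kraus operators $A_1=\tfrac{1}{\sqrt2}|1\rangle\langle 1|$, $A_2=\tfrac{1}{\sqrt2}|3\rangle\langle 1|$, $A_3=|3\rangle\langle 2|$, $A_4=|3\rangle\langle 3|$. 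Then $B$ is invariant, there is a single branch, and for any $\rho$ with $\langle 1|\rho|1\rangle>0$ the support of $\rho(n,\s)$ is $\spans\{|1\rangle,|3\rangle\}$ for every $n\ge 1$, so $q_n\equiv 1$; yet $\tr(P_{B^\perp}\rho(n,\s))=2^{-n}\langle 1|\rho|1\rangle\to 0$ and $\Pr^\s(\rho\vDash\Diamond B)=1$. Thus your ``easy direction'' already fails: the singleton $\{\spans\{|1\rangle,|3\rangle\}\}$ is a live trap by your definition ($\not\subseteq B$, fixed by $a$), reachable from $\top$, but the property nevertheless holds. The monotonicity reduction to $\top$ is unsound for the same reason --- enlarging the support can only increase the overlap with $B$, not help the adversary.

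The repair is to track the support of $P_{B^\perp}\rho P_{B^\perp}$ rather than of $\rho$: invariance of $B$ forces every rank-one operator $a|\varphi\rangle\langle\psi|$ with $|\varphi\rangle\notin B$ to have $|\psi\rangle\in B^\perp$, so the $B^\perp$-component evolves autonomously under rank-one operators $a\,P_{B^\perp}|\varphi\rangle\langle\psi|$ on $B^\perp$, and your finite abstract-state machinery goes through there with ``live'' meaning ``nonzero in $B^\perp$''. (In the example above the only such operator is $A_1$, whose restriction to $B^\perp$ is strictly contractive, so no trap survives.) Note that the paper's own proof sketch --- reuse the classical MDP of Theorem~\ref{thm:supOneSp} with target $B'=\{s_y:y\subseteq B\}$ and check $\Pr^{\s}(s_x\vDash\Diamond B')=1$ for all $s_x$ --- has the very same gap on this example: $s_{\spans\{|1\rangle,|3\rangle\}}$ is an end component outside $B'$, so the classical check answers ``no'' while the quantum answer is ``yes''.
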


To conclude this section, we point out a link from Problems \ref{prob:reachPrAnyRho} and \ref{prob:reachPrAnyRhoAnySch} to a long-standing problem in matrix analysis and control theory, namely the joint spectral radius problem \cite{Gurvits95, TB97, CM05}. For a given set of square matrices $\Sigma=\{A_i:i=1,\cdots,t\}$, the discrete linear inclusion of $\Sigma$ is defined to be the set
\begin{equation*}\begin{split}
   DLI(\Sigma)= \{x_n=A_{s_n}\cdots A_{s_1}x_0&: x_0\in\mathds{C}^d, n\geq 0\\ &{\rm and}\ A_{s_j}\in \Sigma\ (1\leq j\leq n)\}.
\end{split}\end{equation*}
The set $\Sigma$ is said to be absolutely asymptotically stable (AAS) if $\lim_{n\ra \infty} A_{s_n}\cdots A_{s_1}=0$ for any infinite sequences $A_{s_1}A_{s_2}...$ in $\Sigma$.
The joint spectral radius and lower spectral radius of $\Sigma$ are defined as
\begin{equation*}
    \bar{\varrho}(\Sigma) = \limsup_{k\ra\infty} \bar{\varrho}_k(\Sigma),\ \ \ \ \ \ \underline{\varrho}(\Sigma) = \liminf_{k\ra\infty} \underline{\varrho}_k(\Sigma)
\end{equation*} respectively,
where for every $k\geq 1$, $$\bar{\varrho}_k(\Sigma)=\sup\{\|A\|^{1/k}:A\in\Sigma^k\},$$ $$\underline{\varrho}_k(\Sigma)=\inf\{\|A\|^{1/k}:A\in\Sigma^k\}.$$
It is known \cite{Gurvits95,CM05} that $\Sigma$ is AAS if and only if the joint spectral radius  $\bar{\varrho}(\Sigma)<1$.
It was shown in \cite{TB97} that unless $P=NP$, there are no polynomial-time approximate algorithms for computing $\bar{\varrho}(\Sigma)$. The problem ``$\underline{\varrho}(\Sigma)<1$'' and ``$\bar{\varrho}(\Sigma)\leq 1$'' were proved to be undecidable in \cite{TB97, BT00}. However, the problem whether ``$\bar{\varrho}(\Sigma)<1$'' is decidable is still open although the notion of joint spectral radius was introduced more than fifty years ago \cite{RS60}.

\begin{thm}\label{lem:toRadius} Let $\M$ be a qMDP with $\mem=\emptyset$ and $B$ an invariant subspace of $\M$. For each $\alpha\in Act$, let $M_\alpha$ be the matrix representation of $P_T\E_\alpha(\cdot) P_T$, where $T=B^\perp$. We write $\Sigma_\M=\{M_\alpha: \alpha\in Act\}$. Then:
    \begin{enumerate}
      \item $\underline{\varrho}(\Sigma_\M)<1$ if and only if there exists a scheduler $\s$ such that for any initial state $\rho$, it holds that $\Pr^\s(\rho\vDash\Diamond B)=1$.

      \item $\bar{\varrho}(\Sigma_\M)<1$ if and only if for any scheduler $\s$ and any initial state $\rho$, it holds that $\Pr^\s(\rho\vDash\Diamond B)=1$.
    \end{enumerate}
\end{thm}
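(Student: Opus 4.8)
The plan is to translate both equivalences into statements about the matrix family $\Sigma_\M$ by first analysing, for a fixed scheduler, how the part of the state that has not yet reached $B$ evolves.

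\textbf{Step 1 (the off-target dynamics is autonomous).} Since $\mem=\emptyset$, a scheduler $\s$ is simply an infinite word $\alpha_1\alpha_2\cdots\in Act^{\omega}$, and $\rho(n,\s)=\E_{\alpha_n}\circ\cdots\circ\E_{\alpha_1}(\rho)$, a density operator of trace $1$. Writing $P_T,P_B$ for the projections onto $T=B^{\perp}$ and $B$, we have $\Pr(\rho(n,\s)\vDash B)=1-\tr(P_T\rho(n,\s))$. The key point is that invariance of $B$ forces each Kraus operator $E_i$ of each $\E_\alpha$ to preserve $B$: applying the definition of invariance to $|v\>\<v|$ with $|v\>\in B$ shows $E_i|v\>\in B$, so $E_iB\subseteq B$ and hence $P_TE_iP_B=0$. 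Consequently the $B$--$T$ cross terms of $\rho$ contribute nothing to $P_T\E_\alpha(\rho)P_T$, i.e. $P_T\E_\alpha(\rho)P_T=\F_\alpha(P_T\rho P_T)$ with $\F_\alpha(\cdot):=P_T\E_\alpha(\cdot)P_T$; inductively $P_T\rho(n,\s)P_T=\F_{\alpha_n}\circ\cdots\circ\F_{\alpha_1}(P_T\rho P_T)$, which in matrix-representation form is the vector representation of $P_T\rho P_T$ multiplied on the left by $M_{\alpha_n}\cdots M_{\alpha_1}$. Since the limit in Definition \ref{defi:supremum} exists here (Lemma \ref{lem:InvPrLim}),
\[ 1-\Pr\nolimits^{\s}(\rho\vDash\Diamond B)=\lim_{n\to\infty}\tr\big(\F_{\alpha_n}\circ\cdots\circ\F_{\alpha_1}(P_T\rho P_T)\big), \]
and as each operator on the right is positive semidefinite, its trace tends to $0$ iff the operator itself tends to $0$. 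Hence $\Pr\nolimits^{\s}(\rho\vDash\Diamond B)=1$ iff $M_{\alpha_n}\cdots M_{\alpha_1}v_\rho\to 0$, where $v_\rho$ is the vector representation of $P_T\rho P_T$.

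\textbf{Step 2 (part 2).} As $\rho$ ranges over $\mathcal{D}(\hs)$ the operators $P_T\rho P_T$ include all positive semidefinite operators on $T$, hence linearly span all operators on $T$ over $\mathbb{C}$; so ``$\Pr\nolimits^{\s}(\rho\vDash\Diamond B)=1$ for every $\rho$'' is equivalent to ``$M_{\alpha_n}\cdots M_{\alpha_1}\to 0$'' (and it even suffices to test the maximally mixed state, using monotonicity of completely positive maps). Quantifying over $\s$, the right-hand side of part 2 says precisely that every infinite product of matrices from $\Sigma_\M$ tends to $0$, i.e. that $\Sigma_\M$ is AAS, which by the cited result of \cite{Gurvits95,CM05} is equivalent to $\bar{\varrho}(\Sigma_\M)<1$.

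\textbf{Step 3 (part 1).} By Steps 1--2 the right-hand side of part 1 says there is an infinite word with $M_{\alpha_n}\cdots M_{\alpha_1}\to 0$, and I would show this is equivalent to $\underline{\varrho}(\Sigma_\M)<1$ by an elementary two-way argument. If such a word exists, some finite prefix $P=M_{\alpha_m}\cdots M_{\alpha_1}$ satisfies $\|P\|<1$, whence $\underline{\varrho}_{jm}(\Sigma_\M)\le\|P^{j}\|^{1/(jm)}\le\|P\|^{1/m}<1$ for all $j$, so $\underline{\varrho}(\Sigma_\M)=\liminf_k\underline{\varrho}_k(\Sigma_\M)<1$. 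Conversely, if $\underline{\varrho}(\Sigma_\M)<1$ then $\underline{\varrho}_k(\Sigma_\M)<1$ for some $k$, giving a length-$k$ product $P$ with $\|P\|<1$; the periodic word repeating those $k$ letters produces products that, up to a bounded factor from the incomplete last block, equal powers of $P$ and hence tend to $0$. The same periodic word supplies the finite-memory optimal scheduler $\s=s^{\omega}$.

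\textbf{Main obstacle.} The substantive point is Step 1: one must verify that invariance of $B$ makes the Kraus operators preserve $B$, so the cross terms drop out and the truncated evolution $P_T\rho(n,\s)P_T$ is governed by the products $M_{\alpha_n}\cdots M_{\alpha_1}$ alone — without this the link to $\Sigma_\M$ breaks. The remaining care is the routine passage from ``all $\rho$'' to a finite spanning set in Step 2, and the easy-to-misstate equivalence in Step 3 among $\underline{\varrho}<1$, existence of a finite product of norm $<1$, and existence of a trajectory tending to $0$.
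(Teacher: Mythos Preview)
Your proposal is correct and follows essentially the same route as the paper: the paper first proves a lemma (your Steps~1--2) that, using the block lower-triangular form of the Kraus operators forced by invariance of $B$, identifies ``$\Pr^\s(\rho\vDash\Diamond B)=1$ for all $\rho$'' with $\lim_n M_{\alpha_n}\cdots M_{\alpha_1}=0$, and then (your Step~3) links this to $\underline{\varrho}$ and $\bar{\varrho}$ exactly as you do, invoking the AAS $\Leftrightarrow \bar{\varrho}<1$ result for part~2. Your Step~3 argument for part~1 is in fact a bit tighter than the paper's (you explicitly extract a single length-$k$ product with norm $<1$ and build the periodic scheduler from it), but the underlying idea is the same.
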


\section{Finite-Horizon Problems}\label{PrFH}

In this section, we prove the theorems for finite-horizon stated in Subsection \ref{StatFH}.

\subsection{Proof of Theorem \ref{thm:cutpoint}}
We prove this theorem by an easy reduction from the emptiness problem of cut-point languages for probabilistic finite automata (PFA) to Problem \ref{prob:cutpoint}.
For a given MO-1gQFA $\M'=\<\hs,\Sigma,\rho_0,\{\E_\sigma\}_{\sigma\in\Sigma},P_{acc}\>$ \cite{Hirvensalo}, we can construct a qMDP $\M=\<\hs,Act,\mem\>$ such that $Act=\Sigma$, and $\mem=\Omega=\emptyset$. Let $B=\supp(P_{acc})$. Then these exist $\s$ and $n$ such that $\Pr(\rho(n,\s)\vDash B) \triangle p$ if and only if there exists a word $\sigma_1\cdots\sigma_n$ such that $\tr(P_{acc}\E_{\sigma_n}\circ\cdots\circ\E_{\sigma_1}(\rho_0))\triangle p$.
Since MO-1gQFA can simulate any PFA \cite{Hirvensalo} and the emptiness problem for PFA is undecidable \cite{BJKP05}, Problem \ref{prob:cutpoint} is undecidable too.

\subsection{Proof of Theorem \ref{thm:undebabilitySpaceFiniteAnyInv}}

 Our proof technique is a reduction from the matrix mortality problem to Problem \ref{prob:spaceFinite}. The matrix mortality problem can be simply stated as follows: \begin{itemize}\item Given a finite set of matrices  $G=\{M_i\in\mathds{Z}^{n\times n}: i\in\{1,2,\cdots,k\}\}$, is there any sequence $j_1,\cdots,j_m$ such that $M_{j_m}M_{j_{m-1}}\cdots M_{j_1}=0$?\end{itemize}
It is known \cite[Theorem 3.2]{Halava97} that the matrix mortality problem is undecidable for $k=2$.

We now prove Theorem \ref{thm:undebabilitySpaceFiniteAnyInv}. For a set $G$ of matrices as above, we construct a qMDP $\M=\langle\hs,Act,\mem\rangle$ from it as follows:
    \begin{itemize}\item The state space is $\hs=\spans\{|1\>,\cdots,|2n\>\}$.
      \item Let $Act=\{1,2,\cdots,n\}.$ For each $i\in Act$, we construct a super-operator $\E_i$ from $M_i$:$$\E_i(\rho) = A_i\rho A_i^\dag+B_i\rho B_i^\dag+C_i\rho C_i^\dag,$$ where
          $$A_i=\begin{pmatrix}
                a_i & 0\\
                0 & 0
          \end{pmatrix},
          B_i=\begin{pmatrix}
                0 & 0\\
                0 & I_{n\times n}
          \end{pmatrix},
          C_i=\begin{pmatrix}
                0 & 0\\
                c_i & 0
          \end{pmatrix},
          $$
          and $$a_i = \frac{1}{r_i}M_i, \ \ \ c_i=\sqrt{I-M_i^\dag M_i/r_i^2}.$$ In the defining equation of $a_i$, $r_i$ is a positive integer such that $I-M_i^\dag M_i/r_i^2\geq 0$.\item $\mem=\emptyset$.\end{itemize}
Now, it is easy to show that for any state
          $$\sigma=\begin{pmatrix}
                \sigma_a & *\\
                * & \sigma_b
          \end{pmatrix},$$ we have
          $$
          \E_i(\sigma) = \begin{pmatrix}
                a_i\sigma_a a_i^\dag & 0\\
                0 & \sigma_b+c_i\sigma_a c_i^\dag
          \end{pmatrix}.
          $$
          Therefore, for any initial state $$\rho_0=\begin{pmatrix}
                \rho_a & *\\
                * & \rho_b
          \end{pmatrix},$$ it holds that $$\rho(m,\s)=\begin{pmatrix}
                A\rho_a A^\dag & 0\\
                0 & *
          \end{pmatrix},$$ where $A = a_{j_m}\cdots a_{j_1}$. Now let $B = \spans\{|n+1\>,\cdots,|2n\>\}$.
          Then
          \begin{align*}
                \forall \rho_0, \exists n, \s ~s.t. &~ \rho(n,\s)\subseteq B\\
                \Leftrightarrow & \exists j_m,\cdots,j_1 ~s.t.~ a_{j_m}\cdots a_{j_1}=0\\
                \Leftrightarrow & \exists j_m,\cdots,j_1 ~s.t. ~ M_{j_m}\cdots M_{j_1}/(r_{j_m}\cdots r_{j_1})^2=0\\
                \Leftrightarrow & \exists j_m,\cdots,j_1 ~s.t. ~ M_{j_m}\cdots M_{j_1}=0.
          \end{align*}
          Since the matrix mortality problem is undecidable for $k = 2$, Problem \ref{prob:spaceFinite}.1 with $\mem=\emptyset$ and $B$ invariant is undecidable for dimension $|Act|+|\mem|\geq 2$.

Note in the above reduction, $A\rho_a A^\dag$ will always be rational, if $\rho_0$ is rational. Since for any $\sigma\geq 0$, $\supp(\sigma)\subseteq B$ holds if and only if $\sigma_a=0$, we only compute the upper left corner and leave $c_i$ as a symbol in the lower right corner when computing $\rho(m,\s)$. (There are at most $O(m)$ $c_i$'s.) Thus this reduction does not employ any operation on irrational numbers.

The reduction still works if we fix the initial state $\rho_0$ to be $I/2n$, which is one special case of Problem \ref{prob:spaceFinite}.2. Therefore Problem \ref{prob:spaceFinite}.2 is undecidable too.

\section{Infinite-Horizon Problems}\label{PrIFH}

In this section, we prove the theorems for infinite-horizon stated in Subsection \ref{StatIH}.

\subsection{Proof of Theorem \ref{thm:UndecBnotInv}}

This theorem can be proved by reduction from the value 1 problem of probabilistic automata on finite words in \cite{GO10}. The value 1 problem asks whether $\sup_{w\in \Sigma^*}\Pr(q_0\overset{w}{\ra} F)=1$ for a probabilistic finite automaton, where $q_0$ is the initial state, $F$ is the set of accept states and $w$ is a finite word over the input symbols $\Sigma$. We can reduce this automaton to a qMDP with $\mem=\emptyset$, $Act=\Sigma$, $\rho_0=|q_0\>\<q_0|$ and $B=\spans\{|q\>:q\in F\}$. The reduction technique is the similar as in the proof of Theorem \ref{thm:cutpoint}. Thus we have
\begin{equation}\label{eq:UndecBnotInv}
    \sup_{s\in Act^*}\Pr\nolimits^{s^\omega}(\rho(|s|,s^\omega)\vDash B)=1
\end{equation}
is undecidable. Since $\mem=\emptyset$, all schedulers are of form $\s=s^\omega$ or $\s\in Act^\omega$. Therefore equation \eqref{eq:UndecBnotInv} is equivalent to
\begin{equation*}
    \sup_{\s}\limsup_{n\ra\infty}\Pr\nolimits^{\s}(\rho(n,\s)\vDash B)=1.
\end{equation*}
This completes the proof.

\subsection{Proof of Theorem \ref{thm:exphard}}
We prove part 1 of the theorem by a reduction from an EXPTIME-complete game in \cite{SC79} to the problem of deciding whether $\Pr^{\sup}(\rho_0\vDash \Diamond B)=1$.
Some ideas are similar to those used in \cite{PT87,CDH10}.
\begin{itemize}
      \item  The game is a two-player game on a propositional formula $F(X,Y)$ in the conjunctive normal form (CNF). Player 1(resp. 2) changes \textit{at most one} variable in X (resp. Y) at each move, alternately. Once $F$ becomes \textit{true}, Player 1 wins.
    \end{itemize}

    It is known \cite{SC79} that the following problem is EXPTIME-complete: given an input string $w$ encoding a position of this game, decide whether Play 1 has a strategy to win definitely, where a position is a tuple $(\tau,F(X,Y),\alpha)$, where $\tau\in\{1,2\}$ denotes the current player, $F$ is a formula, and $\alpha$ is an assignment.

    Now we start to construct the reduction. Let $X=\{x_1,\cdots,x_n\}$, $Y=\{y_1,\cdots,y_n\}$, $\alpha\in \{0,1\}^{n+m}$ and $F=\bigwedge_{i=1}^{c} C_i$, where $C_i=\vee_{j=1}^{k} z_{i,j}$, and $z_{i,j}$ is one of $x_t,\neg x_t,y_t,\neg y_t$ for some $t$. We define a qMDP as follows:

    \textbf{State space}. The state space $\hs=\hs_S\otimes \hs_C \otimes \hs_F \otimes \hs_R$, where $\hs_S=\hs_2^{\otimes (n+m)}$, $\hs_C=\hs_{k+1}^{\otimes c}$, $\hs_F=\hs_{c+1}$, $\hs_R=\hs_{m+2}$, where $\hs_{i} =\spans\{|0\>,\cdots,|i\>\}$. The intuition behind the definition of these spaces is:
    \begin{itemize}
      \item $\hs_S$ encodes the assignment $\alpha$;
      \item $\hs_C$ is the work space for clauses;
      \item $\hs_F$ is the work space for the formula;
      \item $\hs_R$ encodes the randomness of Player 2's choice.
    \end{itemize}

    \textbf{Initial state}. The initial state is
    $|\psi_0\>=|\alpha(x_1)\>\cdots|\alpha(y_m)\>$ $|0_C\>|0_F\>|0_R\>$. We will see that the state of the system can always be represented in such a separable form during the computation of this qMDP.

    \textbf{$S_1$: Unitary operators for modelling actions by Player 1}. Since Player 1 can change at most 1 valuable, there are $n+1$ choices/actions:
    \begin{itemize}
      \item Do nothing: this can be described by the identity operator $I$;
      \item Change the $i$-th valuable $x_i$: this can be realised by the NOT gate $X=|0\>\<1|+|1\>\<0|$ operator on $i$-th space of $\hs_S$, i.e., $U_{1,i} = U_{x_i} \otimes I_{C} \otimes I_F \otimes I_R$, where
        \begin{equation*}
            U_{x_i} = I_{S,1}\otimes \cdots \otimes I_{S,i-1} \otimes X \otimes I_{S,i+1}\otimes\cdots\otimes I_{S,n+m}.
        \end{equation*}
    \end{itemize}
All these operators can be represented in this form using space $O(n(n+m+c+k))$.


    \textbf{$S_2$: Randomness of Player 2's choice}. First we split the state $|0\>\<0|$ in $\hs_R$ into $\frac{1}{\sqrt{m+1}}\sum_{i=1}^{m+1} |i\>$ by a unitary $$U_R=I_S\otimes I_C \otimes I_F\otimes ( \frac{1}{\sqrt{m+1}}\sum_{i=1}^{m+1} |i\>\<0|+\cdots ).$$ Then we apply
    \begin{align*}
        U_2= \sum_{i=1}^{m} U_{y,i}\otimes I_{C} &\otimes I_F \otimes |i\>\<i|+ I_S\otimes I_{C} \otimes I_F  \otimes\\ &|m+1\>\<m+1|
        + I_S\otimes I_{C} \otimes I_F  \otimes |0\>\<0|.
    \end{align*}
    At last, we apply a measurement $M_R=\{M_{R,i}:i=0,\cdots,m+1\}$, where
$M_{R,i} = I_S\otimes I_{C} \otimes I_F  \otimes |0\>\<i|.$
These step can be encoded in space $O(m^2 (n+m+c+k))$.

    \textbf{$S_C$: Checking the formula}. This can be done by the following steps:
    \begin{enumerate}
      \item First, we check each clause. A clause $C_i=\vee_{j=1}^{k} z_{i,j}$ is checked via each of its literals. For instance, if $z_{i,j}$ is $x_t$, we apply
          \begin{align*}
                U_{z,i,j} &= I_{S,1}\otimes \cdots \otimes I_{S,t-1} \otimes |1\>\<1| \otimes I_{S,t+1}\otimes\cdots\otimes I_{S,n+m}\\
                &\otimes U_{shift,i}\otimes I_F\otimes I_{R}\\
                &+I_{S,1}\otimes \cdots \otimes I_{S,t-1} \otimes |0\>\<0| \otimes I_{S,t+1}\otimes\cdots\otimes I_{S,n+m}\\
                &\otimes I_C\otimes I_F\otimes I_{R},
          \end{align*}
          where $U_{shift,i} =I_{C,1}\otimes\cdots\otimes I_{C,i-1}\otimes(\sum_{v=0}^{k-1} |v+1\>\<v|+|0\>\<k|)\otimes I_{C,t+1}\otimes\cdots\otimes I_{C,c}$ is the shift operator on subspace $\hs_{C_i}$. The case of $z_{i,j}$ being $\neg x_t, y_t,\neg y_t$ is similar. This step means that $z_{i,j}$ is true, and we shift one level in $\hs_{C_i}$.
      \item Second, we compute the value of the whole formula. This is similar the first step. If the state is $\hs_{C_i}$ is shifted at least once; that is, it is not $|0\>\<0|$, then we shift $\hs_F$ once.
      \item Third, we take a projective measurement $P_F=\{P_1,P_0\}$ on $\hs_F$, where
 $P_1 = I_S\otimes I_C \otimes |c\>\<c| \otimes I_R$ represents the fact that all $c$ clauses are true, i.e. $F$ is true, and
            \begin{equation*}
                P_0 = I_S\otimes I_C \otimes \sum_{i=0}^{c-1}|i\>\<i| \otimes I_R,
            \end{equation*}
            indicates that $F$ is false. If the outcome is $1$, we terminate.
      \item Forth, we undo the first two steps if the result is false. Let $U$ denote the unitary operator of the first two steps. If the projective measurement gives result $0$, the state remains unchanged because of the separable form of the initial state. Thus we can apply $U^\dag$ to undo $U$.
    \end{enumerate}
The above four steps can be represented in space $O(k^2 c(n+m+c+k))$.

    \textbf{Schedulers}. If the input $\tau=1$, i.e. Player 1 first moves, then we execute sequence $(S_1 S_C S_2 S_C)^\omega$ of steps; otherwise $(S_2 S_C S_1 S_C)^\omega$. This is realised by the mapping $Q$ in Definition \ref{defi:qmdp}. The decision is made in step $S_1$ (Player 1's turn).

    \textbf{Target and reachability probability}. The target is to reach the outcome $1$; that is, $P_1$ appears in $S_C$. Because of the separable form of the initial state, the state of the system is of the form $|\psi\>=|\alpha'(x_1)\>\cdots|\alpha'(y_m)\>|0_C\>|0_F\>|0_R\>$ after each step. Thus any step can be computed in polynomial time of $n,m,c,k$. Therefore, this is a polynomial time reduction. Furthermore, it is easy to see that Player 1 has a \textquotedblleft forced win\textquotedblright\ strategy if and only if there is a scheduler (for decisions in step $S_1$) with reachability probability is 1.

    \textit{Remark}: The target space $B$ may not be invariant. But we can easily modify the space $\hs_F$ so that $B$ becomes invariant. What we need to do is:
    \begin{itemize}
      \item extend $\hs_F$ to $k+2$ level space;
      \item change $P_1$ to $|k+1\>\<k|$, and add $P_2=|k+1\>\<k+1|$;
      \item make all unitary operators to be a controlled operator by $\hs_F$.
    \end{itemize}
    After the modification, the system state remains unchanged in each decision branching unless it reaches the target.

    We now turn to prove part 2 of the theorem; that is, $\Pr^{\sup}(\rho_0\vDash \Diamond B)$ is uncomputable.
    This can be done simply by a reduction from probabilistic automata on infinite words. In \cite{CH10}, it was shown that the following quantitative value problem is undecidable: for any $\epsilon>0$, does there exist a word $w$ such that the reachability probability in acceptance absorbing automata is greater than $\lambda-\epsilon$, for a given rational number $0<\lambda<1$.
 We reduce this problem to the supremum reachability problem for qMDPs. The reduction technique is similar to the proofs of Theorems \ref{thm:cutpoint} and \ref{thm:UndecBnotInv}. Since the automata are acceptance absorbing, $B$ is invariant. Thus, it is undecidable whether there exists $\s\in Act^\omega$, such that $\Pr^{\s}(\rho_0\vDash\Diamond B)>\lambda-\epsilon$. Since this is equivalent to decide $\sup_\s\Pr^\s(\rho_0\vDash \Diamond B)=\lambda$, we complete the proof.

\subsection{Proof of Theorem \ref{thm:supOneSp}}\label{sec:proofThmSupOneSp}

By the assumption, $\E_\alpha$ can be written as $\E_\alpha(\rho)=\sum_{i=1}^{n_\alpha} a_{\alpha,i}A_{\alpha,i}\rho$ $A_{\alpha,i}^\dag$, where $A_{\alpha,i}=a_{\alpha,i}|\varphi_{\alpha,i}\>\<\psi_{\alpha,i}|$. Then for any state $\rho$, we have $\E_\alpha(\rho)=\sum_i c_{\alpha,i}|\varphi_{\alpha,i}\>\<\varphi_{\alpha,i}|$ for some $c_{\alpha,i}
\geq 0$.
Define $Y(\alpha,\rho)\triangleq\supp(\E_\alpha(\rho))=\spans\{|\varphi_{\alpha,i}\>:c_{\alpha,i}>0\}$. It is easy to see that there are at most $2^{n_\alpha}$ different $Y(\alpha,\rho)$'s ranging over all $\rho$ for an given $\alpha$. Then the total number of $Y(\alpha,\rho)$'s with all actions $\alpha$ is at most $|Act|2^N$.
Similarly, we define $Z(\beta,j,\rho)\triangleq\supp(M_{\beta j}\rho M_{\beta j}^\dag)$. If probability $\tr(M_{\beta j}\rho M_{\beta j}^\dag)>0$, then $Z(\beta,j,\rho)$ $=\spans\{|\varphi_{\beta j}\>\}$. Otherwise it equals $\{0\}$. The total number of $Z(\beta,i,\rho)$'s is at most $|\mem|N$.
Thus there are at most $(|Act|+|\mem|)2^N$ possible different supports of resulting states. Let $Y$ to be the set of all these supports. Now we reduce this problem to the supremum-1 reachability problem of a classical Markov decision process $\M'=\<S,Act',T,s_0\>$:
\begin{itemize}
  \item each state corresponds to a possible support, i.e. $S=\{s_y:y\in Y\}\cup \{s_0\}$;
  \item $s_0 = s_{\{\supp(\rho_0)\}}$;
  \item $Act'= Act\cup\{\beta:M_\beta\in\mem\}$;
  \item for each $\alpha\in Act$, the transition function $T$ maps $s_x$ to $s_y$ with probability 1, where $\E_\alpha(x)=y$;
  \item for each $M_\beta\in\mem$, $T$ maps $s_x$ to $s_y$ with probability $1/l(\beta,x)$, where $y\in \{z: z=Z(\beta,i,x)\}$ and $l(\beta,y)$ is the number of elements in this set;
  \item the target states $B'=\{s_y:y\subseteq B \wedge y\in Y\}$.
\end{itemize}
For this classical Markov decision process, it is known \cite{BK08} that there is an optimal memoryless scheduler $\s_0$ such that $$P_{\max}\triangleq\Pr\nolimits^{\s_0}(s_0\vDash\Diamond B')=\Pr\nolimits^{\sup}(s_0\vDash\Diamond B').$$
If $P_{\max}=1$, then $\s_0$ can be converted to a scheduler of $\M$, whose decisions are based on supports of states. We immediately have $\Pr\nolimits^{\sup}(\rho_0\vDash\Diamond B)=1$.
Conversely, if $\Pr\nolimits^{\sup}(\rho_0\vDash\Diamond B)=1$, then for any $\epsilon>0$ there exists a history-dependent scheduler $\s_\epsilon$ convertible to that of $\M'$ such that $\Pr\nolimits^{\s_\epsilon}(s_0\vDash\Diamond B')>1-\epsilon$. Thus $P_{\max}=1$. This completes the reduction. The proof is finished by the fact from \cite{BK08} that the maximum reachability of a classical MDP can be solved in polynomial time of the size of $\M'$.

\subsection{Proofs of Theorems \ref{thm:supOneSchAny} and \ref{thm:lag-correction}}\label{sec:proofsProblemAnyRho}

We first present several technical lemmas. For a super-operator $\E$, we define:
\begin{equation}\label{subspace-def}
    X_\E = \spans \left(\bigcup \{\supp(\rho):\tr(P_B\E(\rho))=0\}\right).
\end{equation}
Since $B$ is invariant, $X_\E$ is obviously a subspace of $B^\perp$.

\begin{lem}\label{lem:Xspace}
    For any density operator $\rho$, $\tr(P_B\E(\rho))=0$ if and only if $\supp(\rho)\subseteq X_\E$.
\end{lem}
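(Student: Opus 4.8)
The plan is to prove both directions directly from the definition \eqref{subspace-def} of $X_\E$, using the linearity of $\E$ and the positivity of the operators involved. The key observation throughout is that for a positive operator $\sigma$, the condition $\tr(P_B\sigma)=0$ is equivalent to $\supp(\sigma)\subseteq B^\perp$, since $P_B\succeq 0$ and $\sigma\succeq 0$ give $\tr(P_B\sigma)=0 \iff P_B\sigma P_B = 0 \iff \supp(\sigma)\perp B$.

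For the ``only if'' direction, suppose $\tr(P_B\E(\rho))=0$. Then $\supp(\rho)$ is one of the sets appearing in the union defining $X_\E$, so $\supp(\rho)\subseteq X_\E$ immediately.

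For the ``if'' direction, suppose $\supp(\rho)\subseteq X_\E$; we must show $\tr(P_B\E(\rho))=0$. First I would fix a finite family $\{\rho_k\}$ of density operators with $\tr(P_B\E(\rho_k))=0$ for each $k$ such that $X_\E = \spans(\bigcup_k \supp(\rho_k))$; such a finite family exists because $X_\E$ is finite-dimensional. By the equivalence above, $\supp(\E(\rho_k))\subseteq B^\perp$ for every $k$. Now I claim $\E$ maps $\mathcal{D}(X_\E)$ into operators supported in $B^\perp$: indeed, writing an arbitrary $\rho$ with $\supp(\rho)\subseteq X_\E$, one can find a single density operator of the form $\tau=\sum_k \lambda_k\rho_k$ (with $\lambda_k>0$, $\sum_k\lambda_k=1$) whose support is exactly $X_\E\supseteq\supp(\rho)$; then $\supp(\rho)\subseteq\supp(\tau)$, and since $\E$ is a (trace-preserving, completely positive) super-operator it is monotone on supports in the sense that $\supp(\rho)\subseteq\supp(\tau)$ implies $\supp(\E(\rho))\subseteq\supp(\E(\tau))$. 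By linearity $\E(\tau)=\sum_k\lambda_k\E(\rho_k)$, a positive combination of operators each supported in $B^\perp$, hence $\supp(\E(\tau))\subseteq B^\perp$. Therefore $\supp(\E(\rho))\subseteq B^\perp$, i.e. $\tr(P_B\E(\rho))=0$, as required.

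The main obstacle is the support-monotonicity step: justifying that $\supp(\rho)\subseteq\supp(\tau)$ forces $\supp(\E(\rho))\subseteq\supp(\E(\tau))$ for a completely positive map $\E$. This is where I would be careful — it follows from the fact that if $\supp(\rho)\subseteq\supp(\tau)$ then $\rho\preceq c\,\tau$ for some constant $c>0$ (both being positive with the stated support containment in a finite-dimensional space), and complete positivity gives $0\preceq\E(\rho)\preceq c\,\E(\tau)$, whence $\supp(\E(\rho))\subseteq\supp(\E(\tau))$. Everything else is bookkeeping with the definition of $X_\E$ and the elementary equivalence between $\tr(P_B\sigma)=0$ and $\supp(\sigma)\subseteq B^\perp$ for positive $\sigma$.
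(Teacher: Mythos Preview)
Your proof is correct and follows essentially the same route as the paper: both directions use the definition for the ``only if'' part, and for the ``if'' part both dominate $\rho$ by a scalar multiple of a finite positive combination $\tau=\sum_k\lambda_k\rho_k$ of witnesses, then use positivity of $\E$ to transfer the inequality $\rho\preceq c\,\tau$ to $\E(\rho)\preceq c\,\E(\tau)$ and conclude $P_B\E(\rho)P_B=0$. Your explicit isolation of the support-monotonicity step and its justification via $\rho\preceq c\,\tau$ is exactly the mechanism the paper uses, just stated a bit more carefully.
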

\begin{proof}
    The ``only if'' part is by definition. We now prove the ``if'' part. If $\supp(\rho)\subseteq X_\E$, then there exist $\sigma_1,\cdots,\sigma_k$ with $\supp(\sigma_i)\subseteq X_\E$ and $\supp(\rho)\subseteq\bigvee \supp(\sigma_i)$, i.e. $\rho\leq \gamma\sum\sigma_i$ for some $\gamma>0$. Thus
    \begin{align*}
        \gamma\sum\sigma_i-\rho\geq 0
         \Rightarrow& \E(\gamma\sum\sigma_i-\rho)\geq 0\\
        \Rightarrow & P_B \E(\gamma\sum\sigma_i-\rho) P_B \geq 0\\
        \Rightarrow & P_B \E(\gamma\sum\sigma_i) P_B \geq P_B\E(\rho)P_B.
    \end{align*}
    By definition, we have $P_B \E(\gamma\sum\sigma_i) P_B=0$ and $P_B\E(\rho)P_B\geq 0$. Therefore
    $P_B\E(\rho)P_B=0$. This implies $\supp(\rho)\subseteq X$.
\end{proof}

We now consider a special qMDP $\M=\langle\hs,Act,\mathbf{M}\rangle$ without measurements: $|\mem|=\emptyset$. We write $\E_s=\E_{s_k}\circ\cdots\circ\E_{s_2}\circ\E_{s_1}$ for a finite sequence $s=s_1s_2...s_k\in Act^*$. Since $\langle\hs, \E_s\rangle$ can be seen a quantum Markov chain, we know from  \cite{YFYY} that $\Pr^{\s} (\rho\vDash \Diamond B)=1$ if and only if there is no invariant subspace in $B^\perp$, where $\s=s^\omega$ is a periodic scheduler. For any $s\in Act^*$, we simply write $X_s$ for $X_{\E_s}$ defined by equation (\ref{subspace-def}) from super-operator $\E_s$.

\begin{lem}\label{lem:XspaceEmptySch}Let $\s=s^\omega$. If $X_s=\{0\}$, then $\Pr^\s(\rho\vDash\Diamond B)=1$ for any $\rho$.
\end{lem}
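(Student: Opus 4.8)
The plan is to reduce the statement to the quantum-Markov-chain criterion recalled immediately before the lemma. Since $\s=s^\omega$ and $\langle\hs,\E_s\rangle$ is a quantum Markov chain, and since $B$ is invariant (hence invariant under the composition $\E_s$) so that $\Pr(\rho(n,\s)\vDash B)$ is non-decreasing in $n$ by Theorem \ref{thm:PrInvInc} and its limit is therefore governed by the iterates $\tr(P_B\E_s^m(\rho))$, we may invoke \cite{YFYY}: $\Pr^\s(\rho\vDash\Diamond B)=1$ for all $\rho$ as soon as $B^\perp$ contains no nonzero subspace invariant under $\E_s$. So it suffices to prove the implication: if $X_s=\{0\}$, then $B^\perp$ contains no nonzero $\E_s$-invariant subspace.

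First I would unpack what $X_s=\{0\}$ gives via Lemma \ref{lem:Xspace} applied to the super-operator $\E_s$: for every density operator $\rho$, $\tr(P_B\E_s(\rho))=0$ holds if and only if $\supp(\rho)\subseteq X_s=\{0\}$, i.e. if and only if $\rho=0$. Equivalently, $\tr(P_B\E_s(\rho))>0$ for every density operator $\rho$.

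Next, suppose toward a contradiction that $C\subseteq B^\perp$ is a nonzero subspace invariant under $\E_s$. Choose a density operator $\rho_C$ with $\supp(\rho_C)=C$ (for instance $\rho_C=P_C/\dim C$, where $P_C$ is the projection onto $C$). Invariance of $C$ gives $\supp(\E_s(\rho_C))\subseteq C\subseteq B^\perp$, so $\E_s(\rho_C)$ is orthogonal to $B$ and $\tr(P_B\E_s(\rho_C))=0$, contradicting the previous paragraph. Hence no such $C$ exists, and the quoted result yields $\Pr^\s(\rho\vDash\Diamond B)=1$ for all $\rho$.

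I do not expect a genuine obstacle here; the only points needing care are that Lemma \ref{lem:Xspace}, stated for an arbitrary super-operator, is legitimately applied to the composed map $\E_s$, and that the passage from the periodic schedule $s^\omega$ to the single super-operator $\E_s$ (and from ``no BSCC in $B^\perp$'' to ``no invariant subspace in $B^\perp$'') is exactly what has already been recorded before the lemma. If one preferred a self-contained argument avoiding \cite{YFYY}, one could set $\delta=\min\{\tr(P_B\E_s(\sigma)):\sigma\text{ a density operator supported in }B^\perp\}$, which is strictly positive by compactness together with the conclusion of the first paragraph, and then show that the trace mass lying outside $B$ is multiplied by a factor at most $1-\delta'$ (for a suitable $\delta'>0$ extracted from $\delta$ using the invariance of $B$) every $|s|$ steps, forcing it to zero; but the reduction above is the shorter route.
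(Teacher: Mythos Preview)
Your proposal is correct and follows essentially the same route as the paper: both argue by contradiction via the quantum-Markov-chain criterion from \cite{YFYY}, producing a nonzero state supported in $B^\perp$ whose image under $\E_s$ stays in $B^\perp$, hence has zero overlap with $B$, forcing its support into $X_s$ and contradicting $X_s=\{0\}$. The only cosmetic difference is that the paper invokes the BSCC/fixed-point formulation (a state with $\E_s(\rho)=\rho$, $\supp(\rho)\subseteq B^\perp$) directly from \cite{YFYY}, whereas you use the invariant-subspace formulation already recorded just before the lemma together with Lemma~\ref{lem:Xspace}; your version is if anything slightly more self-contained.
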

\begin{proof}
    We prove it by contradiction. Suppose $X_s=\{0\}$ and $\Pr^\s(\rho\vDash\Diamond B)<1$ for some $\rho$. Since $\langle\hs, \E_s\rangle$ is a quantum Markov chain, the scheduler $\s$ is a actually  repeated application of $\E_s$, we have from Theorems 4 and 6 in \cite{YFYY} that $\Pr^\s(\rho\vDash\Diamond B)<1$ if and only if there exists a (non-empty) BSCC $C$ of $T=B^\perp$ under $\E_s$. Corresponding to this BSCC, there exists a minimal fixed point state $\rho$ with $\E_s(\rho)=\rho$ and $\supp(\rho)=C\subseteq T=B^\perp$. By definition, we get $\{0\}\subsetneq \supp(\rho)\subseteq X_s=\{0\}$. A contradiction!
\end{proof}

\begin{lem}\label{lem:XsSub}
    For any $s,v\in Act^*$ and $w=sv$, we have $X_w \subseteq X_s.$
    In particular, if $\dim X_w= \dim X_s$, then we have $\E_s(X_s)\subseteq X_v$.
\end{lem}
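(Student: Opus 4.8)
The plan is to reduce both claims to the characterisation of $X_\E$ provided by Lemma \ref{lem:Xspace}, together with the monotonicity of the reachability weight in an invariant subspace. Writing $s=s_1\cdots s_k$ and $v=v_1\cdots v_m$, note first that $\E_w=\E_v\circ\E_s$, where $\E_v=\E_{v_m}\circ\cdots\circ\E_{v_1}$, and that $B$, being invariant under each $\E_{v_i}$, is invariant under the composition $\E_v$. Consequently, by Theorem \ref{thm:PrInvInc} (equivalently, Theorem 1 of \cite{YFYY} applied to $\E_v$ and an arbitrary state in place of a reachable one), $\tr(P_B\E_v(\sigma))\geq\tr(P_B\sigma)$ for every density operator $\sigma$.

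\emph{Proof of $X_w\subseteq X_s$.} Let $\rho$ be any density operator with $\tr(P_B\E_w(\rho))=0$, i.e. $\tr(P_B\E_v(\E_s(\rho)))=0$. Applying the monotonicity inequality with $\sigma=\E_s(\rho)$ forces $\tr(P_B\E_s(\rho))=0$, so Lemma \ref{lem:Xspace} applied to $\E_s$ gives $\supp(\rho)\subseteq X_s$. Since $X_w$ is the span of all such supports, $X_w\subseteq X_s$.

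\emph{Proof of $\E_s(X_s)\subseteq X_v$ when $\dim X_w=\dim X_s$.} The inclusion just proved together with the dimension equality give $X_w=X_s$. Let $\rho$ be any density operator with $\supp(\rho)\subseteq X_s=X_w$. By Lemma \ref{lem:Xspace} applied to $\E_w$ we have $\tr(P_B\E_w(\rho))=0$, that is $\tr(P_B\E_v(\E_s(\rho)))=0$; applying Lemma \ref{lem:Xspace} this time to $\E_v$ yields $\supp(\E_s(\rho))\subseteq X_v$. Since $\E_s(X_s)$ is by definition the span of $\supp(\E_s(\rho))$ over all such $\rho$, we conclude $\E_s(X_s)\subseteq X_v$.

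The whole argument is essentially bookkeeping with supports; the single point that deserves care is the monotonicity inequality $\tr(P_B\E_v(\sigma))\geq\tr(P_B\sigma)$ for an \emph{arbitrary} state $\sigma$ rather than one arising along a run of the qMDP. This is exactly the content of Theorem 1 of \cite{YFYY}: invariance of $B$ under $\E_v$ is equivalent to $\E_v^\dag(P_{B^\perp})\leq P_{B^\perp}$ (a positive-semidefinite block-matrix argument against the decomposition $\hs=B\oplus B^\perp$), and taking the trace against $\sigma$ gives the inequality. I would spell out this equivalence explicitly if the cited statement is phrased only for states reachable in the process.
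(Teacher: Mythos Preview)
Your proof is correct and follows essentially the same route as the paper. The one cosmetic difference: to pass from $\tr(P_B\E_v(\E_s(\rho)))=0$ to $\tr(P_B\E_s(\rho))=0$ you invoke the monotonicity inequality (Theorem~\ref{thm:PrInvInc}), whereas the paper instead observes directly that $\supp(\E_s(\rho))\subseteq X_v\subseteq B^\perp$ --- slightly more economical, since the intermediate inclusion $\supp(\E_s(\rho))\subseteq X_v$ is precisely what gets reused for the second claim, so the paper does not have to re-derive it there.
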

\begin{proof}
    For any $\rho$ with $\supp(\rho)\subseteq X_w$, we have $0=P_B\E_w(\rho)P_B=P_B\E_v(\E_s(\rho))P_B$. Thus $\supp(\E_s(\rho))\subseteq X_v\subseteq B^\perp$. This implies $P_B\E_s(\rho)P_B=0$,  and $\supp(\rho)\subseteq X_s$. Therefore, it holds that $X_w \subseteq X_s$. We now turn to prove the second part. If $\dim X_w= \dim X_s$, then for any $\rho$ with $\supp(\rho)\subseteq X_s$, we have $\supp(\rho)\subseteq X_w$. This means $\supp(\E_s(\rho))\subseteq X_v$ as $B$ is invariant.
\end{proof}

Now we are ready to prove Theorems \ref{thm:supOneSchAny} and \ref{thm:lag-correction}.

{\renewcommand{\proofname}{Proof of Theorem \ref{thm:supOneSchAny}}
\begin{proof} The proof of (1) $\Rightarrow$ (2) is easy. Suppose that there is an invariant subspace $C$ of $\M$ included in $B^\perp$, then $\Pr^\s(\rho\vDash\Diamond B)=0$ for any $\rho$ in $C$ and for any scheduler $\s$.

We now prove (2) $\Rightarrow$ (1). For the special case of $\mathbf{M}=\emptyset$, assume that there is no invariant subspace $C$ of $\M$ included in $B^\perp$.
Let $D=\{\dim X_u: u\in Act^*\}$ and let $d_{\min} = \min D$. Then there exists $s\in Act^*$ such that $\dim X_s=d_{\min}$.
We assert that $d_{\min}=0$. Indeed, if $d_{\min}>0$, then for each word $v\in Act^*$, we put $w=sv$.
    By Lemma \ref{lem:XsSub}, we have $X_w\subseteq X_s$. Then it follows from the definition of  $d_{\min}$ that $X_w= X_s$. As a consequence, $Y\stackrel{\triangle}{=}\E_s(X_s)\subseteq X_v$. This implies $\E_v(Y)\subseteq B^\perp$. For a super-operator $\E$, we write $\R_{\E}(Y)$ for the transitive closure of $Y$ under $\E$, i.e. $$\R_{\E}(Y)\stackrel{\triangle}{=}\bigvee_{i=0}^{d-1}\E^i(Y).$$
    Let $\F=\frac{1}{t}\sum_{\alpha\in Act}\E_\alpha,$ where $t=|Act|$. We have:
    \begin{align*}
        \R_{\F}(Y) = \bigvee_{i=0}^{d-1}\bigvee_{x\in Act^i} {\E_x}(Y) \subseteq B^\perp.
    \end{align*}
It is clear that $\R_{\F}(Y)$ is invariant under $\F$, and thus invariant under any $\E_i$. So, $\R_{\F}(Y)$ is an invariant subspace of $\M$ included in $B^\perp$ under $\M$.This contradicts to the assumption. So, we have $d_{\min}=0$, and it follows from Lemma \ref{lem:XspaceEmptySch} that $\s=s^\omega$ is a optimal scheduler.

For the general case of $\mathbf{M}\neq\emptyset$, we define a super-operator $\E_{M_\beta}=\sum M_m\cdot M_m^\dag$ for each $M_\beta\in\mem$. Furthermore, we can construct a new qMDP $\M'=\<\hs,Act',\mathbf{M}'\>$ with $Act'=Act\cup \{\beta:M_\beta\in\mem\}$ and $\mathbf{M}'=\emptyset$. Then we complete the proof by applying the above argument to $\M'$.
\end{proof}
}
It is worth noting that the optimal scheduler given in the proof of the above theorem depends on which measurement is chosen in each step but not its outcome.
{\renewcommand{\proofname}{Proof of Theorem \ref{thm:lag-correction}}
\begin{proof}
The design idea of Algorithm \ref{alg:findAScheduler} is to see whether there exists an invariant subspace of $B^\perp$ under super-operator
\begin{equation*}
    \F = \frac{1}{K} (\sum_{\alpha\in Act} \E_\alpha+\sum_{M\in \mem}\sum_{M_i\in M} M_i\cdot M_i^\dag),
\end{equation*}
where $K=|Act|+|\mem|$.
A crucial part of the algorithm is to compute $X_s$ for each $s\in Act^*$.
By definition, we have $\E_s(V)\subseteq B^\perp$ whenever $V=\supp(\rho)\subseteq X_s$. Therefore,
\begin{align*}
&X_s = \spans\left(\bigcup\{\supp(\rho): P_B\E_s(\rho)P_B=0\}\right)\\ &= \bigvee\{V: \E_{s}(V)\subseteq B^\perp\}\nonumber =\E_s^{-1}(B^\perp)= (\E_s^*(B))^\perp,\label{eq:Xs}
\end{align*}
where $\E^*$ stands for the dual of super-operator $\E$, i.e. $\E^*=\sum A_i^\dag\cdot A_i$ when $\E=\sum A_i\cdot A_i^\dag$.

1. The correctness of the algorithm is essentially based on the proof of Theorem \ref{thm:supOneSchAny}. Here we give a detailed argument. The algorithm returns $s=\epsilon$ at the first two ``return'' statements where $B$ is not invariant or there is an invariant subspace of $\M$ included in $B^\perp$. Otherwise $b$ is initialized as $b>0$, and the algorithm enters the ``while'' loop.
During the loop, $b$ must decrease at least 1. If not, we have found some $s$ such that $b_s>0$, and for any $v\in Act'^*$, it holds that $b_{s\cdot v} = b_s$. By Lemma \ref{lem:XsSub}, we have $X_s = X_{s\cdot v}$ and $\E_s(X_s)\subseteq X_v\subseteq B^\perp$ for all $v$. Therefore, $\E_s(X_s)$ is an invariant subspace of $\M$ included in $B^\perp$, which is a contradiction.
So, $b$ will be $0$ finally and $\s=s^\omega$ is then an optimal scheduler.

2. We note that the algorithm will run the ``while'' loop at most $d$ times and each time it will run the ``for'' loop within the body of the ``while" loop at most $t^d$ times. So the length of $s$ will be at most $d^2$, as it increases at most $d$ in each running of the ``while'' loop. In the ``for'' loop, the complexity mainly comes from computing $\E_w$. It costs at most $O(d^8)$ because the length of $w$ (i.e. the number of matrix multiplications) is at most $O(d^2)$ and each matrix multiplication costs $O(d^6)$. So the complexity of the algorithm is $O(d\cdot t^d\cdot d^8)=O(d^9t^d)$.
\end{proof}
}

\subsection{Proofs of Theorems \ref{thm:AnySchSuperoperator} and \ref{thm:AnySchSuperoperatorSp}}

We first introduce an auxiliary tool.

\begin{defi}
For any sequence $s\in Act^*$, its repetition degree $\rd(s)$ is inductively defined as follows:
\begin{enumerate}
  \item If there does not exist $t\in Act^+$ and $a,b,c \in Act^*$, such that $s= a\cdot t \cdot b \cdot t \cdot c$, then $\rd(s)=0$.
  \item In general, $\rd(s) = \max\{\rd(t)+1: s= a\cdot t \cdot b \cdot t \cdot c, t\in Act^+, a,b,c \in Act^*\}$.
\end{enumerate}
\end{defi}
It is clear that $\rd(s)=0$ for any $s\in \{\epsilon\}\cup Act$. The following lemma provides a way to estimate the repetition degree $\rd(s)$.
\begin{lem}\label{lem:rdLength}
    Let $\M$ be a qMDP with $\mem=\emptyset$ and $B$ an invariant subspace of $\hs$. Assume $|Act|=k$ and $\dim \hs=d$.
    Then for any sequence $s\in Act^+$ and any $x\geq 0$,
    $$|s|\geq L_x \Rightarrow \rd(s)\geq x.$$ Here, $L_x$ is as the same as in Theorem \ref{thm:AnySchSuperoperator}.
\end{lem}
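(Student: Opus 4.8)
The plan is to prove the implication $|s|\geq L_x \Rightarrow \rd(s)\geq x$ by induction on $x$, using nothing but a pigeonhole argument. Note first that the statement is purely combinatorial: it involves only the alphabet $Act$ (of size $k$) and the length of words, not the Hilbert space or the super-operators, so none of the qMDP structure is actually used here; the hypotheses on $\M$ and $B$ are carried along only for context. A useful preliminary observation is that the recursion defining the constants gives $K_x=k^{L_x}$ for every $x\geq 0$: indeed $K_0=k=k^{L_0}$ since $L_0=1$, and $K_{i+1}=k^{L_{i+1}}$ by definition. Thus $K_x$ is precisely the number of distinct words in $Act^{L_x}$. The base case $x=0$ is immediate, since $\rd(s)\geq 0$ holds for every word $s$.

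For the inductive step, I would assume the claim for $x$ and take $s\in Act^+$ with $|s|\geq L_{x+1}=(K_x+1)L_x$. Since $|s|\geq (K_x+1)L_x$, one can write $s=b_1 b_2\cdots b_{K_x+1}\, s'$ where each $b_i\in Act^{L_x}$ and $s'\in Act^*$ is a (possibly empty) remainder. As there are only $k^{L_x}=K_x$ distinct words of length $L_x$, by the pigeonhole principle there exist indices $i<j$ with $b_i=b_j=:t$; moreover $|t|=L_x\geq 1$ (because $L_0=1$ and $L_{i+1}\geq 2L_i$), so $t\in Act^+$. Setting $a=b_1\cdots b_{i-1}$, $b=b_{i+1}\cdots b_{j-1}$, and $c=b_{j+1}\cdots b_{K_x+1}\, s'$, we obtain $s=a\cdot t\cdot b\cdot t\cdot c$ with $a,b,c\in Act^*$. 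Applying the induction hypothesis to $t$ (which has length $L_x$) yields $\rd(t)\geq x$, and hence, directly from the definition of repetition degree, $\rd(s)\geq \rd(t)+1\geq x+1$.

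The argument is elementary, and I do not expect a genuine obstacle; the one point that must be handled with care is \emph{why} the constants are shaped the way they are. It is not enough to exhibit any repeated factor of $s$: the repeated factor $t$ must itself carry repetition degree at least $x$. This forces the blocks $b_i$ to be long enough that the induction hypothesis applies to them (length exactly $L_x$), and forces there to be enough blocks that a collision among words of that length is unavoidable ($K_x+1=k^{L_x}+1$ of them). Multiplying these two requirements gives exactly $L_{x+1}=(K_x+1)L_x$. The remaining bookkeeping — that $L_x\geq 1$ so that $t$ is a nonempty factor, and that the block decomposition of $s$ is legitimate given $|s|\geq (K_x+1)L_x$ — is routine.
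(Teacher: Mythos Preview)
Your proof is correct and follows essentially the same route as the paper: induction on $x$, chop $s$ into $K_x+1$ consecutive blocks of length $L_x$, use pigeonhole to find a repeated block $t$, and invoke the induction hypothesis on $t$ to get $\rd(s)\geq\rd(t)+1\geq x+1$. The paper additionally spells out the case $x=1$ separately, but as you implicitly note this is subsumed by the general inductive step, and your observation that the lemma is purely combinatorial (independent of the qMDP data) is accurate.
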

\begin{proof}
    We prove it by induction on $x$. For the case of $x=0$, it is obvious. For $x=1$, assume $s$ is a sequence with length $|s|\geq L_1=k+1$. Since there is only $k$ possible actions, there must be two different integers $p,q\in [1,k+1]$ such that $s_p=s_q$. Then by definition, $\rd(s)\geq 1$.

Now we suppose that for all $x\leq i$ we have $|t|\geq L_x \Rightarrow \rd(t)\geq x$. Assume $|s|\geq L_{i+1}=(K_i+1)L_i$. Then $s$ can be rewritten as $s=v_1\cdots v_{K_i+1}\cdots$, where for $u\in [1,K_i+1]$, $v_u=s_{(u-1)*L_i+1}\cdots s_{u*L_i}$ is a subsequence of length $L_i$. Since there are only $K_i = k^L_i$ different possible sequences of length $L_i$, there must be two different integers $p,q\in [1,K_i+1]$ such that $v_p=v_q$. By induction assumption, we $\rd(v_p)\geq i$. Therefore, $\rd(s)\geq i+1$. This completes the proof.
\end{proof}

Now we can establish a connection between $\rd(s)$ and $\dim X_s$.
\begin{lem}\label{lem:rdXs}
    Let $\M$ be a qMDP with $\mem=\emptyset$ and $B$ an invariant subspace of $\hs$. If for any $s' \in Act^*$ with $0<|s'|\leq L_q$ and $q=\max_{a\in Act} \dim(X_a)$, and for any initial state $\rho$, the scheduler scheduler $\s=s'^\omega$ satisfies $\Pr\nolimits^{\s}(\rho\vDash \Diamond B)=1,$
    then for any sequence $s\in Act^*$ with $|s|\leq L_q$, there exists a non-empty subsequence $v$ of $s = f\cdot v\cdot g$ such that
    $\dim X_v\leq \max\{q-\rd(s),0\}.$
\end{lem}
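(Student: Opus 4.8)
The plan is to prove the lemma by induction on $r=\rd(s)$, tracking the subspaces $X_s$ through Lemmas~\ref{lem:Xspace} and~\ref{lem:XsSub}. (I assume $s\neq\epsilon$ throughout, since the empty word has no non-empty factor; this is the only case needed later.) For the base case $r=0$ I would take $v$ to be any single action occurring in $s$: then $\dim X_v\le\max_{a\in Act}\dim X_a=q=\max\{q-0,0\}$, so the base case needs nothing beyond the definition of $q$ and does not use the hypothesis.

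For the inductive step $r\ge 1$, the definition of repetition degree provides a decomposition $s=a\cdot t\cdot b\cdot t\cdot c$ with $t\in Act^+$ and $\rd(t)=r-1$. Since $t$ is a factor of $s$ we have $|t|\le|s|\le L_q$, so the induction hypothesis applies to $t$ and yields a non-empty factor $v'$ of $t$, say $t=f\cdot v'\cdot g$, with $\dim X_{v'}\le\max\{q-r+1,0\}$. If $\dim X_{v'}\le\max\{q-r,0\}$ we are done with $v=v'$. Otherwise $\dim X_{v'}=\max\{q-r,0\}+1$, which forces $q\ge r$ (else the induction bound would already be $0$), and I must shrink the dimension by one more. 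Substituting $t=f\cdot v'\cdot g$ gives $s=a\,f\,v'\,g\,b\,f\,v'\,g\,c$, so $w=v'\cdot\mu\cdot v'$ with $\mu=g\,b\,f$ is a contiguous factor of $s$; hence $|w|\le L_q$ and in particular $0<|v'\mu|\le L_q$. By Lemma~\ref{lem:XsSub}, $X_w\subseteq X_{v'}$, so it suffices to show this inclusion is strict: then $\dim X_w\le\dim X_{v'}-1=\max\{q-r,0\}$, and we take $v=w$.

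The strictness is where the hypothesis enters. Suppose, for contradiction, $\dim X_w=\dim X_{v'}$; then $X_w=X_{v'}$, and call this common subspace $Z$. Since $w=v'\mu v'$ we have $\E_w=\E_{v'}\circ\E_{v'\mu}$, so for every $\rho$ with $\supp(\rho)\subseteq Z=X_w$ it holds that $\tr(P_B\E_{v'}(\E_{v'\mu}(\rho)))=\tr(P_B\E_w(\rho))=0$, whence $\supp(\E_{v'\mu}(\rho))\subseteq X_{v'}=Z$ by Lemma~\ref{lem:Xspace} applied to $\E_{v'}$. Thus $Z$ is a non-zero subspace of $B^\perp$ invariant under $\E_{v'\mu}$, i.e.\ an invariant subspace included in $B^\perp$ of the one-action qMDP $\langle\hs,\{v'\mu\},\emptyset\rangle$. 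By the ``(1)$\Rightarrow$(2)'' part of Theorem~\ref{thm:supOneSchAny} (more precisely its proof), $\Pr\nolimits^{(v'\mu)^\omega}(\rho\vDash\Diamond B)=0$ for every $\rho$ with $\supp(\rho)\subseteq Z$; since $0<|v'\mu|\le L_q$, this contradicts the hypothesis of the lemma taken with $s'=v'\mu$. This closes the induction.

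I expect the inductive step to be the main obstacle, for two reasons. The combinatorial part is pinning down the \emph{same} non-empty factor $v'$ inside both copies of $t$ and recognising the ``middle block'' $w=v'\mu v'$ as a genuine contiguous factor of $s$ whose prefix $v'\mu$ still obeys $0<|v'\mu|\le L_q$ --- the length bookkeeping that couples the argument to the constant $L_q$ in the hypothesis. The structural part is the pull-back step: when $\dim X_w$ refuses to drop, the defining ``sent into $B^\perp$'' property of $X_{v'}$ must be propagated backwards along $\E_{v'\mu}$ into an honest $\E_{v'\mu}$-invariance of $Z$ inside $B^\perp$, which is exactly what Lemma~\ref{lem:Xspace} together with the identity $\E_w=\E_{v'}\circ\E_{v'\mu}$ delivers; feeding this into the periodicity hypothesis produces the contradiction.
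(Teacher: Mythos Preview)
Your argument is correct and follows essentially the same route as the paper's proof: induction on $\rd(s)$, extracting from the repeated block $t$ a factor $v'$ via the induction hypothesis, forming the overlapping factor $w=v'\mu v'$ with $\mu=gbf$, and deriving a contradiction from $X_w=X_{v'}$ by exhibiting $X_{v'}$ as an $\E_{v'\mu}$-invariant subspace inside $B^\perp$. The only cosmetic difference is that the paper invokes the second clause of Lemma~\ref{lem:XsSub} (with the chain $X_{v'\mu v'}\subseteq X_{v'\mu}\subseteq X_{v'}$ forcing $X_{v'\mu}=X_{v'}$, hence $\E_{v'\mu}(X_{v'})\subseteq X_{v'}$) where you unfold the same step via Lemma~\ref{lem:Xspace}; and the paper then argues directly that $\Pr^{(v'\mu)^\omega}(\rho_0\vDash\Diamond B)=0$ for $\rho_0$ supported in $X_{v'}$, whereas your appeal to Theorem~\ref{thm:supOneSchAny} is an unnecessary detour (the direct argument suffices and avoids any appearance of circularity).
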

\begin{proof}
    We prove it by induction on $\rd(s)$.

    (1) For $s$ with $\rd(s)=0$ and $0<|s|\leq L_q$, we have $X_s\subseteq X_{s_1}$ by Lemma \ref{lem:XsSub}. So, $\dim X_s\leq \dim X_{s_1}\leq q$.

    (2) Suppose for any $s'\in Act^+$ with $\rd(s')=i$ and $|s'|\leq L_q$, there exists a non-empty subsequence $v$ of $s'$, such that $\dim X_v\leq \max\{q-i,0\}$.
Now assume $s$ is a sequence with $\rd(s)=i+1$ and $|s|\leq L_q$.
If $\dim X_s=0$, the claim is true.
    Otherwise, by definition, there exists a non-empty subsequence $t$ of $s$ such that $s=a\cdot t \cdot b \cdot t \cdot c$ and $\rd(t)=i$. By the induction assumption, there exists a non-empty subsequence $u$ of $t=f\cdot u \cdot g$ such that $\dim X_u\leq q-i$. Here $\dim X_u>0$, since $\dim X_s>0$.
Therefore, $s$ can be rewritten as $s = a\cdot f\cdot u \cdot g \cdot b\cdot f\cdot u \cdot g\cdot c$. Let $f' = a\cdot f$, $v=u \cdot g \cdot b\cdot f\cdot u$ and $g'=g\cdot c$. Now we prove $\dim X_v\leq q-i-1$. Since $\dim X_u\leq q-i$ and $X_v\subseteq X_u\neq\emptyset$, we only need to prove $X_v\subsetneq X_u$. We do this by refutation. Suppose $X_v=X_u$. Then by Lemma \ref{lem:XsSub}, we have $X_u=X_{u \cdot g \cdot b\cdot f}$ and
    $\E_{u \cdot g \cdot b\cdot f}(X_u)=\E_{u \cdot g \cdot b\cdot f}(X_{u \cdot g \cdot b\cdot f})\subseteq X_u.$
    Thus, $X_u$ is an invariant subspace under super-operator $\E_{u \cdot g \cdot b\cdot f}$. As $X_u\perp B$, by definition, we have
    $\Pr\nolimits^\s(\rho_0\vDash B)=0$
    for $\s=(u \cdot g \cdot b\cdot f)^\omega$ and $\rho_0=I_{X_u}/\dim X_u$.
    Since $|s|\leq L_q$, we have $|u \cdot g \cdot b\cdot f|\leq|s|\leq L_q$.
    This is a contradiction! Therefore, it must be that $X_v\subsetneq X_u$, and we complete the proof.
\end{proof}

Now we can prove Theorems \ref{thm:AnySchSuperoperator} and \ref{thm:AnySchSuperoperatorSp}.

{\renewcommand{\proofname}{Proof of Theorem \ref{thm:AnySchSuperoperator}}
\begin{proof}
    We only need to prove the ``if'' part because the \textquotedblleft only if\textquotedblright\ is obvious. Assume that $\Pr\nolimits^\s (\rho\vDash \Diamond B)=1$ holds for any initial state and any scheduler $\s=s^\omega$ with $|s|\leq L_q$, where $q=\max_{a\in Act} \dim(X_a)$.
    By Lemma \ref{lem:rdLength}, we have $\rd(s)\geq q$ for all $s$ with $|s|=L_q$. Furthermore, by Lemma \ref{lem:rdXs} and the assumption, we have $\dim X_s\leq\max\{q-\rd(s),0\}=0$ for any sequence $s$ with $|s|=L_q$.
    Thus $\tr(P_B\E_s(\rho))>0$ for any $\rho$.
    Since $\tr(P_B\E_s(\rho))=\tr(\E_s^*(P_B)\rho)$ and $\E_s^*(P_B)=U_sD_sU_s^\dag$ where $D_s=diag\{\lambda_{s,1},\cdots,\lambda_{s,d}\}$, we have $\lambda_{s,i}>0$ for any $i$. Then $\tr(P_B\E_s(\rho))\geq m_s>0$ for any trace-1 operator $\rho$, where $m_s=\min \lambda_{s,i}$.
Consequently, for any scheduler $\s$, it holds that
    $$\Pr\nolimits^\s (\rho\vDash \Diamond B) \geq 1-\lim_{t\ra \infty} (1-m)^t=1,$$
    where $m=\min_{|s|=L_q} m_s >0$. This completes the proof by $q\leq d=\dim \hs$.\end{proof}}

{\renewcommand{\proofname}{Proof of Theorem \ref{thm:AnySchSuperoperatorSp}}
\begin{proof}
    This proof is similar to the proof of Theorem \ref{thm:supOneSp}. We can construct a classical MDP with $S=\{s_x:x\in Y\}$ and check whether $\Pr^\s(s_x\vDash\Diamond B)=1$ for all $s_x$ by noting the following two simple facts:
    \begin{itemize}
      \item for any initial state $\rho$ and any scheduler $\s$, the support of the resulting state after first action/measurement will be in $Y$;
      \item for any $s_x\in S$, we can construct an initial state $\rho=P_x/\tr(P_x)$.
    \end{itemize}
\end{proof}}

\subsection{Proof of Theorem \ref{lem:toRadius}}

Let $\M$ be a qMDP with state Hilbert space $\hs$ and $B$ an invariant subspace of $\M$. For each $\alpha\in Act$, we define a new super-operator: $\mathcal{F}_\alpha(\cdot) = P_{T}\E_\alpha(\cdot)P_{T}$ from $\E_\alpha$, where $T=B^\perp$  is the ortho-complement of $B$ in $\hs$ and $P_T$ is the projection operator onto $T$. Furthermore, let $M_\alpha$ be the matrix representation of $\mathcal{F}_\alpha$.

\begin{lem}\label{lem:toDLI}
    Let $\M$ be a qMDP with $\mem=\emptyset$ and $B$ an invariant subspace of $\M$. Then:
    \begin{enumerate} \item The following two statements are equivalent:\begin{enumerate}\item There exists a scheduler $\s$ such that $\Pr^\s(\rho\vDash\Diamond B)=1$ for all initial states $\rho$. \item There exists $\alpha_1 \alpha_2\cdots\in Act^\omega$ such that $\lim_{n\ra \infty}M_{\alpha_n}\cdots$ $M_{\alpha_1}=0.$\end{enumerate}
      \item The following two statements are equivalent: \begin{enumerate} \item For any scheduler $\s$ and any initial state $\rho$, it holds that $\Pr^\s(\rho\vDash\Diamond B)=1$. \item For any $\alpha_1 \alpha_2\cdots\in Act^\omega$, it holds that $\lim_{n\ra \infty}M_{\alpha_n}\cdots$ $M_{\alpha_1}=0.$\end{enumerate}
    \end{enumerate}
\end{lem}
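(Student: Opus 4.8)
The plan is to translate the reachability statements into statements about the matrices $M_\alpha$ by tracking the "escape mass" — the part of the state that has not yet entered $B$. For a finite word $s = \alpha_1\cdots\alpha_n$, I would first observe that, because $B$ is invariant, the composition $\mathcal{F}_s := \mathcal{F}_{\alpha_n}\circ\cdots\circ\mathcal{F}_{\alpha_1}$ has matrix representation $M_{\alpha_n}\cdots M_{\alpha_1}$ (matrix representations multiply in the same order as super-operator composition, since $M=\sum E_i\otimes E_i^\ast$ is a functor on the monoid of super-operators). The key identity to establish is that, for the scheduler $\s$ that applies the actions $\alpha_1,\alpha_2,\ldots$ in order, the probability of \emph{not} having reached $B$ by step $n$ equals $\tr(\mathcal{F}_s(\rho)) = \tr(P_T \E_{\alpha_n}\circ\cdots\circ\E_{\alpha_1}(\rho) P_T)$ for $s=\alpha_1\cdots\alpha_n$; this follows by induction on $n$ using $\Pr(\rho(n,\s)\vDash B) = \tr(P_B\rho(n,\s))$, the fact that the $\E_\alpha$ are trace-preserving, and invariance of $B$ (which guarantees $P_T\E_\alpha(\sigma)P_T = P_T\E_\alpha(P_T\sigma P_T)P_T$ whenever $\supp(\sigma)\subseteq T$ on the relevant component). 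Hence $\Pr^\s(\rho\vDash\Diamond B)=1$ iff $\tr(\mathcal{F}_s(\rho))\to 0$ as $n\to\infty$ along this word.

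Next I would upgrade the scalar convergence $\tr(\mathcal{F}_s(\rho))\to 0$ to the matrix convergence $M_{\alpha_n}\cdots M_{\alpha_1}\to 0$. One direction is immediate: if the matrix products go to $0$, then $\tr(\mathcal{F}_s(\rho))\to 0$ for every $\rho$, so $\Pr^\s(\rho\vDash\Diamond B)=1$ for all $\rho$. For the converse, note that each $\mathcal{F}_\alpha$ is a completely positive (trace-nonincreasing) map, so $\mathcal{F}_s$ is completely positive; a completely positive map $\Phi$ on $\mathcal{D}(T)$ satisfies $\|\Phi\|\le c\cdot\tr(\Phi(I_T))$ for a dimension-dependent constant $c$ (bounding the $1$-to-$1$ norm by the trace of the image of the identity, using positivity). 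Therefore $\tr(\mathcal{F}_s(I_T))\to 0$ forces $\mathcal{F}_s\to 0$ as a super-operator, hence its matrix representation $M_{\alpha_n}\cdots M_{\alpha_1}\to 0$. Applying this with $\rho = I_T/\dim T$ gives: $\Pr^\s(\rho\vDash\Diamond B)=1$ for \emph{all} $\rho$ iff it holds for $\rho=I_T/\dim T$ iff $\tr(\mathcal{F}_s(I_T))\to 0$ iff $M_{\alpha_n}\cdots M_{\alpha_1}\to 0$. This proves both (1) and (2): for (1), the "for all $\rho$" reachability-$1$ condition for the word-scheduler $\s=\alpha_1\alpha_2\cdots$ is exactly the existence of a word with vanishing matrix products, and since $\mem=\emptyset$ every scheduler is (equivalent to) such a word-scheduler or an element of $Act^\omega$; for (2), quantifying over all $\s$ corresponds to quantifying over all $\alpha_1\alpha_2\cdots\in Act^\omega$.

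The main obstacle I anticipate is the reduction from \emph{arbitrary} schedulers to \emph{word schedulers} $Act^\omega$. Here I would use that with $\mem=\emptyset$ there are no observations, so $\Omega=\emptyset$ and the scheduler's domain $(Act\cup\Omega)^*=Act^*$; moreover the global state $\rho(n,\s)$ is then a \emph{deterministic} function of the sequence of actions chosen, so any scheduler induces a single action sequence $\alpha_1\alpha_2\cdots$ and conversely, making the identification clean. The second subtlety is the constant $c$ in $\|\Phi\|\le c\,\tr(\Phi(I_T))$: I would make this explicit by writing $\Phi = \sum_i F_i\cdot F_i^\dag$, noting $\Phi(I_T)=\sum_i F_i F_i^\dag$, and bounding $\|F_i F_i^\dag\|\le \tr(F_i F_i^\dag)\le \tr(\Phi(I_T))$ so that each Kraus operator has small norm, whence $\Phi$ is small; this is the only place where a genuine (though routine) estimate is needed, and it is where I would spend the most care.
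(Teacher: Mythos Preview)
Your proposal is correct and follows essentially the same route as the paper: identify schedulers with action sequences (since $\mem=\emptyset$), show $P_T\rho(n,\s)P_T=\mathcal{F}_{\alpha_n}\circ\cdots\circ\mathcal{F}_{\alpha_1}(\rho)$, and then upgrade $\tr(\mathcal{G}_n(I_T))\to 0$ to $\mathcal{G}_n\to 0$ as a linear map. The only technical difference is in the last step: the paper bounds $\|\mathcal{G}_n(R)\|$ by decomposing an arbitrary matrix as $R=a_+-a_-+i(b_+-b_-)$ with $a_\pm,b_\pm\ge 0$ and using $\mathcal{G}_n(a_\pm)\le \|R\|\,\mathcal{G}_n(I)$, whereas you bound via Kraus operators; your estimate in fact yields the clean inequality $\sum_i\|F_i\|^2\le\sum_i\tr(F_iF_i^\dag)=\tr(\Phi(I_T))$, so no dimension constant or count of Kraus operators is actually needed.
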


\begin{proof} 1. It is obvious that (b) $\Rightarrow$ (a) because $\tr(\rho(n,\s))=1$ and the probability in $T$ goes to 0.
We now prove (a) $\Rightarrow$ (b). Suppose that $\s$ is a scheduler required in (a).
    Let $T=\spans\{|1\>,\cdots,|k\>\}$ and $B=\spans\{|k+1\>,\cdots,|d\>\}$. As $\mem=\emptyset$, $\s$ is a sequence of actions, i.e. $\s = s_1 s_2\cdots$ with $s_i\in Act$ for all $i$.
    Since
    \begin{align*}
        \E_\alpha(\rho) &=\sum E_{\alpha,i}\rho E_{\alpha,i}^\dag \\
        &=\sum \begin{pmatrix}
            a_{\alpha,i} & 0\\
            c_{\alpha,i} & b_{\alpha,i}
        \end{pmatrix}
        \begin{pmatrix}
            \rho_T & *\\
            * & \rho_B
        \end{pmatrix}
        \begin{pmatrix}
            a_{\alpha,i}^\dag & c_{\alpha,i}^\dag\\
            0 & b_{\alpha,i}^\dag
        \end{pmatrix}\\
        &=\sum\begin{pmatrix}
            a_{\alpha,i} \rho_T a_{\alpha,i}^\dag & *\\
            * & *
        \end{pmatrix}
    \end{align*}
    and
    \begin{align*}
        \F_\alpha(\rho) &= P_T\left(\sum \begin{pmatrix}
            a_{\alpha,i} & 0\\
            c_{\alpha,i} & b_{\alpha,i}
        \end{pmatrix}
        \rho
        \begin{pmatrix}
            a_{\alpha,i}^\dag & c_{\alpha,i}^\dag\\
            0 & b_{\alpha,i}^\dag
        \end{pmatrix}\right)P_T\\
        &= \sum \begin{pmatrix}
            a_{\alpha,i} & 0\\
            0 & 0
        \end{pmatrix}
        \rho
        \begin{pmatrix}
            a_{\alpha,i}^\dag & 0\\
            0 & 0
        \end{pmatrix},
    \end{align*}
    we have
   $\sigma(n,\s)\equiv P_T\rho(n,\s)P_T = \F_{s_n}\cdots\F_{s_1}(\rho).$
    Moreover, as $\Pr^\s(\rho\vDash\Diamond B)=1$, we have $\lim_{n\ra\infty}\tr(\sigma(n,\s))=0$. As $\sigma(n,\s)$ is a density operator, it follows that $\lim_{n\ra\infty}\sigma(n,\s)=0$.

    Let $\mathcal{G}_n(\cdot)\stackrel{\triangle}{=}\F_{s_n}\cdots\F_{s_1}(\cdot)$. Since $\mathcal{G}_n(\cdot)$ is completely positive, we have $\mathcal{G}_n(\rho)\leq\mathcal{G}_n(I)$ as $I\geq \rho$ for any density operator $\rho$. If we use the matrix norm $$\|A\|=\sup_{\|x\|_2=1}\|Ax\|_2=\sqrt{\lambda_{\max}(A^\dag A)},$$ then it holds that $\|\rho\|=\lambda_{\max} (\rho)\leq \|\sigma\|$ when $\rho\leq\sigma$. As a consequence, we obtain
    $$\|\mathcal{G}_n(I/d)\|<\frac{\epsilon}{4d}
        \Rightarrow \|\mathcal{G}_n(\rho)\|\leq\|\mathcal{G}_n(I)\|<\frac{\epsilon}{4}.$$
For any matrix $R$, we have $R=a_+-a_-+i(b_+-b_-)$, where $a_+,a_-,b_+,b_-\geq 0$ and $a_+a_-=b_+b_-=0$. Furthermore, $$\|a_+\|\leq\|a_+-a_-\|=\|\frac{R+R^\dag}{2}\|\leq \frac{\|R\|+\|R^\dag\|}{2}=\|R\|.$$ The first inequality is because $a_+$ and $a_-$ are both positive and their supports are orthogonal . Therefore, we have
    \begin{align*}
        &\forall \epsilon>0, \exists N\in\mathds{N}, \forall n>N, \forall R\in\mathbb{M}_n(\mathds{C}), \\
        &\|\mathcal{G}_n(R)\|\leq \|\mathcal{G}_n(a_+)\| + \|\mathcal{G}_n(a_-)\|+\|\mathcal{G}_n(b_+)\|+\|\mathcal{G}_n(b_-)\|\\& <\epsilon \|R\|.
    \end{align*}
    Thus, for the matrix represents $\mathbf{A}_n$ of $\mathcal{G}_n$, it holds that $\lim_{n\ra\infty}\mathbf{A}_n=0$, and we complete the proof of part 1.

    2. We actually proved that for each scheduler $\s$ and its corresponding sequence $\mathbf{A}_1,\mathbf{A}_2,\dots$,
    $$\forall \rho, \Pr\nolimits^{\s}(\rho\vDash\Diamond B)=1\Leftrightarrow \lim_{n\ra\infty}\mathbf{A}_n=0$$
    in the proof of part 1. Hence, the conclusion of part 2 follows immediately.
\end{proof}

With the help of the above lemma, we are now able to prove Theorem \ref{lem:toRadius}.

{\renewcommand{\proofname}{Proof of Theorem \ref{lem:toRadius}}
\begin{proof}
    1. If $\underline{\varrho}(\Sigma_\M)<1$, then by definition, there exists a sequence $\mathbf{A}_1,\mathbf{A}_2,\cdots$ such that $$\lim_{n\ra\infty} \|\mathbf{A}_n\|^{1/n}\leq \underline{\varrho}(\Sigma_\M)+\epsilon<1.$$ This implies $\lim_{n\ra\infty} \mathbf{A}_n=0$.
    Conversely, if there exists $\mathbf{A}_1,\mathbf{A}_2,\cdots$ such that $\lim_{n\ra\infty}$ $\mathbf{A}_n=0$, then we can find $A\in\Sigma^m$ with $\|A\|^{\frac{1}{m}}<1$ for some $m$. Thus, $\underline{\varrho}(\Sigma_\M)\leq \lim_{n\ra\infty} \|A^n\|^{\frac{1}{nm}}<1$.

    2. By Theorem 3.10 in \cite{CM05}, we know that $DLI(\Sigma_\M)$ is AAS if and only if $\bar{\varrho}(\Sigma_\M)<1$. Together with Lemma \ref{lem:toDLI}, it completes the proof.
\end{proof}}

\section{Conclusions}\label{Concl}
In this paper, we introduced the notion of quantum Markov decision process (qMDP). Several examples were presented to illustrate how can qMPD serve as a formal model in the analysis of  nondeterministic and concurrent quantum programs. The (un)decidability and complexity of a series of reachability problems for qMDPs were settled, but several others left unsolved (the exact complexity of Problem \ref{prob:reachPrSup}.1 and the general case of Problem \ref{prob:reachPrAnyRhoAnySch}).

Developing automatic tools for reachability analysis of qMDPs is a research line certainly worth to pursue because these tools can be used in verification and analysis of programs for future quantm computers. Another interesting topic for further studies is applications of qMDPs in developing machine learning techniques for quantum physics and control theory of quantum systems.

\bibliographystyle{abbrvnat}

\end{document}